\newcommand*{\rom}[1]{\expandafter\@slowromancap\romannumeral #1@}
\definecolor{r}{rgb}{0.0, 0.28, 0.67}
\newcommand{\mathds}[1]{\text{\usefont{U}{dsrom}{m}{n}#1}}
\newtheorem{theorem}{Theorem}[section]
\newtheorem{proposition}[theorem]{Proposition}
\newtheorem{example}[theorem]{Example}
\newtheorem{definition}[theorem]{Definition}
\newtheorem{corollary}[theorem]{Corollary}
\newtheorem{lemma}[theorem]{Lemma}
\numberwithin{equation}{section}
\newtheorem{Condition}{Condition}{\bf}{\rm}
\def\namedlabel#1#2{\begingroup
	\def\@currentlabel{#2}%
	\label{#1}\endgroup
}
\newenvironment{condition}[3]{\begin{Condition}[#3] \namedlabel{#2}{#1} }{\end{Condition}}
\newtheorem{assump}{Assumption}{\bf}{\rm}
\newtheorem{prop}{Proposition}
\newtheorem{lem}{Lemma}
\newtheorem{thm}{Theorem}
\newtheorem{defn}{Definition}{\bf}{\rm}
\begin{document}
	\title{Duality in optimal consumption--investment problems with alternative data}
	\author{Kexin Chen}
	\address{Department of Applied Mathematics, The Hong Kong Polytechnic University, Hung Hom, Hong Kong.}
	\email{kexinchen@polyu.edu.hk}
	\author{Hoi Ying Wong}
	\address{Department of Statistics, The Chinese University of Hong Kong, Shatin, N.T., Hong Kong.}
	\email{hywong@cuhk.edu.hk}

\begin{abstract}
		This study investigates an optimal consumption--investment problem in which the unobserved stock trend is modulated by a hidden Markov chain that represents different economic regimes. In the classical approach, the hidden state is estimated from historical asset prices, but recent advancements in technology enable investors to consider alternative data in their decision-making. These include social media commentary, expert opinions, COVID-19 pandemic data, and GPS data, which originate outside of the standard sources of market data but are considered useful for predicting stock trends.  We develop a novel duality theory for this problem and consider a jump-diffusion process for the alternative data series. This theory helps investors in identifying ``useful'' alternative data for dynamic decision-making by offering conditions to the filter equation that permit the use of a control approach based on the dynamic programming principle. We demonstrate an application for proving a unique smooth solution for a constant relative risk-averse agent once the distributions of the signals generated from alternative data satisfy a bounded likelihood ratio condition. In doing so, we obtain an explicit consumption--investment strategy that takes advantage of different types of alternative data that have not been addressed in the literature. 
		\medskip
	
	\noindent \textit{Key words.} Partial observation, Duality approach, Consumption--investment problem,  Expert opinions, Filtering,  Jump-diffusion processes
	\medskip
	
	\noindent \textit{AMS subject classifications.} 93E20, 93E11, 60G35, 90C46

\end{abstract}
	\maketitle 
\section{Introduction}
\indent The optimal consumption--investment problem is a classical problem in modern financial theory. The investor's objective is to maximize the expected utility of consumption and terminal wealth over a finite horizon. Merton's pioneering work \cite{merton1975optimum} on formulating the problem in a continuous-time framework has become the cornerstone of the development of a stochastic optimal control theory for solving this type of problem. Numerous generalizations of classical models have been studied in efforts to model the dynamics of asset prices more precisely. For example, \cite{elliott1997application, CJW2022, sotomayor2009explicit, yin2004markowitz, zhou2003markowitz} study the regime-switching model, in which the model coefficients are assumed to be modulated by a Markov chain. The different states of the chain are interpreted as different economic states or market modes. 
Moreover, \cite{bauerle2004portfolio, bauerle2007portfolio, honda2003optimal, sass2017expert, rieder2005portfolio} argue that the states of the Markov chain are not directly observable, so investors must learn and estimate them from observation, which results in partial information formulations. The literature refers to this model as a hidden Markov model (HMM).\\
\indent Traditionally, investors learn about the hidden economic state solely from easily accessible historical asset prices. However, investors are now actively acquiring alternative data through modern technology to supplement their decision-making. Social media commentary, internet search results, COVID-19 pandemic data, and GPS data are all examples of ``alternative data,'' that is, data that originate outside of the standard market data but are considered useful for predicting economic trends. Recent studies such as \cite{frey2012portfolio,callegaro2020optimal,fouque2015filtering, sass2017expert} support the use of aggregate consumption and macroeconomic indicators and expert opinions as additional sources of observation. The effective use of alternative data could improve estimation accuracy and risk-sensitive asset management performance \cite{davis2020debiased,davis2021risk}.\\
\indent There are novel technical difficulties associated with incorporating alternative data into dynamic decision-making because of the additional randomness that arises from alternative data. The aforementioned studies apply stochastic control techniques to an equivalent {\it primal} problem, the so-called separated problem, which is deduced from the original primal problem via filtering. Such a solution procedure is similar to that of the stochastic control problem with partial information, but the additional randomness complicates the mathematical analysis for the solvability of the problem and the eligibility of the solution procedure. Indeed, it is rarely discussed under what conditions alternative data and the corresponding filters permit a stochastic control framework, such as the dynamic programming principle (DPP), to become applicable to the underlying problem. One exception is the study of \cite{frey2012portfolio}, which requires that the density functions of signals generated from alternative data are continuously differentiable with common bounded support and uniformly bounded from below by a positive constant. Clearly, this excludes Gaussian signals and the most commonly used distributions. Under such a criterion, they prove the DPP and that there exists a unique value function for the case of power utility. In other words, the relevance of many types of alternative data to dynamic decisions remains unclear. The lack of rigorous results in a general setting limits our understanding of the optimal policies as well as the use of alternative data from various sources.\\
\indent To fill this theoretical gap, we propose a new methodology based on duality theory that is applicable for general types of alternative data in the context of consumption--investment problems with a more general class of utility functions, particularly the power utility function with a negative exponent. We offer concrete and novel results for specific problems that supplement those found in the literature. For instance, we provide a bounded likelihood ratio (BLR) condition for alternative data signals in a bull--bear regime economy for an agent with power utility. The BLR condition allows us to check for eligibility of signals from a wide range of distributions, such as Gaussian, exponential family, Gaussian mixture, and others. We describe three examples in Section \ref{sec:2_BLR}.\\
\indent Following the literature, we postulate the risky asset price as a geometric Brownian motion in which the drift is modulated by the hidden economic state, which also affects alternative data.  Inspired by \cite{davis2021jump}, the alternative data are sampled from a regime-switching jump-diffusion process with parameters depending on the hidden state. This consideration aims to capture the realistic nature of alternative data sources, such as ecosystem, electricity prices, manufacturing, and production prediction (see \cite{sethi2012hierarchical, xi2009asymptotic, xi2018martingale, yin2009hybrid, zhu2015feynman,weron2004modeling} and reference therein). It also covers the examples studied in recent literature \cite{callegaro2020optimal,frey2012portfolio}.  When the alternative date are incorporated to stock prediction for dynamic decision-making, our problem formulation involves a filtering scheme on the market and alternative data jointly to predict the stock trend so that suitable regularity is required for the alternative data generating process to ensure the use of DPP based on the adopted filter. Such use of alternative data makes it a clear difference from problem formulation with conventional jump-diffusion factor processes. \\
\indent \color{black}  Subject to the above general setup, our main theorem (Theorem \ref{thm:duality_main}) establishes an equivalence between the primal partial information problem and the dual problem, wherein the latter simply involves a minimization over a set of equivalent local martingale measures. To the best of our knowledge, this study is the first to extend the use of the duality approach from a partial information framework using a single observation process \cite{karatzas2001bayesian,Lakner1995,Lakner1998,pham2001optimal,Putschgl2008,Sass2004} to mixed-type observations using alternative data. The aforementioned studies characterize the dual formulation based on a single equivalent martingale measure, whereas we use non-unique equivalent martingale measures because of the additional randomness from alternative data. Once the dual problem is solved, the solution of the primal problem is obtained using convex duality. We discover that the dual problem, which is a stochastic control problem in itself but is very different from the primal problem, is more tractable. This enables us to use the DPP for the dual stochastic control problem under a general abstract condition on the filter equation. With regard to the application, such a condition outlines what type of alternative data can be considered ``useful'' for dynamic decision-making with DPP. More precisely, the dual problem can be read at the analytical level of the Hamilton--Jacobi--Bellman (HJB) equation, thereby providing a dual equation and improving our understanding of the optimal strategy. To demonstrate the whole solution procedure, we apply this general methodology to a concrete case study  and explicitly derive a novel feedback optimal consumption--investment strategy by analyzing the dual equation in Sect.~\ref{sec:1}. We prove a verification theorem (Theorem~\ref{thm:1}) that shows the dual value function is the unique smooth solution to the dual equation. These results are obtained under the mild condition (\ref{cond:BLR}) on alternative data signals which covers frequently seen examples that have not been addressed previously.\\ 
\indent  This study makes some technical contributions to overcome the mathematical challenges to achieve these novel results. In the framework of the aforementioned case study, the filter process is a jump-diffusion with L\'{e}vy type jumps, that is, the intensity of the jump measure depends on the filter process itself. This subtle feature creates analytical challenges in establishing the verification theorem.  One may expect to derive the dual equation via the DPP first heuristically and then, given the regularity of the dual equation solution (i.e., existence, uniqueness, and smoothness), verify the desired dual value function by formally applying the It\^{o}'s formula and a martingale argument. However, rigorously proving the regularity is surprisingly difficult because the dual equation is a degenerate partial integro-differential equation (PIDE) with an embedded optimization. To overcome these difficulties, we first show that the dual value function is a bounded Lipschitz continuous and, hence, a $C^1$ function of its arguments in Theorem~\ref{thm:lipctsx}. The result is technically innovative, as we introduce an auxiliary process and use the Radon--Nikodym derivatives to address the L\'{e}vy-type jumps of the filter process. As an immediate consequence, the filter process is shown to be {\it Feller} (Proposition~\ref{prop:filter_feller}), indicating that the DPP is valid and the solution procedure is eligible. We then show that the dual equation has a unique smooth ($C^{1,2}$) solution. The method is based on the link between viscosity solutions and classical solutions for PIDEs, which originate from \cite{pham1998optimal}, \cite{davis2010impulse} and \cite{davis2013jump}, but our context is different from theirs in that ours contains an optimization embedded in the nonlocal integro-differential operator in the PIDE. This distinctive nature leads to both nonlinearity and degeneracy on the state space boundaries, so we have to address both difficulties simultaneously. Finally, we obtain explicit formulas for the optimal strategies and wealth process in terms of functions of the dual equation solution in Proposition~\ref{prop:back_primal}.\\
\indent We believe that an extensive analysis of such a well-received case study is a valuable contribution to the literature in itself. Although there are studies in the stochastic control literature that deals with the controlled jump-diffusion model (see, e.g., \cite{barles2008second,davis2010impulse,davis2013jump,pham1998optimal,seydel2010general}), most jump mechanisms are exogenous and not dependent on the state process itself. To the best of our knowledge, the only related result presented in \cite{frey2012portfolio} proposes a distributional transformation and reconstructing the filter process as an exogenous jump type, so that techniques in the above-mentioned literature can be applied.   Their approach imposes restrictive
conditions on alternative data and a predominant constraint on trading strategies in order
to obtain the necessary technical estimates. However, this study derives technical estimates to develop empirically testable conditions that are consistent with the abstract general condition for the duality approach and then solves them for the dual problem under a more general setting.\\ 
\indent  The remainder of this paper is organized as follows. To simplify matters and allow for better illustration, Sect.~\ref{sec:1} begins with a concrete optimal consumption-investment problem in a bull-bear stock market, where expert opinions are regarded as alternative data. We detail the solution procedure for solving such a stochastic optimization problem and offer an explicit solution to the case of CRRA utility. This enables us to articulate the key mathematical challenges in the solution procedure and the advantage of the dual formulation. By considering a general regime-switching jump-diffusion model for the alternative data series and a general set of utility functions, Sect.~\ref{sec:3} develops the duality approach under partial information using alternative data. Specifically, we prove an equivalence between the primal and dual problems and present a condition on the filter equation that ensures DPP's validity in the dual. Sect.~\ref{sec:4} is devoted to the proof of verification theorem (Theorem~\ref{thm:1}) in Sect.~\ref{sec:1}, that is, to show that the dual value function is the unique classical solution to an HJB equation.
Sect.~\ref{sec:conclu} concludes the paper. 

\section{Expert opinions as alternative data}\label{sec:1}
Before developing a duality theory with alternative data in general setups, we specifically consider expert opinions and power utilities to exhibit the solution procedure of our duality approach in this section. This specification allows us to transparent the dual formulation and the regularity of the approach without an overwhelming burden in notation. We will see shortly that the solution procedure involves a stochastic optimal control problem in the dual and produces optimal solutions at the analytical level of the Hamilton-Jacobi-Bellman (HJB) equation in the dual. The investigation of general setups follows in Sect.~\ref{sec:3}. 
\subsection{A hidden Markov bull-bear financial market}
Within a fixed date $T>0$, which represents the fixed terminal time or investment horizon, we consider a filtered probability space $(\Omega,\mathcal{F},\mathbb{F},\mathbb{P})$, where $\mathbb{P}$ denotes the physical measure and $\mathbb{F}: = (\mathcal{F}_t)_{t\in[0,T]}$ denotes the full information filtration, satisfying the usual conditions: $\mathbb{F}$ is right-continuous and completed with $\mathbb{P}$-null sets. For a generic $\mathbb{F}$-adapted process $G$, we denote the filtration generated by $G$ as $\mathbb{F}^G$.\\
\indent We consider a two-regime hidden Markov financial market model, in which the transitions of the ``true'' regime are described by a two-state continuous-time hidden Markov chain $\alpha_t\in\mathcal{S}:=\{1,2\}$ on $(\Omega,\mathcal{F},\mathbb{F},\mathbb{P})$. This model provides a natural consideration of the bull and bear market, with $\alpha_t=1$ indicating the ``bull market'' state and $\alpha_t=2$ the ``bear market'' state at time $t$. The Markov chain $\alpha=(\alpha_t)_{t\in[0,T]}$ is characterized by the generator $\mathbf{A}$ of the form:
\begin{align}
	\label{eq:alpha_gen}
	\mathbf{A}=
	\begin{pmatrix} -a_1 & a_1 \\ a_2 & -a_2 \end{pmatrix},\;\;\; a_1, a_2 >0.
\end{align}
For time $t\in[0,T]$, we describe the financial market model as follows.\\
\noindent (i) The risk-free asset is given by $S^0_t = \displaystyle e^{rt}$, with risk-free interest rate $r > 0$.\\
\noindent (ii) The risky asset $S=(S_t)_{t\in[0,T]}$ satisfies the following stochastic differential equation (SDE):
\begin{align}\label{eq:st_1}
	dS_t = \mu(\alpha_t) S_t dt +\sigma S_t dW_t,
\end{align}	
where $(W_t)_{t\in[0,T]}$ is a standard $\mathbb{F}$-Brownian motion on $(\Omega,\mathcal{F},\mathbb{P})$ independent of $\alpha$. The risky asset's volatility $\sigma$ is a positive constant, and its drift process $\mu$ satisfies $ \mu(\alpha_t) \in \{\mu_1,\mu_2\}$, 
where $\mu_1>\mu_2$ are constant drifts under bull and bear markets, respectively.\\
\indent Unlike the Markov-modulated regime-switching model, which treats $\alpha$ as observable, we assume that the representative agent does not observe $\alpha$ directly. The {\it observation process} of the agent has two components: the asset price process $S$ and an alternative data process in the form of expert options. More precisely, the agent receives noisy signals about the current state of $\alpha$ at discrete time points $T_k$. The aggregated alternative data process $\eta$ is a standard marked point process that depends on the Markov chain $\alpha$, as described by the double sequence $\{T_k, Z_k \}_{k\ge 0}$ representing the time instants at which the signal arrives and complemented by a sequence of random variables, one for each time, which denote the signal size:
\begin{align}\label{eq:eta_1}
	\eta_t: = \sum_{T_k\le t} Z_{k}.
\end{align}
We assume that the intensity of signal arrivals is given by a constant $\lambda$. In other words, the signal arrival time is independent of the hidden state. The signal $Z_k$ takes values in some set $\mathcal{Z} \subseteq \mathbb{R}$, and given $\alpha_{T_k} = i\in\{1,2\}$, the distribution of $Z_k$ is absolutely continuous with Lebesgue-density $f_i(z)$. Equivalent to \eqref{eq:eta_1}, we have
\begin{align}\label{eq:eta_2}
	d\eta_t = \int_{\mathcal{Z}}z N(dt,dz), \quad N(dt, d z):=\sum_{k \geqslant 1} \delta_{\left(T_{k},  \Delta\eta _{T_{k}}\right)}(d t, d z) \mathds{1}_{\left\{T_{k}<\infty, \Delta\eta _{T_{k}} \neq 0\right\}},
\end{align}
where $\delta_{\left(T_{k},  \Delta\eta _{T_{k}}\right)}(\cdot, \cdot)$ denotes the Dirac measure at point $(T_k, \Delta\eta _{T_{k}}) \in [0,T]\times\mathcal{Z}$, so $N$ is an integer valued random measure on $[0,T]\times\mathcal{Z}$, where $(\mathcal{Z}, \mathcal{B}(\mathcal{Z}))$ is a given Borel space.
In particular, the $\mathbb{F}$-dual predictable projection (see Definition~\ref{A.def:dual_projec} in Appendix~\ref{App:E}) of random measure $N$ is given by $\lambda \sum_{i=1}^2 \mathds{1}_{\alpha_t = i} f_i(z)dz dt$.\\
\indent In other words, the information available to the agent is given by the {\it observation filtration} $\mathbb{H} = (\mathcal{H}_t)_{t\in[0,T]}$ with $\mathbb{H}: = \mathbb{F}^{S} \vee \mathbb{F}^{\eta} \subseteq \mathbb{F}$. This is a partial information setting because \eqref{eq:st_1} - \eqref{eq:eta_2} constitute a filtering system in which $\alpha$ and $(S, \eta)$ play the roles of state and observation, respectively.
\subsection{The optimal consumption-investment problem}
\indent Let $\varpi_t$ be the net amount of capital allocated in the risky asset, and $c_t$ the rate at which capital is consumed at time $t$. The agent's wealth process $V^{v,\varpi,c}$ corresponding to the choice $(\varpi,c)$ and initial wealth $v \in \mathbb{R}_+:=(0,\infty)$ evolves as
\begin{align} \label{eq:wealth_1}
	d V^{v,\varpi,c}_t = (\mu(\alpha_t) - r)\varpi_t dt+ (r V^{v,\varpi,c}_t -c_t ) dt + \varpi_t \sigma dW_t.
\end{align}
Formally, we define the choices of agents in the following way.\\
\noindent (h1) $\varpi=\left(\varpi_{t}\right)_{t \in[0, T]}$ is an investment process if it is a real $\left(\mathcal{H}_{t}\right)_{t \in[0, T]}$-measurable process with trajectories square-integrable in $[0, T]$.\\
\noindent (h2) $c=\left(c_{t}\right)_{t \in[0, T]}$ is a consumption process if it is a real nonnegative $\left(\mathcal{H}_{t}\right)_{t \in[0, T]}$-measurable process with trajectories integrable in $[0, T]$.\\
\indent As a class of admissible controls, we consider the pairs of processes $(\varpi,c)$ satisfying (h1), (h2) and such that the corresponding wealth process $V$ is nonnegative. The quantity to maximize in our optimization problem is 
\begin{align}\label{def:controlprob}
	\mathbb{E}\Big[U_1(V_T)+\int_{0}^{T} U_2(c_t) dt\Big],
\end{align}
where $U_i:(0,\infty)\rightarrow\mathbb{R}$, $i=1,2$ are utility functions of the form 
\begin{align}\label{eq:utilit_form}
	U_1(c) = U_2(c) = \frac{c^{\kappa}}{\kappa}, \quad \kappa \neq 0 \text{ and } \kappa <1. 
\end{align}
We highlight that the choice pair $(\varpi,c)$ is required to be adapted to the available information flow $\mathbb{H}$. Therefore, 
the stochastic control problem is under a partial information framework that has a larger available information, due to alternative data, than that of classical partial information problems. The alternative data improve  estimation of the state ($\alpha$) of the economy if they contain useful information. The estimation procedure is known as filtering and we have to study conditions for useful expert opinions under a filtering scheme. 
\subsection{Filtering}\label{sec:2.3_filter}
Following standard notations in the filtering literature, denote by $\hat{g_t} = \mathbb{E}[g_t| \mathcal{H}_t]$ the optional projection of a generic process $g=(g_t)_{t\in[0,T]}$ on the filtration $\mathbb{H}$. Let $\pi=(\pi_t)_{t\in[0,T]}$ be the filter of the hidden Markov chain $\alpha$ defined as $\pi_t = \mathbb{P}[\alpha_t=1|\mathcal{H}_t]$. For a process of the form $g_t = G(\alpha_t)$, its optional projection is given by $$\hat{G}(\pi_t):=\hat{g}_t = \pi_t G(1) + (1-\pi_t)G(2).$$\color{black}
We define the process $\widetilde{W}=(\widetilde{W}_t)_{t\in[0,T]}$ such that for any $t\in[0, T]$, 
\begin{align}\label{eq:W}
	\widetilde{W}_t: = \frac{1}{\sigma} \Big(\int_0^t\frac{dS_u}{S_u} - rdu\Big)-\int_0^t\hat{\theta}(\pi_u)du = W_t - \int_0^t (\hat{\theta}(\pi_u) - \theta(\alpha_u)) du.
\end{align}
where $\theta$ is the bounded function defined as:
\begin{align}\label{eq:theta}
	\theta(\alpha_t) : = (\mu(\alpha_t) - r)/\sigma \in\{ (\mu_1-r)/\sigma, (\mu_2-r)/\sigma\}.
\end{align}
By classical results from filtering theory (see e.g. \cite{kallianpur2013stochastic,liptser1977statistics}), $\widetilde{W}$ is  a $(\mathbb{P},\mathbb{H})$-Brownian motion (the so-called innovations process). We define the predictable random measure $\gamma^{\mathbb{H}}$ and function $\hat{f}:[0,1]\times\mathcal{Z}\rightarrow\mathbb{R}$ as follows
\begin{align}\label{eq:hatf}
	\gamma^{\mathbb{H}}(dt,dz):=\lambda \hat{f}(\pi_{t-},z) dz dt,\quad \hat{f}(x,z): = f_1(z)x+f_2(z)(1-x),
\end{align}
and $\gamma^{\mathbb{H}}$ is known as the $\mathbb{H}$-dual predictable projection of $N$ by standard results in filtering theory (see e.g. \cite{ceci2006risk,ceci2012nonlinear}). We thus introduce the $\mathbb{H}$-compensated jump measure of $N$ given by 
$$
\overline{N}^{\pi}(dt,dz): = N(dt,dz) - \gamma^{\mathbb{H}}(dt,dz)= N(dt,dz)  -\lambda\hat{f}(\pi_{t-},z)dtdz.
$$
By standard arguments in filtering theory (see e.g. \cite{kurtz1988unique,ceci2006risk,ceci2012nonlinear,callegaro2020optimal}), the filter $\pi$ is the unique strong solution of the following Kushner–Stratonovich equation
\begin{align}  \label{eq:filter1.5}
	d\pi_t =& (a_2-(a_1+a_2)\pi_t) dt + \pi_t(1-\pi_t)(\theta_1 - \theta_2) d\widetilde{W}_t \cr
	&+ \int_{\mathcal{Z}} (\xi(\pi_{t-},z)-\pi_{t-}) {\overline{N}^{\pi}(dt,dz)}, 
\end{align}
with initial value $\pi_0= x \in [0,1]$  and function $\xi:[0,1]\times\mathcal{Z} \rightarrow \mathbb{R} $ defined as:
\begin{align}\label{def:xi}
	\xi(x,z): = \frac{f_1(z)x}{f_1(z)x+f_2(z)(1-x)}.
\end{align}
We remark that the last term in \eqref{eq:filter1.5} can be  expressed as
\begin{align*}	\int_{\mathcal{Z}}& (\xi(\pi_{t-},z)-\pi_{t-}) \{ N(dt,dz) -\lambda( f_1(z)\pi_{t-}+f_2(z)(1-\pi_{t-}))dtdz \} \\
	=&	\int_{\mathcal{Z}} (\xi(\pi_{t-},z)-\pi_{t-}) N(dt,dz)- \lambda \left\{  \int_{\mathcal{Z}} f_1(z)\pi_{t-}-\pi_{t-}\hat{f}(\pi_{t-},z)dz \right\}dt,
\end{align*}
where the last term in the above equation equals to $0$ because both $f_1$ and $f_2$ are density functions defined on $z\in \mathcal{Z}$. Therefore, it is equivalent to write \eqref{eq:filter1.5} as
\begin{align}  \label{eq:filter2} 
	d\pi_t =& (a_2-(a_1+a_2)\pi_t) dt + \pi_t(1-\pi_t)(\theta_1 - \theta_2) d\widetilde{W}_t \cr
	&+ \int_{\mathcal{Z}} (\xi(\pi_{t-},z)-\pi_{t-}) {N(dt,dz)}.
\end{align}

\subsection{Primal and dual control problems}
Since we are going to apply dynamic programming techniques, we start by embedding the optimization problem in a family of problems indexed by generic time-space points $(t,x,v)\in [0, T]\times[0,1]\times\mathbb{R}_+$, the starting time, and the initial guess of the filter process, and initial wealth level. We denote the domain of $(t,x)$ by $\mathcal{U}_T:=[0,T)\times(0,1)$ and $\overline{\mathcal{U}}_{T}:=[0,T]\times[0,1]$.\\
\indent For given and fixed $(t,x)\in\overline{\mathcal{U}}_{T}$, we introduce the filtration $\mathbb{H}^{t}:=\left(\mathcal{H}_{s}^{t}\right)_{ s \in[t, T]}$:
$$
\mathcal{H}_{s}^{t}=\sigma\Big\{\widetilde{W}(r)-\widetilde{W}(t), N(r,A)-N(t,A), A \in \mathcal{B}({\mathcal{Z}}), t \leq r \leq s\Big\},
$$
where $\widetilde{W},N$ are defined in \eqref{eq:W} and \eqref{eq:eta_2}. Denote by $({\pi}_s)_{s\in[t,T]}$ the solution of \eqref{eq:filter2} on $[t,T]$ with initial guess $\pi_t=x$.  We introduce the measure $\mathbb{P}^{t,{x}}$ on $\mathcal{H}_T^t$ such that $\mathbb{P}^{t,{x}}({\pi}_t = {x})=1$, and denote by $\mathbb{E}^{t,{x}}$ the expectation operator under $\mathbb{P}^{t,{x}}$.\\
\indent For $v\in\mathbb{R}_+$, consider all the pairs of $\mathbb{H}^t$-measurable processes $(\varpi,c)$ that are defined analogous to (h1) and (h2), we denote by $V^{t,x,v,\varpi,c}$ the solution to \eqref{eq:wealth_1} starting at time $t$ from $v$ under the control $(\varpi,c)$. The class of admissible controls $\mathcal{A}(t,x,v)$ depending on the initial value $(t,x,v)\in\overline{\mathcal{U}}_{T}\times\mathbb{R}_+$ is defined as the set of pairs $(\varpi,c)$ satisfying the requirements above and such that 
\begin{align}\label{eq:budget_cons}
	V^{t,x,v,\varpi,c}_s \ge 0 \text{ a.s., }t < s\le T. \text{ (no-bankruptcy constraint)}
\end{align}
Clearly, the admissible set is nonempty for all $v\in \mathbb{R}_+$ because, for each initial value, the null strategy $(\varpi,c)\equiv (0,0)$ is always admissible. The agent's objective function is postulated to be
\begin{align}\label{def:J_object_func}
	\widetilde{J}(t, v,x;\varpi,c): = \mathbb{E}^{t,x}\Big[  U_1(V^{t,x,v,\varpi,c}_T) + \int_{t}^T U_2(c_t) dt \Big].
\end{align}
We define the {\it primal problem} as follows and denote by $J$ the value function associated with it, which we call the primal value function:
\begin{align}\label{def:primal_con_pro} 
	J(t,x,v) : = \sup_{(\varpi,c) \in \mathcal{A}(t,x,v)} \widetilde{J}(t,x,v;\varpi,c), ~{(t,x,v)\in \overline{\mathcal{U}}_{T}\times\mathbb{R}_+}. \tag{\bf P}
\end{align}
To apply the duality approach, we introduce the convex dual functions $\widetilde{U}_i$ of the concave utility functions $U_i$:
\begin{align}\label{eq:utilit_conv}
	\widetilde{U}_i (y) := \sup_{c>0} [U_i(c) -yc] = U_i(I_i(y)) - y I_i(y), ~ y\in\mathbb{R}_+,
\end{align}
where $ I_i(\cdot)$ is the inverse 
function of $\partial_cU_i(\cdot)$, $i=1,2$. For $U_i$ defined in \eqref{eq:utilit_form}, $\widetilde{U}_i(y)=-y^{\beta}/\beta$ with $I_i(y) = y^{\beta-1}$ and $\beta:=-\kappa/(1-\kappa)$, for $i=1,2$.
We also introduce the process $(Z_s^{\nu})_{s\in[t,T]}$ with initial value $Z_t^{\nu}=1$ defined for some $\mathbb{H}^t$-predictable process $(\nu(s,z))_{s\in[t,T]}$ indexed by $\mathcal{Z}$ (see Definition~\ref{A.def:func_index} in Appendix~\ref{App:E}):
\begin{align}\label{eq:z_dynamic}
	Z_s^{\nu} := & \exp\left( -\frac{1}{2}\int_{t}^{s} \hat{\theta}(\pi_u)^2 du -\int_{t}^{s} \hat{\theta}(\pi_u)d\widetilde{W}_u   \right)\cr
	&\exp\left( \int_{t}^{s}\int_{\mathcal{Z}} (1-e^{\nu(u,z)}) \hat{f}(\pi_{u-},z) dz du  +  \int_{t}^{s}\int_{\mathcal{Z}} \nu(u,z) N(du,dz) \right),
\end{align}
where $\hat{\theta}(\pi_t)$ is the optional projection of $\theta(\alpha_t)$ defined in \eqref{eq:theta} and $\hat{f}$ is defined in \eqref{eq:hatf}. We consider the admissible set of all $(\nu(s,z))_{s\in[t,T]}$ that satisfies the L\'{e}pingle-M\'{e}min condition (see e.g., Theorem 1.4 in \cite{ishikawa2016stochastic}):
\begin{align}
	& \int_{t}^{T}\int_{\mathcal{Z}} \left\{e^{2\nu(u,z)} + |\nu(u,z)|^2\right\} \lambda f_i(z)dz du < +\infty,~i=1,2;\label{cond:5}\\
	&\mathbb{E}^{t,x}\left[\exp\left(\int_{t}^{T}\int_{\mathbb{R}}\left\{e^{\nu(u,z)}\nu(u,z) + 1-e^{\nu(u,z)}\right\} \lambda\hat{f}(\pi_{u-},z) dz du \right)\right]<+\infty.\label{cond:6}
\end{align}
Let $\Theta^t$ be the admissible set of $(\nu(s,z))_{s\in[t,T]}$. Specifically,
\begin{align*}
	\Theta^t: = \{ \nu=(\nu(s,z))_{s\in[t,T]}| \nu \text{ is }\mathbb{H}^t\text{-predictable and  such that \eqref{cond:5}-\eqref{cond:6}  hold} \},
\end{align*}
which is not empty as $\nu\equiv 0 $ is admissible. Since $\hat{\theta}$ is bounded, the local martingale $Z^{\nu}$ is a martingale for every $\nu\in \Theta^t$. We thus define a $\mathbb{P}^{t,x}$-equivalent probability measure $\mathbb{Q}^{\nu}$ on $(\Omega,\mathcal{H}^t_T)$:
$\left.{d\mathbb{Q}^{\nu}}/{d\mathbb{P}^{t,x}}\right\vert_{\mathcal{H}^t_T} = Z^{\nu}_T$.
We observe that
\begin{align}\label{eq:changemeasure}
	Z_s^{\nu} = \mathbb{E}^{t,{x}}\Big[\frac{d\mathbb{Q}^{\nu}}{d\mathbb{P}^{t,{x}}} \Big\vert\mathcal{H}^t_s \Big],~ s\in[t,T],
\end{align}
and that $Z^{\nu}$ satisfies the SDE:
\begin{align*}
	d Z^{\nu}_s= -Z^{\nu}_s\Big(\hat{\theta}(\pi_s)d\widetilde{W}_s + \int_{\mathcal{Z}}(1-e^{\nu(s,z)} )\overline{N}^{\pi}(ds,dz)\Big),
\end{align*}
and $\pi_s$ is the solution of \eqref{eq:filter2} with $\pi_t= x$. In addition, for each $\nu\in\Theta^t$, 
\begin{align}\label{def:Q_set}
	\mathbb{Q}^{\nu} \in \mathcal{Q}: = \{\mathbb{Q}: \mathbb{P}^{t,x} \ll\mathbb{Q} \ll \mathbb{P}^{t,x}~ (e^{-r(s-t)}S_s)_{s\in[t,T]} \text{ is a } \mathbb{Q}\text{ martingale}.   \}
\end{align}
Let $(t,x,v)\in \overline{U}_T\times \mathbb{R}_+$ , $(\varpi,c) \in \mathcal{A}(t,x,v)$, $\nu\in\Theta^t$, and set $V=V^{t,x,v,\varpi,c}$. It\^{o}'s lemma yields
that $
e^{-rT}Z^{\nu}_TV_T + \int_{t}^{T} e^{-rs}Z^{\nu}_sc_s dt
$ is a $(\mathbb{P},\mathbb{H})$-supermartingale (as a positive local martingale), which implies that (due to arbitrariness of $\nu\in\Theta^t$)
\begin{align*}
	\sup_{\nu\in\Theta^t}  \mathbb{E}^{t,x}\Big[ e^{-rT}Z^{\nu}_TV_T + \int_{t}^{T} e^{-rs}Z^{\nu}_sc_s ds \Big] \le e^{-rt} v.
\end{align*}
Together with the definition of $\widetilde{U}_i$ in \eqref{eq:utilit_conv}, we derive that for all $y\in\mathbb{R}_+$, $(\varpi,c) \in \mathcal{A}(t,x,v)$, and $\nu\in\Theta^t$, the agent's objective function $\widetilde{J}$ defined in \eqref{def:J_object_func} satisfies
\begin{align*}	
	\widetilde{J}(t, v,x; \varpi,c) \le& \mathbb{E}^{t,x}\Big[ U_1(V_T) +  \int_{t}^{T} U_2(c_s) ds\Big]\\
	&- y\mathbb{E}^{t,x}\Big[e^{-r(T-t)}Z_T^{\nu}V_T + \int_{t}^{T} e^{-r(s-t)}Z_s^{\nu}c_s ds\Big]+ vy \nonumber\\
	\le &\, \mathbb{E}^{t,x}\Big[\widetilde{U}_1(y e^{-r(T-t)}Z_T^{\nu}) + \int_{t}^{T} \widetilde{U}_2(  y e^{-r(s-t)}Z_s^{\nu}) ds\Big] + vy.
\end{align*}
Further taking supremum of $\widetilde{J}$ over $(\varpi,c)\in \mathcal{A}(t, v,x)$, we have that the primal value function $J$ defined in \eqref{def:primal_con_pro} satisfies:
\begin{align}\label{eq:dualderive}
	J(t,x,v) \le \mathbb{E}^{t,x}\Big[\widetilde{U}_1( y e^{-r(T-t)}Z_T^{\nu}) + \int_{t}^{T} \widetilde{U}_2(y e^{-r(s-t)}Z_s^{\nu}) ds\Big] + vy, 
\end{align}
for any $\nu \in \Theta^t$ and $y\in\mathbb{R}_+$. This calculation shows that the right-hand side (r.h.s) of \eqref{eq:dualderive} is an upper bound for $J$. Taking the infimum over $\nu \in \Theta^t$ on the r.h.s of \eqref{eq:dualderive} inspires us to consider the following {\it dual optimization problem} defined for $(t,x,y)\in \overline{\mathcal{U}}_T\times\mathbb{R}_+$, 
\begin{align}\label{def:dualprob} 
	\inf_{\nu \in \Theta^t} \widetilde{L}(t,x, y;\nu),\tag{\bf D}
\end{align}
where 
$
\widetilde{L}(t,x,y;\nu): = \mathbb{E}^{t,x}[\widetilde{U}_1(y e^{-r(T-t)}Z_T^{\nu}) + \int_{0}^{T} \widetilde{U}_2( y e^{-r(s-t)}Z_s^{\nu}) dt].
$
We denote by $\hat{L}$ the value function associated with this problem and call it the {\it dual value function}. Specifically,
\begin{align}\label{def:dual_value_func}
	\hat{L}(t,x, y):= \inf_{\nu \in \Theta^t} \widetilde{L}(t, x, y;\nu),~(t,x,y)\in \overline{\mathcal{U}}_T\times\mathbb{R}_+.
\end{align}
It then follows from \eqref{eq:dualderive} that
\begin{align}\label{eq:dualgap}
	J(t,x,v)  \le 	\inf_{y\in\mathbb{R}_+} \left\{ \hat{L}(t,x,y)+ vy\right\},~(t,x,v)\in \overline{\mathcal{U}}_{T}\times\mathbb{R}_+.
\end{align}
There is no {\it duality gap} between the primal problem \eqref{def:primal_con_pro} and the dual problem \eqref{def:dualprob} once the equality in \eqref{eq:dualgap} holds. The current formulation suggests that one can first work on the dual problem and then transform it back to the primal by closing the duality gap. Here, the dual problem \eqref{def:dualprob} is also a stochastic control problem.
\subsection{HJB in the dual}\label{sec:approach_HJB}
The dual problem reduces the original agent's problem with two control variables to only one control process $\nu\in\Theta^t$. The natural choice for solving it is a heuristic use of DPP: for $\mathbb{H}^t$-stopping time $\tau\in[t,T]$, the following holds,
\begin{align}\label{eq:DPP}
	\hat{L}(t,x,y) = \inf_{\nu \in \Theta^{t}} \mathbb{E}^{t,x}\Big[ \hat{L}(\tau, {\pi}_{\tau}, ye^{-r(\tau-t)}Z_{\tau}^{\nu} ) + \int_{t}^{\tau} \widetilde{U}_2(ye^{-r(s-t)}Z_{s}^{\nu}) ds\Big].
\end{align}
In this way, the HJB equation of the dual value function is derived as
\begin{align}\label{eq:hatL_HJB}
	\partial_t \hat{L}(t,x,y) + \inf_{\nu} \overline{\mathcal{L}}^{\nu} \hat{L} + \widetilde{U}_2(y) = 0,
\end{align}
where the dynamics \eqref{eq:z_dynamic} for $ye^{-r(s-t)}Z^{\nu}_s$ and \eqref{eq:filter1.5} for $\pi_s$ produces the generator 
\begin{align*}
	\overline{\mathcal{L}}^{\nu}\hat{L}: = & \lambda\int_{\mathcal{Z}} \left(\hat{L}(t,\xi(x,z),e^{\nu}y) - \hat{L}(t,x,y) +(1-e^{\nu})y\partial_y \hat{L}(t,x,y)\right)\hat{f}(x,z) dz  \\
	&+ \left( \frac{1}{2}x^2(1-x)^2(\theta_1 - \theta_2)\partial_{xx}  +  \hat{\theta}(x)\partial_x +\frac{1}{2}y^2\hat{\theta}(x)^2 \partial_{yy} - ry\partial_y\right)\hat{L}(t,x,y).
\end{align*}
Intuitively, the optimal dual optimizer $\nu^*$ could be constructed in a feedback form through the first-order conditions in the HJB equation \eqref{eq:hatL_HJB} if the candidate process is admissible, i.e., fulfilling conditions \eqref{cond:5}-\eqref{cond:6}.  The remaining task is to determine regularity conditions under which the alternative data and the corresponding filter allow for the above prescriptions.
\subsection{Regularity: Bounded likelihood ratio}\label{sec:2_BLR}
In Sect.~\ref{sec:3}, we study the regularity in much greater generality, in terms of the choice of utility functions and alternative data processes, that the above prescription is true. However, the regularity happens to be more abstract. 
Under the setting of expert opinions in this section, we offer concrete technical conditions on the probability density functions of alternative data signals that validate \eqref{eq:DPP} and the proposed solution procedure. 
\begin{condition}{Condition of BLR}{cond:BLR}{Bounded Likelihood Ratio (BLR)} The probability density functions $f_1$ and $f_2$ of signals in \eqref{eq:eta_1} have the same support $\mathcal{Z}$ and admit finite second moments such that the following bounded likelihood ratio condition holds,
	\begin{align*}
		b_{\min} < \frac{f_2(z)}{f_1(z)} < b_{\max}, \quad \forall z\in \mathcal{Z},
	\end{align*}
	for some $0\le b_{\min}<1<b_{\max}$. We also expect the dissimilarity between two distributions to be reasonably bounded.  Specifically, we use {\bf a-divergence measure}  $D_{a}(f_1 \| f_2)$ (see e.g., \cite{amari2012differential}) indexed by $a=3$ to characterize such dissimilarity and require that for some constant $L_F>0$, 
	\begin{align*}\displaystyle
		D_{3}(f_1 \| f_2): =\int_{\mathcal{Z}} \frac{1}{6}(\frac{f_1(z)^3}{f_2(z)^2}-1) dz< L_{F}.
	\end{align*}
\end{condition}
The interpretation of \ref{cond:BLR} is that we should not expect the arriving signals to be particularly powerful in terms of distinguishing between the two regimes. Otherwise, the situation becomes similar to directly observe the state ($\alpha$). We should emphasize that \ref{cond:BLR} based on the duality covers a wider range of signals than those based on the primal in the literature. Indeed, it clearly covers those examples in assumption 5.1 and remark 5.2 in \cite{frey2012portfolio}, i.e., densities that are continuously differentiable with common bounded support and uniformly bounded below by a strictly positive constant.   Besides, \ref{cond:BLR} covers more examples of discrete distributions and continuous distributions defined in unbounded domains. We list a few of them below.
\begin{example}[Exponential family] \label{ex:BLR1} Let
	\begin{align*}
		\begin{cases}
			&f_1(z)= \exp(\sum_j g_j(z)v_j^{(1)}),\\
			&f_2(z) =  \exp(\sum_j g_j(z)v_j^{(2)}),
		\end{cases}
	\end{align*}
	where $v_j$ are the parameters of the distribution and $g_j$ are fixed features of the family, such as $(1,x,x^2)$ in the Gaussian case. \ref{cond:BLR} holds if there exists some constant $C$ such that $\sum_j (v_j^{(2)}-v_j^{(1)})g_j(z) \le C$ and $\sum_j (3v_j^{(1)}-2v_j^{(2)})g_j(z) \le C$ for all $z$. The following Gaussian examples clearly satisfy the conditions:
	\begin{align*}
		\begin{cases}
			&f_1(z) = \frac{\sqrt{1.6}}{\sqrt{2\pi}}e^{-0.8(z+1)^2},\\
			&f_2(z) = \frac{\sqrt{2}}{\sqrt{2\pi}}e^{-(z-1)^2}. 
		\end{cases}
	\end{align*}
\end{example}
\begin{example}[Mixture of Gaussian] \label{ex:BLR2} As a direct extension of Example 1, \ref{cond:BLR} holds for Gaussian density $f_2(z)$ and $f_1(z):=\sum_{j=1}^n a_j f_1^{(j)}(z)$ with $\sum_j a_j = 1$, which is a mixture of Gaussian distribution density, when each pair of $f_1^{(j)}$, $f_2$ fulfills the conditions in Example~\ref{ex:BLR1}.
\end{example}
\begin{example}[A mixture distribution and Gamma distribution] \label{ex:BLR3} Consider a mixture distribution and a Gamma distribution defined on $ \mathbb{R}_+$:
	\begin{align*}
		\begin{cases}
			&f_1(z)=a_2a_1z^{a_1-1}\mathds{1}_{z\in(0,1)}+(1-a_2)e^{1-z}\mathds{1}_{z\in(1,\infty)},\\
			&f_2(z)=z^{a_1-1}e^{-z}\mathcal{G}(a_1)^{-1}, 
		\end{cases}
	\end{align*}
	where $a_1,a_2\in(0,1)$ and $\mathcal{G}(a):=\int_{\mathbb{R}_+} z^{a-1}e^{-z}dz$ is the Gamma function. \ref{cond:BLR} holds with $b_{\min}=0$, $b_{\max}=\max\{ 1/a_2a_1,1/(1-a_2)e\}\mathcal{G}(a_1)^{-1}$ and $L_F=e^3\mathcal{G}(a_1)^2\mathcal{G}(2-2a_1)+2e^2\mathcal{G}(a_1)^2a_1^2$. 
\end{example}
Under \ref{cond:BLR}, we derive the following two useful properties of the filter process $\pi$, whose proofs are placed in Appendix~\ref{A.proof_Prop1}.
\begin{proposition}\label{prop:filter_boundary}
	Both $0$ and $1$ are unattainable boundaries for the filter process $\pi$, the solution of \eqref{eq:filter2}. In other words, they cannot be achieved from the interior of the state space $(0,1)$.
\end{proposition}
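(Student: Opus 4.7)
The plan is a Lyapunov argument in the style of Khasminskii. Take the test function $h(x):=-\log x-\log(1-x)$ on $(0,1)$, which is smooth and satisfies $h(x)\to+\infty$ as $x\to 0$ or $x\to 1$, and introduce for each $\epsilon\in(0,1/2)$ the exit time $\tau_\epsilon:=\inf\{s\ge t:\pi_s\notin[\epsilon,1-\epsilon]\}$. The goal is to apply Dynkin's formula to the stopped process $h(\pi_{s\wedge\tau_\epsilon})$ and obtain a bound that forces $\mathbb{P}^{t,x}(\tau_\epsilon\le T)\to 0$ as $\epsilon\to 0$; since the event that $\pi$ reaches $\{0,1\}$ on $[t,T]$ is contained in $\{\tau_\epsilon\le T\}$ for every $\epsilon>0$, the boundaries will be unattainable.

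The first task is to compute the generator $\mathcal{L}h$ from \eqref{eq:filter2} and show $\mathcal{L}h\le K$ uniformly on $(0,1)$. The drift contribution reduces to $(2x-1)(a_2-(a_1+a_2)x)/(x(1-x))$, which is continuous on $(0,1)$ and tends to $-\infty$ at both endpoints, hence is bounded above. The diffusion contribution equals $\tfrac12(\theta_1-\theta_2)^2[x^2+(1-x)^2]\le\tfrac12(\theta_1-\theta_2)^2$. Using \eqref{def:xi} and \eqref{eq:hatf}, the jump integrand simplifies to
\begin{align*}
h(\xi(x,z))-h(x)=\log\frac{\hat f(x,z)^2}{f_1(z)f_2(z)}.
\end{align*}
Applying Jensen's inequality to the probability density $z\mapsto\hat f(x,z)$ gives
\begin{align*}
\int_{\mathcal{Z}}\log\frac{\hat f^2}{f_1 f_2}\,\hat f\,dz\le\log\int_{\mathcal{Z}}\frac{\hat f^3}{f_1 f_2}\,dz,
\end{align*}
and expanding $\hat f^3=(f_1 x+f_2(1-x))^3$ reduces the right-hand side to $x^3\int f_1^2/f_2\,dz+3x(1-x)+(1-x)^3\int f_2^2/f_1\,dz$. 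The term $\int f_2^2/f_1\,dz\le b_{\max}$ by the pointwise bound $f_2/f_1<b_{\max}$ in \ref{cond:BLR}, and the remaining integral is controlled via H\"older's inequality with exponents $(3/2,3)$,
\begin{align*}
\int_{\mathcal{Z}}\frac{f_1^2}{f_2}\,dz\le\left(\int_{\mathcal{Z}}\frac{f_1^3}{f_2^2}\,dz\right)^{2/3}\left(\int_{\mathcal{Z}}f_2\,dz\right)^{1/3}\le(6L_F+1)^{2/3},
\end{align*}
where the $\mathbf{a}$-divergence bound in \ref{cond:BLR} has been used. Combining the three pieces yields a finite constant $K$ with $\mathcal{L}h(x)\le K$ for every $x\in(0,1)$.

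On the stochastic interval $[t,\tau_\epsilon\wedge T]$ the coefficients of \eqref{eq:filter2} and the compensator of $N$ are bounded, so the stochastic and compensated-jump integrals arising from It\^o's formula applied to $h(\pi_{s\wedge\tau_\epsilon})$ are true martingales, and Dynkin's formula gives $\mathbb{E}^{t,x}[h(\pi_{\tau_\epsilon\wedge T})]\le h(x)+K(T-t)$. On the event $\{\tau_\epsilon<T\}$ one has $\pi_{\tau_\epsilon}\notin[\epsilon,1-\epsilon]$, hence $h(\pi_{\tau_\epsilon})\ge-\log\epsilon$, and Markov's inequality yields
\begin{align*}
\mathbb{P}^{t,x}(\tau_\epsilon<T)\le\frac{h(x)+K(T-t)}{-\log\epsilon}\xrightarrow[\epsilon\to 0]{}0,
\end{align*}
which proves the claim.

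The main obstacle is the jump estimate in the case $b_{\min}=0$, which is permitted and occurs in Example~\ref{ex:BLR3}. A pointwise bound on $\log(\hat f^2/(f_1 f_2))$ is unavailable since this quantity is unbounded below on the range of $z$; one must first integrate against $\hat f\,dz$ and exploit cancellation via Jensen, and then absorb the unbounded pointwise behaviour by the $\mathbf{a}$-divergence bound in \ref{cond:BLR} through H\"older. It is precisely for this step that the exponent $a=3$ in the definition of $D_3$ is calibrated.
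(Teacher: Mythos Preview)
Your argument is correct and follows a genuinely different route from the paper's. The paper treats the boundaries separately, using the test function $\phi(x)=1/x$ near $x=0$ (and symmetrically $1/(1-x)$ near $x=1$), establishes the multiplicative bound $\mathcal{L}\phi\le L_0\phi$, and closes by Gronwall and contradiction. Your logarithmic barrier $h(x)=-\log x-\log(1-x)$ handles both endpoints at once and yields the stronger \emph{uniform} bound $\mathcal{L}h\le K$, so no Gronwall step is needed and the exit probability is read off directly from Markov's inequality. The trade-off is that your jump estimate already invokes the full strength of \ref{cond:BLR} (Jensen, then H\"older against the $D_3$ divergence bound) to control $\int f_1^2/f_2\,dz$, whereas the paper's treatment of $x=0$ gets away with just the pointwise ratio bound $f_2/f_1<b_{\max}$; the $D_3$ bound enters the paper's proof only implicitly when the ``similar'' argument at $x=1$ is carried out with $b_{\min}=0$ permitted.

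Two small points. First, your closing commentary has the sign reversed: when $b_{\min}=0$ the integrand $\log(\hat f^2/(f_1f_2))$ is unbounded \emph{above}, not below; indeed AM--GM gives $\hat f\ge 2\sqrt{x(1-x)f_1f_2}$, so the integrand is bounded below by $\log(4x(1-x))$ uniformly in $z$. This does not affect the logic, since Jensen delivers exactly the upper bound you need. Second, the assertion that the compensated-jump integral stopped at $\tau_\epsilon$ is a true martingale deserves one more line: it follows once you check $\int_{\mathcal Z}|\log(\hat f^2/(f_1f_2))|\,\hat f\,dz<\infty$ for $x\in[\epsilon,1-\epsilon]$, which is immediate from combining your Jensen upper bound with the AM--GM lower bound just mentioned.
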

\begin{proposition}\label{prop:filter_feller}
	The Markov filter process $\pi^{x_0}:=(\pi^{x_0}_t)_{t\ge0}$ which is defined as the solution of \eqref{eq:filter2} starting from time 0 and a given starting point $x_0\in(0,1)$ is {\it Feller}. That is, following \cite{dynkin}, the function $P_tf(x): = \mathbb{E}[f({\pi}^x_t)]$ satisfies that for any bounded and continuous function $f$, 
	\begin{align}
		&\forall t\ge0,~  P_tf(x) \text{ is continuous},\label{eq:feller1}\\
		&\lim_{t\downarrow 0}P_tf(x) = f(x). \label{eq:feller2}
	\end{align}
\end{proposition}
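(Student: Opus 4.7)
The plan is to prove the two Feller conditions \eqref{eq:feller1} and \eqref{eq:feller2} separately, using Proposition~\ref{prop:filter_boundary} to confine $\pi^{x}$ to $(0,1)$ and the bounded likelihood ratio condition \ref{cond:BLR} to control the jump mechanism.

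For the short-time limit \eqref{eq:feller2}, the argument is essentially direct. The drift and diffusion coefficients of \eqref{eq:filter2} are bounded on $[0,1]$ and the total jump intensity $\lambda\int_{\mathcal{Z}}\hat{f}(\pi_{u-},z)\,dz=\lambda$ is finite, so $\mathbb{P}(\pi\text{ jumps on }[0,t])\le 1-e^{-\lambda t}\to 0$, while the continuous part converges to $x$ by standard SDE estimates. Hence $\pi_t^{x}\to x$ in probability as $t\downarrow 0$, and bounded convergence yields $P_tf(x)\to f(x)$ for every bounded continuous $f$.

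For the spatial continuity \eqref{eq:feller1}, the usual Gronwall-flow estimate for SDEs does not apply directly, since the random measure $N$ in \eqref{eq:filter2} has an intensity $\lambda\hat{f}(\pi_{t-},z)\,dz\,dt$ that depends on $\pi$ itself. I propose removing this self-reference via a Girsanov-type change of measure, in the spirit of the ``auxiliary process'' strategy flagged by the authors in the introduction. Under \ref{cond:BLR}, the ratio $\hat{f}(y,z)/\bar{f}(z)$ is uniformly bounded above and below on $[0,1]\times\mathcal{Z}$ for a fixed reference density such as $\bar{f}=\tfrac12(f_1+f_2)$, so the associated Dol\'eans-Dade exponential is a true martingale and defines an equivalent measure $\mathbb{Q}$ under which $N$ is a Poisson random measure with the deterministic intensity $\lambda\bar{f}(z)\,dz\,dt$. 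Under $\mathbb{Q}$, equation \eqref{eq:filter2} becomes a jump SDE driven by exogenous noise whose coefficients are locally Lipschitz in $x$ uniformly in $z$ (the Lipschitz property of $\xi$ on compact subsets of $(0,1)$ again following from the BLR bounds on $f_2/f_1$). A localization argument over stopping times $\tau_{\varepsilon}:=\inf\{s:\pi_s^{x}\notin[\varepsilon,1-\varepsilon]\}$, with $\tau_{\varepsilon}\uparrow T$ by Proposition~\ref{prop:filter_boundary}, combined with Gronwall's inequality gives $\mathbb{E}_{\mathbb{Q}}[|\pi_t^{x}-\pi_t^{x_0}|^2\mathbf{1}_{\{\tau_{\varepsilon}>t\}}]\to 0$ as $x\to x_0$. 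Transferring back to $\mathbb{P}$ via Cauchy-Schwarz yields $\pi_t^{x}\to\pi_t^{x_0}$ in $\mathbb{P}$-probability, and bounded convergence then completes \eqref{eq:feller1}.

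The main obstacle will be the rigorous verification that the Radon-Nikodym density is a genuinely uniformly integrable martingale with enough integrability for both the forward change of measure and the Cauchy-Schwarz step back to $\mathbb{P}$. This is exactly where the quantitative content of \ref{cond:BLR} enters: the uniform bounds on $f_2/f_1$ control $|\log(\bar{f}/\hat{f}(\cdot,\cdot))|$, and the divergence bound $D_3(f_1\|f_2)<L_F$ supplies the exponential-moment estimates required for L\'epingle-M\'emin type criteria of the form \eqref{cond:5}-\eqref{cond:6}.
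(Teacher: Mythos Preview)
Your proposal is correct and takes essentially the same approach as the paper. The paper derives Proposition~\ref{prop:filter_feller} as a corollary of the machinery built for Theorem~\ref{thm:lipctsx}: one introduces an auxiliary process $\Upsilon^x$ under a reference measure $\overline{\mathbb{P}}$ in which the jump measure has the deterministic intensity $\lambda f_1(z)\,dz\,dt$ (rather than your $\bar f=(f_1+f_2)/2$), identifies the law of $\pi^x$ with $\Xi_T(\Upsilon^x)\,d\overline{\mathbb{P}}$ via the martingale problem, and then couples $\Upsilon^x,\Upsilon^y$ on the common space $(\Omega,\overline{\mathbb{P}})$ with a Gronwall estimate. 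Two minor technical differences from your outline: Lemma~\ref{lem:2} shows that under \ref{cond:BLR} the jump coefficient $\xi$ is \emph{globally} Lipschitz with weight $\rho(z)=\max(f_1/f_2,f_2/f_1)$ satisfying $\int_{\mathcal{Z}}\rho^2 f_1\,dz<\infty$ (the $D_3$-divergence bound is what controls the $f_1/f_2$ tail), so the localization via $\tau_\varepsilon$ is unnecessary; and the integrability of $\Xi_T$ and of $\Xi_T(\Upsilon^x)-\Xi_T(\Upsilon^y)$ is handled by a direct product-expansion bound over the jump times of the reference Poisson measure rather than by verifying \eqref{cond:5}--\eqref{cond:6} abstractly. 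Your identification of the Radon--Nikodym integrability as the ``main obstacle'' matches exactly where the paper spends its effort.
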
 
Proposition~\ref{prop:filter_boundary} implies that when characterizing the dual value function $\hat{L}$ via HJB method, no conditions should be imposed on the boundaries of the filter, neither on the value of the function nor on its partial derivatives (see definition 2.5 and remark 2.6 in \cite{bayraktar2012valuation} for a detailed discussion). 
Proposition~\ref{prop:filter_feller} implies that a similar initial guess of the hidden state will lead to similar developments in the filtering, and the filter itself changes in a reasonably continuous manner. The filter process's {\it Feller} property further validates DPP \eqref{eq:DPP} (see Theorem~\ref{thm:DPPfeller} in a general setup in Sect.~\ref{sec:3}). As a result of Propositions~\ref{prop:filter_boundary} and \ref{prop:filter_feller}, we have the following main result. 
\begin{theorem}[verification]\label{thm:1} Under \ref{cond:BLR}, the dual value function $\hat{L}$ is the unique classical ($C(\overline{\mathcal{U}}_T\times\mathbb{R}_+)\cap C^{1,2,2}({\mathcal{U}}_T\times\mathbb{R}_+)$) solution of the HJB equation \eqref{eq:hatL_HJB}, subject to the boundary condition $\hat{L}(T,x,y) = \widetilde{U}_2(y)$. It takes the form: 
	\begin{align}\label{eq:Lambda_form}
		\hat{L}(t,x,y)= -\frac{y^{\beta}}{\beta}\hat{\Lambda}(t,x),
	\end{align} where $\beta=-\kappa/(1-\kappa)$ and $\kappa$ is the risk aversion parameter of utility functions defined in \eqref{eq:utilit_form}; $\hat{\Lambda}$ is smooth. For $y\in\mathbb{R}_+$, the dual problem \ref{def:dualprob} admits a dual optimizer $\nu^*\in{\Theta}^t$:
	\begin{align}\label{def:opt_nu}
		{\nu}^*_s:=\hat{\nu}(s,\pi_{s-},z)=\frac{1}{1-\beta}	\ln\Big[\frac{\hat{\Lambda}(s,\xi(\pi_{s-},z))}{\hat{\Lambda}(s,\pi_{s-})}\Big], ~s\in[t,T].
	\end{align}
\end{theorem}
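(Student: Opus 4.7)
The plan is to proceed in four stages: (i) exploit the homogeneity of the problem in $y$ to reduce \eqref{eq:hatL_HJB} to a simpler PIDE for a function $\hat\Lambda(t,x)$ of two variables, (ii) compute the embedded optimization explicitly to obtain the feedback candidate $\hat\nu$ in \eqref{def:opt_nu}, (iii) establish existence of a classical $C^{1,2}$ solution to the reduced equation via a viscosity-plus-regularity argument, and (iv) close the loop with a verification/martingale argument that identifies this classical solution with $\hat L$ and confirms admissibility of $\nu^*$.

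Since $\widetilde U_i(y)=-y^\beta/\beta$ and since $Z^\nu_s$ in \eqref{eq:z_dynamic} does not depend on $y$, while the dynamics of $\pi$ is autonomous, a direct scaling computation shows $\widetilde L(t,x,\lambda y;\nu)=\lambda^\beta\widetilde L(t,x,y;\nu)$ for every $\lambda>0$ and every $\nu\in\Theta^t$. This forces the ansatz $\hat L(t,x,y)=-y^\beta\hat\Lambda(t,x)/\beta$, and substitution into \eqref{eq:hatL_HJB} yields a parabolic PIDE for $\hat\Lambda$ of the form $\partial_t\hat\Lambda+\mathcal{L}^{\,\pi}\hat\Lambda+ H(t,x,\hat\Lambda)+1=0$, where $\mathcal{L}^{\pi}$ collects the diffusion and drift parts coming from $\pi$ together with a linear term in $\hat\Lambda$ produced by the $y$–derivatives, and $H$ absorbs the nonlocal integro-term after the infimum over $\nu$ has been taken. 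Because $\nu$ enters only through the scalar $e^\nu$ inside the integrand, the first–order condition can be solved pointwise in $(s,\pi_{s-},z)$: setting the $\nu$–derivative to zero gives exactly \eqref{def:opt_nu} with the log-ratio of $\hat\Lambda(s,\xi(\pi_{s-},z))$ to $\hat\Lambda(s,\pi_{s-})$, whence $H$ becomes an explicit (still nonlocal) function of $\hat\Lambda$.

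The hard part is existence and uniqueness of a $C^{1,2}$ solution to this reduced PIDE, because it is degenerate on the boundaries $x\in\{0,1\}$ and the nonlocal term is genuinely nonlinear in $\hat\Lambda$. My plan here is to characterize $\hat L$ (equivalently $\hat\Lambda$) first as a continuous viscosity solution and then upgrade the regularity. Continuity is supplied by the Lipschitz estimate of Theorem~\ref{thm:lipctsx}; together with the Feller property from Proposition~\ref{prop:filter_feller}, this validates the DPP \eqref{eq:DPP}, from which standard arguments produce a viscosity solution of \eqref{eq:hatL_HJB}. Proposition~\ref{prop:filter_boundary} guarantees that the boundaries $x=0,1$ are unattainable, so no boundary data need to be imposed there; at the terminal time the condition $\hat L(T,x,y)=\widetilde U_2(y)$, i.e.\ $\hat\Lambda(T,x)=1$, follows from the definition of the problem. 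For the upgrade from viscosity to classical solution I would follow the strategy of \cite{pham1998optimal,davis2010impulse,davis2013jump}: localize the PIDE away from the degenerate boundaries, freeze the nonlocal term to obtain a uniformly parabolic linear equation on a compact strip, apply interior Schauder estimates to extract a $C^{1,2}$ classical solution, and then use a comparison/uniqueness result for the nonlinear nonlocal PIDE in the viscosity sense to identify this classical solution with the viscosity solution $\hat\Lambda$. The embedded optimization is handled by verifying that the Hamiltonian $H$ and its derivatives are Lipschitz in $\hat\Lambda$ on bounded sets, which is where the two-sided bound in \ref{cond:BLR} and the finite $3$-divergence bound $L_F$ are crucial.

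Finally, with $\hat\Lambda\in C^{1,2}$ in hand, I would carry out the verification: plug the feedback $\nu^*_s=\hat\nu(s,\pi_{s-},z)$ into the dynamics of $Z^{\nu^*}$, use \ref{cond:BLR} and the uniform bounds on $\hat\Lambda$ (bounded away from $0$ and $\infty$, by positivity together with the Lipschitz estimate) to verify the integrability conditions \eqref{cond:5}--\eqref{cond:6}, so that $\nu^*\in\Theta^t$. Applying Itô's formula to $-\tfrac{1}{\beta}(ye^{-r(s-t)}Z^\nu_s)^\beta\hat\Lambda(s,\pi_s)+\int_t^s\widetilde U_2(ye^{-r(u-t)}Z^\nu_u)\,du$ and using the PIDE shows that this process is a $\mathbb{P}^{t,x}$-submartingale for every $\nu\in\Theta^t$ and a true martingale for $\nu=\nu^*$; taking expectations between $t$ and $T$ gives $\widetilde L(t,x,y;\nu)\ge -y^\beta\hat\Lambda(t,x)/\beta=\widetilde L(t,x,y;\nu^*)$, which simultaneously proves $\hat L=-y^\beta\hat\Lambda/\beta$, the attainment by $\nu^*$, and uniqueness of the classical solution (since any other classical solution satisfying the same terminal data would yield another dual value function).
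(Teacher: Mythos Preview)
Your outline is correct and mirrors the paper's own strategy almost step for step: homogeneity reduction to $\hat\Lambda$, pointwise first-order condition yielding $\hat\nu$, viscosity characterization via the DPP (drawing on Theorem~\ref{thm:lipctsx} and Propositions~\ref{prop:filter_boundary}--\ref{prop:filter_feller}), a regularity upgrade by freezing the nonlocal term and invoking interior Schauder estimates on strips away from the degenerate boundary, and a final It\^o-based verification.

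The one technical point the paper makes explicit that you gloss over is the a~priori two-sided bound $0<C_\ell\le\hat\Lambda\le C_u$ (Proposition~\ref{prop:Lambda_bound}), obtained directly from the probabilistic representation of $\Lambda(t,x;\nu)$ rather than from Lipschitz continuity. This bound is what allows the paper to restrict the dual control to a compact interval $[-M,M]$ \emph{before} running the viscosity and Schauder arguments (Lemma~\ref{lem:main}), and then remove the restriction afterwards; without it the infimum in the Hamiltonian need not be attained, the viscosity proof loses compactness of the control set, and the local martingales in your verification step need not be true (super)martingales for arbitrary $\nu\in\Theta^t$. Your phrase ``bounded away from $0$ and $\infty$, by positivity together with the Lipschitz estimate'' does not actually deliver the lower bound---Lipschitz continuity on a compact set gives boundedness, but the uniform strict positivity comes from a separate direct estimate on the running exponential $e^{\int\Gamma\,du}$, and you should insert that argument before invoking the feedback form of $\hat\nu$.
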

Proof of this theorem is given in Sect.~\ref{sec:4}.\\
Given that the dual optimizer in $\Theta^t$ for \eqref{def:dualprob} exists, we now turn to the proof of no {\it duality gap}.  We 
have the following result (a special case of Theorem~\ref{thm:duality_main} below) that closes the duality gap and derives the optimal controls for the primal problem. The proof is placed in Appendix~\ref{A.proof_Prop1}.
\begin{proposition}\label{prop:back_primal} Under \ref{cond:BLR}, there is no duality gap between the primal \eqref{def:primal_con_pro} and dual \eqref{def:dualprob} problems. Fix $(t,x,v)\in\mathcal{U}_T\times\mathbb{R}_+$, the optimal wealth process is 
	\begin{align*}
		{V}^*_s = v (e^{-r(s-t)}Z_s^{{\nu}^*})^{\beta-1} \frac{\hat{\Lambda}(s,\pi_s)}{\hat{\Lambda}(t,x)},~ s\in[t,T], 
	\end{align*}
	where $\hat{\Lambda}$ and ${\nu}^*$ are given in Theorem~\ref{thm:1} and $(Z_s^{{\nu}^*})_{s\in[t,T]}$ satisfies the SDE:
	\begin{align}\label{eq:z_2}
		d Z^{{\nu}^*}_s:= -Z^{{\nu}^*}_s\Big( \hat{\theta}(\pi_s)d\widetilde{W}_s + \int_{\mathcal{Z}}(1-e^{{\nu}^*_s}) \overline{N}^{\pi}(ds,dz)\Big),~Z_t^{{\nu}^*} = 1.
	\end{align}
	The optimal	controls of primal problem $({\varpi}^*,{c}^*)$ take the following feedback forms,
	\begin{align*}
		\begin{cases}
			&{\varpi}^*_s = \hat{\varpi}(s, \pi_s,{V}^*_s) := \frac{{V}^*_s}{\sigma}\left[(1-\beta)\hat{\theta}(\pi_s) + \frac{\partial_{x} \hat{\Lambda}(s,\pi_s)}{\hat{\Lambda}(s,\pi_s)} \right],~ s\in[t,T],\\
			&{c}^*_s = \hat{c}(s, \pi_s,{V}^*_s)  \frac{{V}^*_s}{\hat{\Lambda}(s,\pi_s)}, ~ s\in[t,T].
		\end{cases}
	\end{align*}
\end{proposition}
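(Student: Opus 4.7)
The plan is to close the duality gap by producing an admissible primal strategy whose utility value matches the minimized upper bound \eqref{eq:dualgap}. Theorem~\ref{thm:1} supplies the explicit form $\hat{L}(t,x,y) = -y^{\beta}\hat{\Lambda}(t,x)/\beta$ with $\hat{\Lambda}$ smooth and a dual optimizer $\nu^* \in \Theta^t$ as in \eqref{def:opt_nu}. Solving the first-order condition $v + \partial_y\hat{L}(t,x,y) = 0$ in closed form gives $y^* = (v/\hat{\Lambda}(t,x))^{1/(\beta-1)}$, and convex conjugacy suggests the candidate consumption $c^*_s := I_2(y^* e^{-r(s-t)}Z_s^{\nu^*}) = (y^* e^{-r(s-t)}Z_s^{\nu^*})^{\beta-1}$. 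A direct algebraic substitution of $y^*$ shows this coincides with $V^*_s/\hat{\Lambda}(s,\pi_s)$, where $V^*_s$ is the process written in the statement. In particular, the terminal identity $V^*_T = I_1(y^* e^{-r(T-t)}Z_T^{\nu^*})$ then follows, which is exactly the complementary slackness relation required by convex duality.

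To extract $\varpi^*$ and verify the candidate solves the wealth SDE, I would apply It\^o's formula to the product $V^*_s = v(e^{-r(s-t)}Z_s^{\nu^*})^{\beta-1}\hat{\Lambda}(s,\pi_s)/\hat{\Lambda}(t,x)$, using \eqref{eq:z_2} for $Z^{\nu^*}$ and \eqref{eq:filter2} for $\pi$, justified by the $C^{1,2}$-smoothness of $\hat{\Lambda}$ from Theorem~\ref{thm:1} and the simultaneous jump structure of $(Z^{\nu^*},\pi)$ driven by $N$. Matching the continuous martingale part with $\sigma\varpi_s\, d\widetilde{W}_s$ in the innovation-form wealth equation $dV_s = \sigma\hat{\theta}(\pi_s)\varpi_s\, ds + (rV_s-c_s)\, ds + \sigma\varpi_s\, d\widetilde{W}_s$ reads off the feedback expression for $\varpi^*_s$ in terms of $\hat{\theta}$ and $\partial_x\hat{\Lambda}/\hat{\Lambda}$. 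Coincidence of the drift and jump-martingale parts is then forced by invoking the HJB equation \eqref{eq:hatL_HJB} satisfied by $\hat{L}$, rewritten for $\hat{\Lambda}$ after factoring out $-y^{\beta}/\beta$, together with the first-order condition in \eqref{eq:hatL_HJB} whose minimizer is precisely the $\nu^*$ recorded in \eqref{def:opt_nu}. This substitution collapses the jump-drift exactly against the non-local integro-differential term.

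Admissibility of $(\varpi^*,c^*)$ follows from $v>0$, positivity of $Z^{\nu^*}$ and $\hat{\Lambda}$, continuity of $\hat{\Lambda}$ and $\partial_x\hat{\Lambda}$, and the boundedness $\pi_s\in(0,1)$ from Proposition~\ref{prop:filter_boundary}. To close the duality gap, I would then compute
\[
\widetilde{J}(t,x,v;\varpi^*,c^*) = \mathbb{E}^{t,x}\Big[U_1(V^*_T) + \int_t^T U_2(c^*_s)\, ds\Big],
\]
and use the Legendre identity $U_i(I_i(y)) = \widetilde{U}_i(y) + y I_i(y)$ together with the budget equality $\mathbb{E}^{t,x}[e^{-r(T-t)}Z_T^{\nu^*}V^*_T + \int_t^T e^{-r(s-t)}Z_s^{\nu^*}c^*_s\, ds] = v$ (which itself follows from the martingale property of the discounted wealth-plus-consumption process under $\mathbb{Q}^{\nu^*}$) to rewrite it as $\hat{L}(t,x,y^*) + vy^*$. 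Combined with \eqref{eq:dualgap} this yields $J(t,x,v)=\inf_{y>0}\{\hat{L}(t,x,y)+vy\}$, closing the gap. The main obstacle is the combined It\^o--HJB calculation: the filter's L\'evy-type jump intensity $\lambda\hat{f}(\pi_{t-},z)$ forces $\pi$ and $Z^{\nu^*}$ to jump simultaneously at the marks of $N$, and verifying that the embedded minimization in $\overline{\mathcal{L}}^\nu \hat{L}$ cancels exactly against the jump contribution to the drift of $V^*$ is what both justifies the feedback form of $\varpi^*$ and identifies $V^*$ as the wealth process of $(\varpi^*,c^*)$; without the classical regularity supplied by Theorem~\ref{thm:1} neither step is legitimate.
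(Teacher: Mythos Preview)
Your proposal is correct and follows essentially the same route as the paper. The paper invokes the general duality Theorem~\ref{thm:duality_main} to assert no gap and then derives the formula for $V^*_s$ by first showing the process $\mathcal{M}_s=(e^{-r(s-t)}Z_s^{\nu^*})^\beta\hat{\Lambda}(s,\pi_s)+\int_t^s(e^{-r(u-t)}Z_u^{\nu^*})^\beta\,du$ is a $(\mathbb{P}^{t,x},\mathbb{H})$-martingale, which lets it compute the conditional expectation defining the replicating wealth; you instead posit $V^*_s$ directly and verify it solves the wealth SDE, closing the gap afterwards by the Legendre identity and budget equality. Both arguments rest on the same It\^o computation and the same cancellation: the simultaneous jump of $(Z^{\nu^*},\pi)$ produces a multiplicative jump in $V^*$ of size $e^{(\beta-1)\nu^*_s}\hat{\Lambda}(s,\xi(\pi_{s-},z))/\hat{\Lambda}(s,\pi_{s-})$, which equals $1$ precisely because $\nu^*$ is chosen as in \eqref{def:opt_nu}, so $V^*$ is in fact continuous (not merely that its compensated jump part vanishes); this is what makes $V^*$ a legitimate wealth process for the continuous asset~\eqref{eq:st_1}.
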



\section{Duality with alternative data: a general dynamic programming approach}\label{sec:3}
In this section, we present a general dynamic programming approach for solving the optimal choice problem based on duality, under a wider class of time-dependent utility functions (Assumption~\ref{ass:utilitu_gen}) and more general alternative data situations. We note that our results can be easily extended to the case of more than two economic states, which corresponds to a finite-state hidden Markov chain $\alpha$.\\ 
\indent We start by describing the general model of alternative data $\eta$ that serves as the setting for our (abstract) result.  In numerous practical scenarios, systems exhibit non-continuous trajectories and structural changes. Commonly used jump-diffusion models in financial asset price modeling (see e.g. \cite{tankov2003financial}) may fail to account for structural changes in alternative data that originate outside of the standard financial market. For this purpose, we are interested in a regime-switching jump-diffusion model because it incorporates discontinuous changes with regime-switching jump sizes and intensities.  Mathematically, we model $\eta$ by the SDE:
\begin{align}\label{eq:eta}
	d\eta_t = b_1(\eta_t,\alpha_t)dt + \sigma_1(\eta_t) dW_t + \sigma_2(\eta_t) dB_t + \int_{\mathcal{Z}} b_2(\eta_{t-},z) N_{\eta}(dt,dz),  
\end{align}
where $B$ is an $\mathbb{F}$-standard Brownian motion, $N_{\eta}(dt,dz)$ is an $\mathbb{F}$-adapted integer valued random measure on $[0,T] \times \mathcal{Z}$, jointly independent of Brownian motion $W$, and hidden Markov chain $\alpha$. In particular, the intensity measure of $N_{\eta}$ is given by $dt\gamma(\alpha_{t-},dz)dz$, which depends on the hidden state. 
To avoid unduly technicalities, we simply assume what we require: {\bf \eqref{eq:eta} has a unique strong solution}. Sufficient conditions are summarized in Assumption~\ref{A.ass:eta} in Appendix~\ref{App:E}.\\ 
\indent To proceed, we need to know the structure of $\mathcal{Q}$ introduced in \eqref{def:Q_set}, i.e, the set of all $\mathbb{P}$-equivalent probability measures $\mathbb{Q}$ on $\mathcal{H}_T$ for which the discounted risky asset price is a $\mathbb{Q}$-martingale. This requires us to define the innovation processes associated with the diffusion part and jump part  of \eqref{eq:eta}.  Recalling the notations introduced at the beginning 
of  Sect.~\ref{sec:2.3_filter}, together with \eqref{eq:W}, we define the following $(\mathbb{P},\mathbb{H})$-Brownian motion $\widetilde{B}$ and $\mathbb{H}$-compensated jump measure $\overline{m}^{\pi}$: 
\begin{align}
	\begin{cases}
		&\widetilde{B}_t: = B_t - \int_{0}^{t}\hat{\vartheta}(\eta_u,\pi_u)-\vartheta(\eta_u,\alpha_u) du,~ \vartheta(\cdot,i): = \frac{(b_1(\cdot,i) - {\theta(i)\sigma_1(\cdot)}/{\sigma})}{\sigma_2(\cdot)},\label{eq:B}\\
		&\overline{m}^{\pi}(dt,dq) := m(dt,dq) - \hat{\lambda}(\pi_{t-})\hat{\phi}_t(\pi_{t-},dq)dt,
	\end{cases}
\end{align}
where $m(d t, d q):=\sum_{s: \Delta \eta_{s} \neq 0} \delta_{\left(s, \Delta \eta_{s}\right)}(d t, d q)$ is the integer-valued random measure associated to the jumps of the process $\eta$; $\lambda_t(\alpha_{t-})\phi_t(\alpha_{t-}, dq)dt$ is the $\mathbb{P}$-dual predictable projection of $m$ (see proposition 3 in \cite{ceci2006risk}) satisfying
$$\lambda_t(\alpha_{t-})\phi_t(\alpha_{t-}, A)dt = \gamma(\alpha_{t-},\{z\in\mathcal{Z}:b_2(\eta_{t-},z)\in A/\{0\}\}), \text{ for} A\in\mathcal{B}(\mathbb{R});$$ \color{black} 
the filter $\pi_t = \mathbb{E}[\alpha_t=1|\mathcal{H}_t]$ is the unique solution to the Kushner–Stratonovich system:
\begin{align}
	&d \pi_t = (a_2-(a_1+a_2)\pi_t) dt + \pi_t(1-\pi_t)(\theta_1 - \theta_2) d\widetilde{W}_t+ \pi_t(1-\pi_t)(\vartheta_1 - \vartheta_2)d\widetilde{B}_t\nonumber\\ 
	&\qquad +\int_{0}^{t}\int_{\mathbb{R}} ( \xi(s,q) -\pi_{s-} )\overline{m}^{\pi}(ds,dq),~\xi(s,q) = \frac{d \pi_{s-}\lambda_s(1)\phi_s(1,dq)}{d(\hat{\lambda}(\pi_{s-})\hat{\phi}_s(\pi_{s-},dq))}.\label{filter:KS}
\end{align}
We have that all $(\mathbb{P},\mathbb{H})$-local martingales can be constructed through the triplet $(\widetilde{W},\widetilde{B},\overline{m}^{\pi})$ (see Proposition~\ref{prop:martg_represen} in Appendix~\ref{App:E} for formal statements). Hence, for given $t\in[0,T]$, $\mathbb{Q}$ is in $\mathcal{Q}$ if  and
only if its Radon–Nikodym derivative w.r.t $\mathbb{P}^{t,x}$ on $\mathcal{H}^t_T$ is given by Dol\'{e}ans’ exponential $\overline{Z}^{\mathbb{Q}}$, where for $s\in[t,T]$, 
\begin{align}\label{eq:z_gene}
	\overline{Z}^{\mathbb{Q}}_s=&
	\mathcal{E}( -\int_{t}^{\cdot}  \hat{\theta}(\pi_u)d\widetilde{W}_u - \int_{t}^{\cdot}\nu_D(u)d\widetilde{B}_u - \int_{t}^{\cdot}\int_{\mathbb{R}} (1-e^{\nu_J(u,q)}) \overline{m}^{\pi}(du,dq) )_{s},
\end{align}
for some $\mathbb{H}^t$-predictable processes $\nu_D$ and $\mathbb{H}^t$-predictable process $\nu_J$ indexed by $\mathbb{R}$, satisfying the L\'{e}pingle-M\'{e}min condition.\\
\indent Under the current general setup, the dual optimization problem is posed as 
\begin{align}\label{def:gene_dual_L}
	\hat{L}(t,x,y)= \inf_{\mathbb{Q}\in \mathcal{Q}}
	\mathbb{E}^{t,x}\Big[\widetilde{U}_1\big(T, ye^{-r(T-t)}\widetilde{Z}_{T}^{\mathbb{Q}} \big)+ \int_{t}^{T} \widetilde{U}_2\big(s, ye^{-r(s-t)}  \widetilde{Z}_{s}^{\mathbb{Q}}\big) ds \Big].\tag{\bf D'}
\end{align}
We highlight that a notable advantage of solving the dual problem in the context of general alternative data setups is the broad applicability of the DPP approach. We cite the following abstract result which shows DPP is valid when the filter process is {\it Feller}. Such a condition offers important insight into what type of alternative data is considered ``useful'' in terms of verification of the problem, that is, the solution procedure as demonstrated in Sect.~\ref{sec:approach_HJB}.
\begin{theorem}[theorem 3.17 in \cite{zitkovic2014dynamic}]\label{thm:DPPfeller}
	Suppose that filter process $({\pi}_t)_{t\in[0,T]}$ as the unique solution to the Kushner–Stratonovich system \eqref{filter:KS} is {\it Feller}.  Then DPP holds for the dual value function $\hat{L}$ defined in \eqref{def:gene_dual_L}; specifically, 
	\begin{itemize}
		\item[i.]
		for any $\mathbb{H}^t$-stopping time $\tau\in[t,T]$ and each $(t,x,y)\in \overline{\mathcal{U}}_T \times \mathbb{R}_+ $,
		\begin{align}\label{eq:DDP}
			\hat{L}(t,x,y) = \inf_{\mathbb{Q}\in\mathcal{Q}} \mathbb{E}^{t,x}[ \hat{L}(\tau, \pi_{\tau},ye^{r(t-\tau)}\widetilde{Z}_{\tau}^{\mathbb{Q}}) + \int_{t}^{\tau} \widetilde{U}_2(s,   ye^{r(t-s)}\widetilde{Z}_{s}^{\mathbb{Q}} ) ds].
		\end{align}
		\item[ii.] For $\epsilon>0$, an $\epsilon$-optimal $\mathbb{Q}^* \in \mathcal{Q}$ can be associated with each $(t,x,y)\in \overline{\mathcal{U}}_T \times \mathbb{R}_+$ in a universally measurable way. 
	\end{itemize}
\end{theorem}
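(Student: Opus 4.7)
Since Theorem~\ref{thm:DPPfeller} is cited directly as theorem~3.17 of \cite{zitkovic2014dynamic}, my plan is not to reprove the general DPP from scratch, but rather to verify, in the concrete setup of \eqref{def:gene_dual_L}, the hypotheses of \v{Z}itkovi\'{c}'s abstract framework and then invoke that result. I would cast the dual problem as a controlled Markov problem with state $(\pi_s,\widetilde{Y}_s)$, where $\widetilde{Y}_s:=ye^{r(t-s)}\widetilde{Z}^{\mathbb{Q}}_s$, with controls the $\mathbb{H}^t$-predictable pairs $(\nu_D,\nu_J)$ entering the Dol\'{e}ans exponential \eqref{eq:z_gene}, and with running and terminal costs $\widetilde{U}_2(s,\widetilde{Y}_s)$ and $\widetilde{U}_1(T,\widetilde{Y}_T)$, respectively.

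The first step is to lift the Feller property from $\pi$ to the joint state $(\pi,\widetilde{Y})$. The $\widetilde{Y}$-component solves a linear SDE driven by $(\widetilde{W},\widetilde{B},\overline{m}^{\pi})$ whose drivers involve $\pi$ only through the bounded continuous $\hat{\theta}(\pi)$ and the compensator kernel $\hat{\lambda}(\pi)\hat{\phi}(\pi,\cdot)$. Standard weak-stability results for jump SDEs with bounded coefficients, combined with the assumed Feller property of $\pi$, yield continuity of the joint transition semigroup on bounded continuous test functions for each fixed control pair $(\nu_D,\nu_J)$.

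Next I would verify the remaining structural requirements of theorem~3.17: first, that the control set $\mathcal{Q}$ is stable under concatenation at $\mathbb{H}^t$-stopping times, which is immediate from the exponential parametrization \eqref{eq:z_gene} together with the L\'{e}pingle--M\'{e}min conditions; second, that $(s,\mathbb{Q})\mapsto\widetilde{U}_2(s,\widetilde{Y}^{\mathbb{Q}}_s)$ is jointly measurable and dominated in expectation uniformly over $\mathbb{Q}\in\mathcal{Q}$ on compact subsets of $(t,x,y)$; and third, that the value function $\hat{L}$ is lower semianalytic in $(t,x,y)$, which follows by combining the weak continuity of the controlled kernel with dominated convergence applied to the $\widetilde{U}_2$ integral. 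With these inputs in place, the two halves of \eqref{eq:DDP} follow in the classical fashion: the $\geq$ direction by conditioning on $\mathcal{H}^t_\tau$, applying the strong Markov property of $(\pi,\widetilde{Y})$, and invoking the tower property; the $\leq$ direction by concatenating an arbitrary control on $[t,\tau]$ with a family of $\epsilon$-optimal controls selected at the random initial data $(\tau,\pi_\tau,ye^{r(t-\tau)}\widetilde{Z}^{\mathbb{Q}}_\tau)$, the universally measurable existence of such a selection being exactly conclusion~(ii) of the theorem.

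The main obstacle I would expect is the measurable-selection step that underlies both the $\leq$ direction of the DPP and conclusion~(ii). The Feller hypothesis on $\pi$ is decisive at this point, since it is what ensures that the graph of the multifunction $(t,x,y)\mapsto\{\mathbb{Q}\in\mathcal{Q}:\widetilde{L}(t,x,y;\mathbb{Q})<\hat{L}(t,x,y)+\epsilon\}$ is analytic and thus admits a universally measurable selector via the Jankov--von~Neumann theorem used inside \cite{zitkovic2014dynamic}. The subtlest ingredient here is that the filter jump term in \eqref{filter:KS} carries a compensator $\hat{\lambda}(\pi_{-})\hat{\phi}(\pi_{-},\cdot)$ depending nonlocally on $\pi$ through $\xi$; without the Feller hypothesis one would have to carry out this nonlocal analysis by hand, but under the stated assumption the remainder is a direct instantiation of \v{Z}itkovi\'{c}'s general dynamic-programming machinery.
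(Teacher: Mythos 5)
The paper does not prove this statement at all: it is imported verbatim as theorem 3.17 of \cite{zitkovic2014dynamic}, with the Feller property of the filter taken as a hypothesis (and verified separately, for the concrete setting of Sect.~\ref{sec:1}, in Proposition~\ref{prop:filter_feller}). Your plan---casting \eqref{def:gene_dual_L} as a controlled Markov problem in the augmented state $(\pi,\widetilde{Y})$, checking the structural hypotheses of \v{Z}itkovi\'{c}'s abstract framework, and then invoking the cited result---is consistent with the paper's treatment and merely fills in the hypothesis verification that the paper leaves implicit.
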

\indent We are now ready to state the main result of this section, with  proof placed in Appendix~\ref{A.proof_thm1}. It establishes the equivalence between the primal and dual  problems. 
\begin{theorem}\label{thm:duality_main} 
	For a class of time-dependent utility functions with suitable growth conditions (Assumption~\ref{ass:utilitu_gen}), { suppose that the dual optimizer $\mathbb{Q}^y\in\mathcal{Q}$ for \eqref{def:gene_dual_L} exist for all $y\in\mathbb{R}_+$}, then for every initial wealth level $v\in\mathbb{R}_+$, there exists a real number $y^*= y(v) >0$ such that
	$$
	J(t,x,v) = \hat{L}(t,x,y^*)+vy^* =\widetilde{L}(t,x,y^*;\nu^{y^*})+vy^* = \inf_{y\in\mathbb{R}_+} \{ \hat{L}(t,x,y)+vy \},
	$$
	where $J$ is the primal value function and $\nu^{y^*}$ is the dual optimizer of \eqref{def:gene_dual_L} for $y^*$. In particular, there is no duality gap. There exists a pair $(\varpi^*,c^*)\in \mathcal{A}(t,x,v)$ with $c^*_s = I_2(s, e^{-r(s-t)}y^*\widetilde{Z}_s^{\mathbb{Q}^{y^*}})$ and $V_T^{t,x,v, \varpi^*,c^*} = I_1( T, e^{-r(T-t)}y^*\widetilde{Z}_T^{\mathbb{Q}^{y^*}})$, that is optimal to the primal problem \eqref{def:primal_con_pro}.
\end{theorem}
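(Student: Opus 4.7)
The plan is to execute a classical convex duality argument in the style of Kramkov--Schachermayer, adapted to the mixed-observation filter. Weak duality $J(t,x,v) \le \hat{L}(t,x,y) + vy$ for every $y > 0$ is already available: the chain of estimates leading to \eqref{eq:dualderive} extends verbatim to the general dual class $\mathcal{Q}$ via the Radon--Nikodym characterization \eqref{eq:z_gene}. Hence $J(t,x,v) \le \inf_{y>0}\{\hat{L}(t,x,y)+vy\}$, and the task reduces to producing some $y^* = y(v)$ and an admissible pair $(\varpi^*,c^*)$ that make all inequalities tight simultaneously.

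First I would establish that $y \mapsto \hat{L}(t,x,y)$ is convex, strictly decreasing, and continuously differentiable on $(0,\infty)$ with $\hat{L}(t,x,0+) = +\infty$ and $\hat{L}(t,x,+\infty)$ finite, using the Inada-type hypotheses and growth bounds bundled in Assumption~\ref{ass:utilitu_gen} for the conjugates $\widetilde{U}_i$. Then $y \mapsto \hat{L}(t,x,y) + vy$ is convex and coercive on $(0,\infty)$, so it attains its infimum at a unique $y^* > 0$ characterized by $-\partial_y \hat{L}(t,x,y^*) = v$. Using the assumed dual optimizer $\mathbb{Q}^{y^*}$ together with an envelope/dominated-convergence argument, this first-order condition translates into the budget identity
\begin{align*}
\mathbb{E}^{t,x}\Big[e^{-r(T-t)}\widetilde{Z}_T^{\mathbb{Q}^{y^*}} \Xi_T^* + \int_t^T e^{-r(s-t)}\widetilde{Z}_s^{\mathbb{Q}^{y^*}} c^*_s\,ds\Big] = v,
\end{align*}
where $c^*_s := I_2(s, e^{-r(s-t)} y^* \widetilde{Z}_s^{\mathbb{Q}^{y^*}})$ and $\Xi_T^* := I_1(T, e^{-r(T-t)} y^* \widetilde{Z}_T^{\mathbb{Q}^{y^*}})$ are the pointwise Legendre maximizers in $\widetilde{U}_i$.

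Next I would construct the wealth candidate as the $\mathbb{Q}^{y^*}$-conditional expectation
\begin{align*}
V^*_s := e^{r(s-t)} \mathbb{E}^{\mathbb{Q}^{y^*}}\Big[e^{-r(T-t)} \Xi_T^* + \int_s^T e^{-r(u-t)} c^*_u\,du \,\Big|\, \mathcal{H}^t_s\Big],
\end{align*}
which is nonnegative, with $V^*_t = v$ and $V^*_T = \Xi_T^*$. The investment process $\varpi^*$ is then extracted from the martingale-representation theorem for $(\mathbb{P},\mathbb{H})$-local martingales in terms of the triplet $(\widetilde{W},\widetilde{B},\overline{m}^\pi)$ (Proposition~\ref{prop:martg_represen}), read off after a Girsanov transform under $\mathbb{Q}^{y^*}$ and matched against the self-financing wealth equation \eqref{eq:wealth_1}, whose only tradable integrator is $d\widetilde{W}^{\mathbb{Q}^{y^*}}$. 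Plugging $(\varpi^*,c^*)$ into the weak-duality chain with $y = y^*$ makes each inequality an equality via the Legendre identity $\widetilde{U}_i(s,y) = U_i(s,I_i(s,y)) - y I_i(s,y)$ and the budget identity, yielding $\widetilde{J}(t,x,v;\varpi^*,c^*) = \hat{L}(t,x,y^*) + v y^*$ and simultaneously certifying $(\varpi^*,c^*)$ as primal-optimal.

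The main obstacle I anticipate is the attainability step embedded in the martingale representation: in our incomplete market the conditional expectation $V^*$ a priori loads on all three integrators $\widetilde{W}^{\mathbb{Q}^{y^*}}$, $\widetilde{B}^{\mathbb{Q}^{y^*}}$, $\overline{m}^{\pi,\mathbb{Q}^{y^*}}$, whereas the self-financing equation only exposes the $\widetilde{W}$ direction. Showing that the orthogonal components actually vanish requires a perturbation argument against the dual control: any nonzero orthogonal load would permit a local variation of $\mathbb{Q}^{y^*}$ that strictly decreases $\widetilde{L}$, contradicting dual optimality. Carrying this perturbation through under the general jump-diffusion filter dynamics \eqref{filter:KS}, where the compensator of $\overline{m}^\pi$ itself depends on the state $\pi$, is where the bulk of the technical work will lie, and must be combined with the growth hypotheses from Assumption~\ref{ass:utilitu_gen} to ensure the integrability needed for the stochastic integrals defining $\varpi^*$ to be well-posed.
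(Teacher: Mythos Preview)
Your proposal is essentially the paper's proof: weak duality, differentiability of $\hat L$ in $y$ via an envelope argument (the paper's Lemma~\ref{lem:1.5}), a first-order condition selecting $y^*$, and a perturbation of the dual optimizer to secure attainability. The only organizational difference is in that last step: rather than arguing directly that the orthogonal $(\widetilde B,\overline m^\pi)$-loads of your candidate $V^*$ vanish, the paper first proves (Lemma~\ref{lem:1}), via the convex perturbation $G_\epsilon=(1-\epsilon)Z^{\nu^y}+\epsilon Z^{\nu'}$, that the budget inequality $\sup_{\nu}\mathbb E[e^{-rT}Z_T^{\nu}\mathcal V+\int e^{-rs}Z_s^{\nu}c^*_s\,ds]\le \chi(y;\nu^y)$ holds against \emph{every} $\nu\in\Theta$, and then feeds this into a standalone superhedging result (Theorem~\ref{thm:budget}) to produce $(\varpi^*,c^*)$ with $V_T\ge I_1(T,\cdot)$. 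The two viewpoints are equivalent---your ``orthogonal components vanish'' is exactly what the budget-for-all-$\nu$ inequality forces once you also know the candidate is a $\mathbb Q^{y^*}$-martingale---but the paper's packaging is more modular and avoids manipulating the representation componentwise.

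One small correction: $\hat L(t,x,0+)=+\infty$ is not true in general (it fails for power utility with $\kappa<0$, where $\widetilde U_i(0+)=0$). What you actually need for coercivity of $y\mapsto \hat L(t,x,y)+vy$, and what Lemma~\ref{lem:1.5} establishes, is the derivative asymptotics $\partial_y\hat L(0+)=-\infty$ and $\partial_y\hat L(+\infty)=0$.
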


\section{The dual value function as a classical solution of the HJB equation}\label{sec:4}
\subsection{Proof of Theorem~\ref{thm:1}}\label{sec:4.1} The main difficulty stems from the nonlinear integro-differential term and degeneracy induced by the filter process, making it very difficult to tackle directly via the PDE theory of classical solutions. 
We first deduce the form of  $\hat{L}$ given by \eqref{eq:Lambda_form}. Recall that $\widetilde{U}_i(y) =  - y^{\beta}/\beta$, it is clear by definition \eqref{def:dual_value_func} that $\hat{L}$ is written as: 
\begin{align*}
	\hat{L}(t,x,y) = y^{\beta}\inf_{\nu \in \Theta^t} - \frac{1}{\beta}\mathbb{E}^{t,x}\Big[(e^{-r(T-t)} Z_T^{\nu})^{\beta} + \int_{t}^{T} (e^{-r(s-t)}Z_s^{\nu})^{\beta}  ds\Big].
\end{align*}
For fixed $y\in\mathbb{R}_+$, the dual optimization in \eqref{def:dualprob} is therefore reduced to the following {\it auxiliary dual problem}:
\begin{align*}
	\text{maximize [minimize] }
	\Lambda(t,x;\nu):= \mathbb{E}^{t,x}\Big[(e^{-r(T-t)}Z_T^{\nu})^{\beta} + \int_{t}^{T} (e^{-r(s-t)}Z_s^{\nu})^{\beta}  ds\Big]
\end{align*}
over $\nu\in\Theta^t$, where maximize or minimize depends on the sign of the utility parameter $\kappa$ in \eqref{eq:utilit_form}. With a change of measure, we write $\Lambda$ in the following form:
\begin{align}\label{eq:Lambda2}
	\Lambda(t,x;\nu)= \widetilde{\mathbb{E}}^{t,x,\nu}\Big[ e^{\int_{t}^{T}\Gamma( \pi_u,\nu)du} + \int_{t}^{T} e^{\int_{t}^{s}\Gamma( \pi_u,\nu)du} ds \Big],
\end{align}
where $\widetilde{\mathbb{E}}^{t,x,\nu}$ denotes the expectation associated to measure $\widetilde{\mathbb{P}}^{t,x,\nu}$ defined via
\begin{align*}
	&	\frac{d\widetilde{\mathbb{P}}^{t,x,\nu}}{d\mathbb{P}^{t,x}}\Big\vert_{\mathcal{H}^t_T} = \widetilde{Z}_T^{\nu}, ~\widetilde{Z}_T^{\nu}:= \exp\Big( -\int_{t}^{T} \beta\hat{\theta}(\pi_u) d\widetilde{W}_u - \frac{1}{2}\int_{t}^{s}\beta^2 \hat{\theta}(\pi_u)^2 du\nonumber\\
	&+ \int_{t}^{T}\int_{\mathcal{Z}}\beta\nu(u,z) N(du,dz) + \lambda \int_{t}^{T}\int_{\mathcal{Z}} (1-e^{\beta\nu(u,z)})\hat{f}(\pi_{u-},z)dzdu \Big);\\
	&\displaystyle \Gamma(x,\nu):= -\beta r-\frac{1}{2}\beta(1-\beta) \hat{\theta}(x)^2 + \lambda\int_{\mathcal{Z}} (e^{\beta\nu(u,z)}-1+ \beta(1-e^{\nu(u,z)}))\hat{f}(x,z)dz,
\end{align*}
and recalling that  $\hat{\theta}(x) = \theta(1)x+ \theta(2)(1-x)$, $\hat{f}(x,z)= f_1(z)x+ f_2(z)(1-x)$, $z \in \mathcal{Z}$. 
In addition, under $\widetilde{\mathbb{P}}^{t,x,\nu}$, the dynamic of the filter process $\pi$ evolves as
\begin{align}\label{eq:filter_measure_nu}
	&d\pi_s = \overline{\mu}(\pi_s) ds + \overline{\sigma}(\pi_s)dW^{\beta}_s+\int_{\mathcal{Z}} (\xi (\pi_{s-},z)-\pi_{s-}) N(ds,dz),~ \pi_t = x, \\
	&\text{ with }\overline{\mu}(x):= (a_2-(a_1+a_2)x) -\beta \overline{\sigma}(x)\hat{\theta}(x), \text{ and } \overline{\sigma}(x):= x(1-x)(\theta_1-\theta_2). \nonumber
\end{align}
Here, $ W_s^{\beta}: = \widetilde{W}_s + \beta \int_{t}^{s} \hat{\theta}(\pi_u) du $ is a standard $(\mathbb{P}^{t,x,\nu},\mathbb{H})$-Brownian motion, and $\widetilde{N}^{\beta}(ds,dz):= N(ds,dz)-e^{\beta\nu(s,z)}\lambda \hat{f}(\pi_{s-},z)dzds$ is the $\mathbb{H}$-compensated Poisson random measure under  $\widetilde{\mathbb{P}}^{t,x,\nu}$. The value function associated with the {\it auxiliary dual problem} is defined as 
\begin{align}\label{def:auxi_dual_valuefunc}
	\hat{\Lambda}(t,x) : =\sup_{\nu\in\Theta^t} \Lambda(t,x;\nu),~ (t, x) \in\overline{\mathcal{U}}_{T},
\end{align}
when $\kappa<0$ and $\sup$ in \eqref{def:auxi_dual_valuefunc} is replaced by $\inf$ otherwise.
Theorem~\ref{thm:1} is then equivalent to the following result.
\begin{theorem}\label{thm:main}
	Under \ref{cond:BLR},  $\hat{\Lambda}(t,x)\in C(\overline{\mathcal{U}}_T)\cap C^{1,2}({\mathcal{U}}_T)$ is the unique classical solution to the following HJB PIDE:
	\begin{align}\label{eq:HJB}
		&\partial_t \hat{\Lambda}+ \overline{\mu}(x)\partial_{x}\hat{\Lambda}+ \frac{1}{2}\overline{\sigma}(x)^2\partial_{xx}\hat{\Lambda}- d_0(x)\hat{\Lambda} + \mathcal{I}_{\beta}[\hat{\Lambda}] +1= 0, ~\text{ in } \mathcal{U}_T,\\
		&d_0(x): = \beta r+ \frac{1}{2}\beta(1-\beta)\hat{\theta}(x),\label{def:d}\\
		&\mathcal{I}_{\beta}[\hat{\Lambda}](t,x): = (1-\beta)\lambda\int_{\mathcal{Z}} \Big[ {\hat{\Lambda}(t,x)}^{\frac{\beta}{\beta-1}} {\hat{\Lambda}(t,\xi(x,z))}^{\frac{1}{1-\beta}}-{\hat{\Lambda}(t,x)}\Big]\hat{f}(x,z)dz.\label{def:I}
	\end{align}
	with boundary condition $\hat{\Lambda}(T,x)=1$, $x\in[0,1]$.
	Furthermore, $\hat{\Lambda}(t,x) = \Lambda(t,x;{\nu}^*) $ where ${\nu}^*\in\Theta^{t}$ is the Markov policy given by 
	\begin{align}\label{eq:optimal_nu_2}
		{\nu}^*_s:=\hat{\nu}(s,\pi_{s-},z)= \frac{1}{1-\beta}	\ln\Big[\frac{\hat{\Lambda}(s,\xi(\pi_{s-},z))}{\hat{\Lambda}(s,\pi_{s-})}\Big], ~s\in[t,T].
	\end{align}
\end{theorem}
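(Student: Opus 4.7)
The plan is to follow the classical dynamic programming route, taking care of the embedded optimization in $\mathcal{I}_\beta$ and of the state-dependence of the jump intensity. Observe first that the pointwise $\nu$-optimization in the pre-reduced HJB $\partial_t \hat{\Lambda} + \sup_\nu \{\text{generator at }\nu\} + 1 = 0$ already produces \eqref{eq:HJB}: for each $z$ the optimizer of the jump aggregate (arising from $\Gamma(x,\nu)$ together with the change of jump intensity under $\widetilde{\mathbb{P}}^{t,x,\nu}$) is exactly the feedback \eqref{eq:optimal_nu_2}, and substituting it back yields the nonlocal operator $\mathcal{I}_\beta$. Thus there are essentially two things to show: that $\hat{\Lambda}$ is a classical solution of \eqref{eq:HJB}, and that $\nu^*$ attains it.

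First I would establish regularity of $\hat{\Lambda}$. Under \ref{cond:BLR} the integrand $e^{\beta\nu}-1+\beta(1-e^\nu)$ together with the $a$-divergence bound controls $\Gamma(x,\nu)$ uniformly from above, so the representation \eqref{eq:Lambda2} yields uniform bounds $0<m\le\hat{\Lambda}(t,x)\le M<\infty$ on $\overline{\mathcal{U}}_T$ (the lower bound follows by choosing the admissible $\nu\equiv 0$). I would then prove that $\hat{\Lambda}$ is Lipschitz in $x$ uniformly in $t$, which is the content of Theorem~\ref{thm:lipctsx}. The technical maneuver is to couple two copies $\pi^{x}$ and $\pi^{x'}$ of the filter via an auxiliary process and estimate the Radon--Nikodym derivative between their laws, so as to absorb the L\'{e}vy-type, state-dependent structure of $\xi(\cdot,z)$ in \eqref{eq:filter_measure_nu} and deliver a $|x-x'|$-Lipschitz modulus; Lipschitz in $t$ is routine from the bounded generator.

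Second, with $\hat{\Lambda}$ continuous, the Feller property (Proposition~\ref{prop:filter_feller}) combined with Theorem~\ref{thm:DPPfeller} yields DPP, from which standard arguments imply that $\hat{\Lambda}$ is a viscosity solution of \eqref{eq:HJB} with terminal datum $\hat{\Lambda}(T,\cdot)=1$. To upgrade to a classical solution I would freeze the nonlocal term: since $\hat{\Lambda}$ is Lipschitz and bounded away from zero, $(t,x)\mapsto\mathcal{I}_\beta[\hat{\Lambda}](t,x)$ is H\"{o}lder continuous (using \ref{cond:BLR} and continuity of $z\mapsto\xi(x,z)$ against $\hat{f}$). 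Viewing \eqref{eq:HJB} as a linear second-order PDE with H\"{o}lder forcing and noting that $\overline{\sigma}^2(x)=x^2(1-x)^2(\theta_1-\theta_2)^2$ is uniformly elliptic on every strip $[\epsilon,1-\epsilon]$, interior Schauder estimates along the lines of \cite{pham1998optimal,davis2013jump} give $\hat{\Lambda}\in C^{1,2}$ locally; because the filter never reaches $\{0,1\}$ (Proposition~\ref{prop:filter_boundary}), an exhaustion argument yields $\hat{\Lambda}\in C^{1,2}(\mathcal{U}_T)$ with no boundary condition required in $x$. Uniqueness in this class follows from a comparison principle proved by doubling of variables, using that the map $u\mapsto u^{\beta/(\beta-1)}$ is locally Lipschitz on $[m,M]$.

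Third, verification: with $\hat{\Lambda}\in C^{1,2}(\mathcal{U}_T)$ in hand, the feedback $\nu^*$ of \eqref{eq:optimal_nu_2} is well-defined and uniformly bounded (by boundedness of $\hat{\Lambda}$ above and away from zero), hence $\nu^*$ satisfies \eqref{cond:5}--\eqref{cond:6} and lies in $\Theta^t$. Applying It\^{o}'s formula under $\widetilde{\mathbb{P}}^{t,x,\nu^*}$ to the process $e^{\int_t^s\Gamma(\pi_u,\nu^*)du}\hat{\Lambda}(s,\pi_s)+\int_t^s e^{\int_t^r\Gamma(\pi_u,\nu^*)du}dr$ and using that $\nu^*$ attains the pointwise optimum in \eqref{eq:HJB}, this process is a true martingale on $[t,T]$; taking expectations with terminal value $\hat{\Lambda}(T,\cdot)=1$ returns $\Lambda(t,x;\nu^*)=\hat{\Lambda}(t,x)$, which together with the dual inequality closes the loop. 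The hard part will be the viscosity-to-classical upgrade: the nonlinearity in $\mathcal{I}_\beta$ through the exponents $\beta/(\beta-1)$ and $1/(1-\beta)$ blocks a direct appeal to linear PIDE regularity, and the bootstrap from Lipschitz to $C^{1,2}$ relies crucially on the strict positivity of $\hat{\Lambda}$ to tame these singular exponents.
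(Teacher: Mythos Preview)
Your proposal is correct and tracks the paper's approach closely: a priori bounds on $\hat{\Lambda}$, Lipschitz regularity via an auxiliary process and Radon--Nikodym change of measure, DPP $\Rightarrow$ viscosity solution, freezing $\mathcal{I}_\beta[\hat{\Lambda}]$ as a H\"{o}lder forcing term to reduce to a degenerate linear parabolic PDE, interior Schauder-type estimates for $C^{1,2}$ regularity, comparison for uniqueness, and It\^{o} verification for $\nu^*$.

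The one substantive point you skip is that the paper does not run this program for $\hat{\Lambda}$ directly but first restricts to bounded controls $\Theta^{t,M}=\{\nu\in\Theta^t:|\nu|\le M\}$, proves everything for the constrained value $\Lambda^M$, and only afterwards (Lemma~\ref{lem:main}) removes the constraint by checking that the feedback from the constrained HJB already satisfies $|\hat\nu|<M$ once $M>\tfrac{1}{1-\beta}\ln(C_u/C_\ell)$. This is not cosmetic: the Lipschitz proof (Theorem~\ref{thm:lipctsx}) needs second moments of the Radon--Nikodym density $\Xi_T$, and these scale like $e^{\beta M n(T)}$; the boundedness of $\Gamma(x,\nu)$ used throughout Proposition~\ref{prop:cts} and Theorem~\ref{thm:visc} likewise fails for unbounded $\nu$. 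Your plan to work over all of $\Theta^t$ would therefore need an additional a priori argument that near-optimizers are uniformly bounded, which is circular before regularity is established. The detour through $\Lambda^M$ is exactly the device that breaks this circularity. (A minor correction: time regularity is H\"{o}lder-$\tfrac12$, not Lipschitz---see Proposition~\ref{prop:cts}---but this suffices.)
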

The proof is divided into several steps that are organized into three subsections. One preliminary step is to show that the control processes in {\it auxiliary dual problem} \eqref{def:auxi_dual_valuefunc} can be restricted to those in $\Theta^t$ taking values in $[-M,M]$ for a fixed positive constant $M$ sufficiently large. We denote this set by $\Theta^{t,M}$, and the corresponding {\it constrained auxiliary dual value function} by $\Lambda^{M}(t,x)$.  We start with presenting the lower and upper bounds of $\hat{\Lambda}$. The estimates are used to verify that the restriction on $\nu$ can be removed. 
\begin{proposition} \label{prop:Lambda_bound} There are positive constants $C_{\ell}$ and $C_{u}$ that only depend on utility parameter $\kappa$, such that
	\begin{align}\label{eq:Lambda_bound}
		C_{\ell} \le \hat{\Lambda}(t, x) \leq C_{u},\quad \forall(t, x) \in\overline{\mathcal{U}}_{T}.
	\end{align}
\end{proposition}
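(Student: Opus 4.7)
The plan is to treat the two regimes $\kappa<0$ (so $\beta\in(0,1)$, $\hat\Lambda=\sup_\nu\Lambda$) and $0<\kappa<1$ (so $\beta<0$, $\hat\Lambda=\inf_\nu\Lambda$) in parallel. In each regime, the ``easy'' direction of the bound is obtained by a Jensen estimate that is \emph{uniform} over $\nu\in\Theta^t$, while the ``hard'' direction is obtained by evaluating $\Lambda(t,x;\nu)$ at the admissible trivial control $\nu\equiv 0\in\Theta^t$ and exploiting the boundedness of $\hat\theta$ on $[0,1]$.

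First, because $\hat\theta$ is bounded on $[0,1]$ and every $\nu\in\Theta^t$ satisfies the L\'{e}pingle--M\'{e}min conditions \eqref{cond:5}--\eqref{cond:6}, the density process $Z^\nu$ from \eqref{eq:z_dynamic} is a true $(\mathbb{P}^{t,x},\mathbb{H})$-martingale with $\mathbb{E}^{t,x}[Z_s^\nu]=1$ for every $s\in[t,T]$. Applying Jensen's inequality to $u\mapsto u^\beta$, which is concave when $\beta\in(0,1)$ and convex when $\beta<0$, yields $\mathbb{E}^{t,x}[(Z_s^\nu)^\beta]\le 1$ in the first regime and $\mathbb{E}^{t,x}[(Z_s^\nu)^\beta]\ge 1$ in the second. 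Combined with the sign of the discount exponent $-r\beta$ this gives, uniformly in $\nu$, $\Lambda(t,x;\nu)\le e^{-r\beta(T-t)}+\int_t^T e^{-r\beta(s-t)}\,ds\le 1+T$ when $\kappa<0$, and $\Lambda(t,x;\nu)\ge e^{-r\beta(T-t)}\ge 1$ when $0<\kappa<1$. Passing to $\sup_\nu$ or $\inf_\nu$ as appropriate furnishes the upper bound $C_u:=1+T$ in the first regime and the lower bound $C_\ell:=1$ in the second.

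For the complementary bound I would evaluate $\Lambda$ at $\nu\equiv 0\in\Theta^t$: in \eqref{eq:z_dynamic} the jump contribution vanishes and $Z^0_s$ reduces to the Dol\'{e}ans exponential $\mathcal{E}\bigl(-\int_t^\cdot \hat\theta(\pi_u)\,d\widetilde W_u\bigr)_s$. A standard Girsanov computation (equivalently, \eqref{eq:Lambda2} with $\nu=0$, whose integral contribution to $\Gamma$ vanishes identically) gives $\mathbb{E}^{t,x}[(Z_s^0)^\beta]=\widetilde{\mathbb{E}}^{t,x,0}\bigl[\exp\bigl(-\tfrac{1}{2}\beta(1-\beta)\int_t^s \hat\theta(\pi_u)^2\,du\bigr)\bigr]$. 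Since $\hat\theta(x)^2\in[0,\theta_\star^2]$ on $[0,1]$ with $\theta_\star:=\max(|\theta_1|,|\theta_2|)$, this expectation is sandwiched between two strictly positive, finite deterministic constants. Multiplying by the discount factor and integrating in $s$, one obtains $0<\underline{K}\le\Lambda(t,x;0)\le\overline{K}<\infty$ where both constants depend only on $\kappa,r,T,\theta_1,\theta_2$. Since $\hat\Lambda\ge\Lambda(\cdot;0)$ when $\kappa<0$ and $\hat\Lambda\le\Lambda(\cdot;0)$ when $0<\kappa<1$, this supplies the missing lower bound $C_\ell:=\underline{K}$ in the first regime and the missing upper bound $C_u:=\overline{K}$ in the second.

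The only real subtlety -- rather than a serious obstacle -- is the uniform validity of Jensen's inequality over the infinite-dimensional admissible set $\Theta^t$; this rests on $Z^\nu$ being a \emph{true} martingale rather than merely a positive supermartingale, which is exactly what conditions \eqref{cond:5}--\eqref{cond:6} together with the boundedness of $\hat\theta$ ensure. Everything else reduces to elementary estimates, and the resulting constants depend only on $\kappa$ and the fixed model parameters.
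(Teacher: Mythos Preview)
Your proof is correct. The route differs from the paper's, though the two are closely related. For the uniform-in-$\nu$ bound, the paper works under the changed measure $\widetilde{\mathbb{P}}^{t,x,\nu}$ via the representation \eqref{eq:Lambda2} and bounds the running rate $\Gamma(x,\nu)$ pointwise using the elementary sign property of $h(d):=e^{\beta d}-1+\beta(1-e^d)$ (which is $\le 0$ for $\beta\in(0,1)$ and $\ge 0$ for $\beta<0$); this yields $\Gamma\le 0$ (resp.\ $\Gamma\ge 0$) and hence $\Lambda\le 1+T$ (resp.\ $\Lambda\ge 1$). You instead stay under $\mathbb{P}^{t,x}$ and apply Jensen's inequality directly to $(Z_s^\nu)^\beta$, using that $Z^\nu$ is a true martingale. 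These are two sides of the same coin---the sign of $h$ is precisely the pointwise form of the concavity/convexity that Jensen exploits---but your argument is more elementary in that it avoids the measure change entirely, while the paper's fits more naturally with the $\widetilde{\mathbb{P}}^{t,x,\nu}$-based analysis that follows. For the complementary bound, both proofs evaluate at the trivial control $\nu\equiv 0$; the paper does this implicitly (the displayed lower bound still carries the $h$-term, which vanishes only at $\nu=0$), whereas you make the choice explicit and compute $\mathbb{E}^{t,x}[(Z_s^0)^\beta]$ via the boundedness of $\hat\theta$.
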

\begin{proof}
	{\it Case $\kappa<0$.} Note that $0<\beta<1$, and therefore the function $h(d): = e^{\beta d}-1+\beta(1-e^d)$ satisfies $h(d)\le 0$ for $d \in \mathbb{R}$. Using \eqref{eq:Lambda2}, we have
	\begin{align*}
		\widetilde{\mathbb{E}}^{t,x,\nu} &\Big[e^{ \int_{t}^{T}-\beta\left(r+\frac{1}{2}(1-\beta)\theta_1^{2}\right) + \lambda \int_{\mathcal{Z}} h(\nu(u,z))\hat{f}(\pi_{u-},z)dz du  } \\
		&+ \int_{t}^{T} e^{ \int_{t}^{s}-\beta\left(r+\frac{1}{2}(1-\beta)\theta_{1}^{2}\right) + \lambda \int_{\mathcal{Z}} h(\nu(u,z))\hat{f}(\pi_{u-},z)dz d u} d s\Big] \\
		& \leq \Lambda(t,x;\nu) \leq 1+T, \quad  \forall(t, x) \in\overline{\mathcal{U}}_{T},
	\end{align*}
	and they imply that
	$$
	0< e^{-\beta\left(r+\frac{1}{2}(1-\beta)\theta_{1}^{2}\right) T}(1+T) \le  \hat{\Lambda} (t, x) \le 1+T.
	$$
	{\it Case $0<\kappa<1$.} Note that $\beta<0$, and therefore the function $h(d): = e^{\beta d}-1+\beta(1-e^d)$ satisfies $h(d)\ge 0$ for $d \in \mathbb{R}$.	Similar arguments give us the following estimates:
	\begin{align*}
		1 \le \hat{\Lambda} (t, x) \le e^{-\beta\left(r+\frac{1}{2}(1-\beta)\theta_{1}^{2}\right) T}(1+T).
	\end{align*}
	Since the above lower and upper bounds do not depend on the initial state of the filter process, $C_{\ell}$ and $C_{u}$ can be constructed  for given $\kappa<1$ and $\kappa\neq 0$. 
\end{proof}
We provide the following auxiliary lemma. 
\begin{lemma}\label{lem:main} When $\kappa<0$, suppose that the {\it constrained auxiliary dual value function} $\Lambda^{M}(t,x)$ is the unique classical ($C(\overline{\mathcal{U}}_T)\cap C^{1,2}({\mathcal{U}}_T)$) solution of the HJB equation: \begin{align}\label{eq:constrainted_HJBPIDE1}
		&{\partial_t}\Lambda^{M}(t, x)+\max\limits_{\nu\in[-M,M]} \Big\{\Gamma(x,\nu) \Lambda^{M}(t, x)+\mathcal{L}^{\nu} \Lambda^{M}(t, x)\Big\}+1 = 0, \text{ in }\mathcal{U}_{T},\\
		& \text{where }\mathcal{L}^{\nu} g(t, x):= \overline{\mu}(x)\partial_xg(t,x)+\frac{1}{2} \overline{\sigma}^2(x)\partial_{xx}g(t,x)\cr
		&\qquad\qquad\quad +\int_{\mathcal{Z}}\left\{g(t,\xi(x, z))-g(t, x)\right\}\lambda e^{\beta\nu(z)}\hat{f}(x,z) d z,\label{eq:generator_L_nu}
	\end{align} 
	subject to the boundary condition: $\Lambda^{M}(T,x) = 1$, $x\in[0,1]$;
	and suppose that similar statements hold for $0<\kappa<1$ by replacing $\max$ in \eqref{eq:constrainted_HJBPIDE1} by $\min$. Let $\Lambda^M$ be the constrained auxiliary dual value function with 
	\begin{align}\label{eq:Mvalue}
		M > \max\left[ \ln(C_{u}/C_{\ell}), -\ln(C_{\ell}/C_{u})\right]/(1-\beta).
	\end{align}
	Then $\Lambda^M(t,x) = \hat{\Lambda}(t,x)$, where $\hat{\Lambda}$ is the unconstrained value function in \eqref{def:auxi_dual_valuefunc}.
\end{lemma}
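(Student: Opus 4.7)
The plan is to establish $\Lambda^M = \hat{\Lambda}$ by proving two matching inequalities. One direction is immediate: since $\Theta^{t,M} \subseteq \Theta^t$, the supremum over the smaller set is no larger than the supremum over the larger one, so $\Lambda^M \le \hat{\Lambda}$ in the case $\kappa<0$ (and symmetrically $\Lambda^M \ge \hat{\Lambda}$ when $0<\kappa<1$, with $\sup$ replaced by $\inf$). The substance lies in the reverse inequality, for which I would exploit that $\Lambda^M$ is a classical solution of the constrained HJB \eqref{eq:constrainted_HJBPIDE1} together with the $\nu$-independent bounds of Proposition~\ref{prop:Lambda_bound}.

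The key step is to examine the pointwise Hamiltonian. Differentiating the integrand in $\nu(z)$ as in the derivation leading to \eqref{eq:optimal_nu_2}, the unconstrained ($\nu \in \mathbb{R}$) maximizer of $\Gamma(x,\nu)\Lambda^M(t,x)+\mathcal{L}^\nu \Lambda^M(t,x)$ is explicitly
\[
\hat{\nu}^M(t,x,z)\;=\;\frac{1}{1-\beta}\ln\!\Big[\tfrac{\Lambda^M(t,\xi(x,z))}{\Lambda^M(t,x)}\Big].
\]
The proof of Proposition~\ref{prop:Lambda_bound} applies verbatim to $\Lambda^M$ (it uses only that $\nu\equiv 0 \in \Theta^{t,M}$ and the sign analysis of $h(d)=e^{\beta d}-1+\beta(1-e^d)$), so $C_\ell\le\Lambda^M\le C_u$ on $\overline{\mathcal{U}}_T$ with the same constants. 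Consequently
\[
|\hat{\nu}^M(t,x,z)|\;\le\;\tfrac{1}{1-\beta}\ln(C_u/C_\ell)\;\le\;M
\]
by \eqref{eq:Mvalue}, which means $\hat{\nu}^M$ pointwise falls inside $[-M,M]$ and therefore the constrained and unconstrained pointwise suprema coincide. In particular, $\Lambda^M$ actually satisfies the unconstrained HJB: for every $\nu\in\Theta^t$,
\[
\partial_t \Lambda^M + \Gamma(x,\nu)\Lambda^M+\mathcal{L}^\nu \Lambda^M + 1 \;\le\; 0 \quad\text{on } \mathcal{U}_T.
\]

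The final step is a verification argument. For arbitrary $\nu\in\Theta^t$, I would apply It\^o's formula to $s\mapsto e^{\int_t^s \Gamma(\pi_u,\nu)du}\Lambda^M(s,\pi_s)$ along the filter dynamics \eqref{eq:filter_measure_nu} under $\widetilde{\mathbb{P}}^{t,x,\nu}$, use the displayed HJB inequality to dominate the drift part by $-e^{\int_t^s \Gamma du}$, and then take expectations after localizing via a sequence of stopping times that exhaust $T$. The local martingale contributions (the Brownian and compensated-jump integrals) vanish in the limit because $\Lambda^M$ is uniformly bounded and \eqref{cond:5}--\eqref{cond:6} ensure square-integrability of the jump part of $Z^\nu$. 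The net result is $\Lambda(t,x;\nu)\le \Lambda^M(t,x)$, and taking supremum over $\nu\in\Theta^t$ yields $\hat\Lambda \le \Lambda^M$. Combined with the first step this closes the proof; the $0<\kappa<1$ case is identical after reversing all sup/inf and exploiting that $h$ changes sign.

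The main obstacle I anticipate is the integrability bookkeeping in the verification step: the jump integral against $N(ds,dz)$ appearing in \eqref{eq:filter_measure_nu}, when compensated by the $\widetilde{\mathbb{P}}^{t,x,\nu}$-intensity $e^{\beta\nu}\lambda\hat{f}$, mixes the unbounded control $\nu$ with the nonlinearity inherited from $\xi(x,z)$. Without the uniform bound $\Lambda^M\in[C_\ell,C_u]$ and careful use of the L\'epingle--M\'emin conditions, one cannot justify passing expectation inside; this is where the choice of $M$ via \eqref{eq:Mvalue} and the standing bounds of Proposition~\ref{prop:Lambda_bound} do the essential work.
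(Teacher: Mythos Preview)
Your proposal is correct and follows essentially the same route as the paper: you identify that the unconstrained pointwise maximizer $\hat{\nu}^M$ lands inside $[-M,M]$ thanks to the $M$-independent bounds $C_\ell\le\Lambda^M\le C_u$, so $\Lambda^M$ is a supersolution of the unconstrained HJB, and then a verification (It\^o/Feynman--Kac) argument yields $\Lambda(t,x;\nu)\le\Lambda^M(t,x)$ for every $\nu\in\Theta^t$. The paper's proof is the same in structure, only phrased slightly more tersely by invoking the Feynman--Kac formula directly rather than spelling out the localization.
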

\begin{proof} 
	We prove for the case when $\kappa<0$ while the case $0<\kappa<1$ follows similarly. The maximum selector on the l.h.s of \eqref{eq:constrainted_HJBPIDE1} induces a Markov policy $\hat{\nu}^M$ defined, for $(s,x)\in\overline{\mathcal{U}}_T$ indexed by $\mathcal{Z}$, as follows:
	\begin{align}\label{def:optimalnu}
		&\hat{\nu}^M(s,x,z) := \arg \max_{\nu\in[-M,M]} \left\{\Gamma(x,\nu) \Lambda^{M}(s, x)+\mathcal{L}^{\nu} \Lambda^{M}(s, x) \right\}\\ 
		&=\begin{cases}
			&\frac{1}{1-\beta}	\ln\left[\frac{\Lambda^{M}(s,\xi(x,z))}{\Lambda^{M}(s,x)}\right],\text{ if }\frac{1}{1-\beta}	\left|\ln\left[\frac{\Lambda^{M}(s,\xi(x,z))}{\Lambda^{M}(s,x)}\right]\right| \le M\\
			&M \mathrm{sgn} \ln\left[\frac{\Lambda^{M}(s,\xi(x,z))}{\Lambda^{M}(s,x)}\right], \text{otherwise}
		\end{cases},  ~s\in[t,T]. \nonumber
	\end{align}
	Using \eqref{eq:Lambda_bound} (notice that the estimates also hold for the {\it constrained auxiliary dual value function} $\Lambda^M$), it follows that, for $M$ satisfying \eqref{eq:Mvalue}, $\frac{1}{1-\beta}	\left|\ln\left[\frac{\Lambda^{M}(s,\xi(x,z))}{\Lambda^{M}(s,x)}\right]\right| < M$, so the constraints in \eqref{eq:constrainted_HJBPIDE1} can be removed, i.e.,
	\begin{align}
		\left[\partial_{t}+\mathcal{L}^{\nu_s}+\Gamma(\pi_s,\nu_s) \right] {\Lambda}^M(s,\pi_s)   \le -1, ~s\in[t,T], ~\forall~ \nu\in\Theta^t. \label{proof:1}
	\end{align}
	This inequality together with the Feynman--Kac formula imply that for $\nu\in\Theta^t$:
	\begin{align}
		{\Lambda}^M(t,x) = & \widetilde{\mathbb{E}}^{t,x,\nu}\Big[ e^{\int_{t}^{T}\Gamma( \pi_u,\nu_u)du}{\Lambda}^M(T,\pi_T)  \label{pf:lem_main_1} \\
		&-\int_{t}^{T} e^{\int_{t}^{s}\Gamma( \pi_u,\nu_u)du} 	[\partial_{t}+\mathcal{L}^{\nu_s}+\Gamma(\pi_s,\nu_s)]{\Lambda}^M(s,\pi_s)  ds \Big] 
		\ge  \Lambda(t,x;\nu).\nonumber
	\end{align}
	Taking supreme over $\nu\in\Theta^t$, we have $\Lambda^M \ge \hat{\Lambda}$, and hence $\Lambda^M = \hat{\Lambda}$ by definition. Given that ${\Lambda}^M$ is continuous and bounded, the Markov policy $\hat{\nu}$ defined in \eqref{eq:optimal_nu_2} is bounded, continuous, and $x$-locally Lipschitz. Thus, the Markov control process  $\nu^*$ in \eqref{eq:optimal_nu_2} belongs to $\Theta^{t,M}\subset \Theta^{t}$. From the definition $\nu^*$, the inequalities in \eqref{proof:1} and \eqref{pf:lem_main_1} become equalities for $\nu={\nu}^*$. Hence, $\Lambda^M(t,x) = \hat{\Lambda}(t,x) = \Lambda(t,x;{\nu}^*)$. Finally, substituting the Markov policy  $\nu^*$ in HJB equation \eqref{eq:constrainted_HJBPIDE1} we obtain \eqref{eq:HJB}.  
\end{proof} \color{black}
In the remainder of this section, we prove for Theorem~\ref{thm:main}. We find it is convenient to restrict the control set to $\Theta^{t,M}$ with $M$ sufficiently large for a moment and remove this restriction later by Lemma~\ref{lem:main}.  To help
readers better understand the main idea of the proof, we provide an outline before discussing them in detail.\\ 
\indent  {\it Step 1:  $\Lambda^M$ is uniformly Lipschitz on the state space $(t,x)\in\overline{\mathcal{U}}_T$}. The analytical challenges come from the  L\'{e}vy-type jumps of the filter process in \eqref{eq:filter_measure_nu}, as the law of jump measure $N(dt,dz)$'s compensator depending on the filter itself. To overcome this difficulty, we must introduce an auxiliary process through the Radon-Nikodym derivatives and give the necessary estimates under \ref{cond:BLR}. Results are summarized in Sect.~\ref{sec:cts}. \\
\indent {\it Step 2:  $\Lambda^M$ is a viscosity solution of the HJB PIDE \eqref{eq:constrainted_HJBPIDE1}}.  We adopt a classical definition (Definition~\ref{def:viscosity1}) of the viscosity solution and show ${\Lambda}^M$ is a  viscosity solution of the \eqref{eq:constrainted_HJBPIDE1} in Theorem~\ref{thm:visc} in Sect.~\ref{sec:visc}.\\ 
\indent {\it  Step 3: From PIDE to PDE.} Let $M$ be sufficiently large, we change the notation and rewrite the HJB PIDE \eqref{eq:constrainted_HJBPIDE1} as a parabolic PDE:  
\begin{align}\label{eq:HJBPDE}
	&\Big(\partial_t + \overline{\mu}(x)\partial_{x}+ \frac{1}{2}\overline{\sigma}(x)^2\partial_{xx}-d_0(x)\Big)g(t,x)+ \mathcal{I}_{\beta}[{\Lambda}^M](t,x) +1= 0, \text{ in } {\mathcal{U}}_T,
\end{align} 
where functions $d_0(\cdot)$ and $\mathcal{I}_{\beta}[\cdot]$ are defined as in \eqref{def:d}-\eqref{def:I}.\\
\indent {\it  Step 4: $\Lambda^M$ is a viscosity solution to PDE \eqref{eq:HJBPDE}.}  We consider viscosity solution $g$ of the PDE \eqref{eq:HJBPDE}, which is interpreted as an equation for an ``unknown'' $g$ with the last term $\mathcal{I}_{\beta}[{\Lambda}^M]$ prespecified with ${\Lambda}^M$ characterized in {Step 2}. We aim to demonstrate that ${\Lambda}^M$ also solves PDE \eqref{eq:HJBPDE} in viscosity sense. we must show the equivalence of two definitions of viscosity solutions to HJB PIDE \eqref{eq:constrainted_HJBPIDE1} (i.e, Definitions~\ref{def:viscosity1} and \ref{def:viscosity2}; the former is the classical one while the latter has no replacement of the solution by a test function in the nonlocal integro-differential term associated with the jumps). The results are presented in Proposition~\ref{prop:visco_def_eq} and Corollary~\ref{cor:visco_pde}.\\
\indent {\it  Step 5: Uniqueness of the viscosity solution to the PDE \eqref{eq:HJBPDE}.} It is clear that $g = {\Lambda}^M$ is a viscosity solution for both the PDE \eqref{eq:HJBPDE} and PIDE \eqref{eq:constrainted_HJBPIDE1}, as the two equations are  essentially the same. However, if a function $g$ solves the PDE \eqref{eq:HJBPDE}, it does not mean that $g$ also solves the PIDE \eqref{eq:constrainted_HJBPIDE1}, because the term $\mathcal{I}_{\beta}[{\Lambda}^M]$ in the PDE \eqref{eq:HJBPDE} depends on ${\Lambda}^M$ regardless of the choice of $g$. Thus, we must show that PDE \eqref{eq:HJBPDE} admits a unique viscosity solution. This requires applying a comparison result for viscosity solutions to HJB equations with degenerate coefficients on the boundary, and we cite the relevant result from \cite{amari2012differential}. \\
\indent {\it  Step 6: Existence of a classical solution to the PDE \eqref{eq:HJBPDE}.} The PDE \eqref{eq:HJBPDE} is a parabolic type with $\mathcal{I}_{\beta}[{\Lambda}^M]$ regarded as an autonomous term. We refer to the literature on degenerate parabolic PDE (see e.g. \cite{fleming2012deterministic,bayraktar2012valuation}) to show the existence of a classical solution to the PDE \eqref{eq:HJBPDE}. The result is presented in Theorem~\ref{thm:exist}. \\
\indent Results in Steps 3 - 6 are summarized in Sect.~\ref{sec:PDE}. Finally, we conclude that ${\Lambda}^M$ is a classical ($C(\overline{\mathcal{U}}_T)\cap C^{1,2} (\mathcal{U}_{T})$) solution of \eqref{eq:constrainted_HJBPIDE1}, together with Lemma~\ref{lem:main}, the proof of Theorem~\ref{thm:main} and Theorem~\ref{thm:1} is complete.  
\subsection{Lipschitz continuity of auxiliary constrained dual value function ${\Lambda}^M$}\label{sec:cts}
\indent We first show the Lipschitz continuity of ${\Lambda}^M(t,x)$ in the state variable $x$. Without loss of generality, we consider the case in which $t=0$. Unlike \cite{frey2012portfolio}, where the authors reformulate the dynamics of the filter process to an exogenous Poisson random measure while maintaining the original filter process's law. We now establish other necessary estimates of the value function by introducing an auxiliary process through the Radon–Nikodym derivatives. This method effectively enables us to work under general alternative data signals satisfying \ref{cond:BLR}.\\
\indent  We denote by $D_T:=D([0,T],[0,1])$ the path space of $(\pi_{t})_{t\in[0,T]}$, and $\mathcal{D}_T$ the usual $\sigma$-field of $D_T$. $P_1$ denotes the probability distribution on $(D_T,\mathcal{D}_T)$ induced by $(\pi_{t})_{t\in[0,T]}$ under $\widetilde{\mathbb{P}}^{0,x,\nu}$ for a given control process $\nu \in \Theta^{0,M}$. Standard arguments show that, with $\mathcal{L}^{\nu}$ defined in \eqref{eq:generator_L_nu}, the functional
\begin{align}\label{def:kg1}
	K_g(t): = g(\pi_t) - g(x) - \int_{0}^{t}\mathcal{L}^{\nu_s} g(\pi_s) ds
\end{align}
is a martingale under $P_1$ for each point $x\in[0,1]$ and each function $g(x)\in C^2([0,1])$, and $P_1$ is the unique such probability distribution.\\
\indent We introduce an auxiliary process $\Upsilon_t$ under a reference probability measure $\overline{\mathbb{P}}$ that satisfies the following stochastic integro-differential equation:
\begin{align}\label{eq:Upsilon}
	d\Upsilon_t = \overline{\mu}(\Upsilon_t)dt+\overline{\sigma}(\Upsilon_t) dW_t^{\beta} + \int_{\mathcal{Z}} (\xi(\Upsilon_{t-},z)-\Upsilon_{t-}) N_2(dt,dz),
\end{align}
where functions $\overline{\mu}$, $\overline{\sigma}$ are defined in \eqref{eq:filter_measure_nu}; $W_t^{\beta}$ is a standard Brownian motion and $N_2$ is a Poisson random measure with an intensity measure given by $\lambda f_1(z)dzdt$ under $\overline{\mathbb{P}}$. Note that $\Upsilon$ is a jump-diffusion process with an exogenous Poisson random measure. We denote the process $\Upsilon_t$ with initial condition $x$ by $\Upsilon_t^x$. 
To ensure that SDE \eqref{eq:Upsilon} has a unique strong solution for each control $\nu\in\Theta^{0,M}$, the coefficients $\overline{\mu}$, $\overline{\sigma}$ and $\xi$ must satisfy certain Lipschitz and growth conditions (see e.g. \cite{pham2001optimal}). We verify these conditions in the following Lemma, with proof in Appendix~\ref{A.proof_Prop1}.
\begin{lemma}{(Lipschitz and growth conditions)}\label{lem:2} Under \ref{cond:BLR}, 
	there is positive constant $C$ and function $\rho:\mathcal{Z} \rightarrow \mathbb{R}_+$ with $\int_{\mathcal{Z}}\rho(z)^2 f_1(z)dz <\infty$ such that for all $x$, $y \in [0,1]$,
	\begin{align}
		&\left|\overline{\mu}\left(x\right)-\overline{\mu}\left(y\right)\right|+\left|\overline{\sigma}\left(x\right)-\overline{\sigma}\left(y\right)\right| \leq C \left|x-y\right|, ~|\overline{\mu}(x)|+\left|\overline{\sigma}(x)\right| \leq C(1+|x|),\label{cond:1}\\
		&\left|\xi\left(x, z\right)-\xi\left(y, z\right)\right| \leq \rho(z)\left|x-y\right|, ~|\xi(x, z)| \leq (1+|x|).\label{cond:3}
	\end{align}
\end{lemma}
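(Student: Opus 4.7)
The plan is to handle each of the four inequalities separately; only the Lipschitz estimate on the jump coefficient $\xi$ genuinely uses \ref{cond:BLR}, while the remaining three bounds are routine consequences of the algebraic form of $\overline{\mu}$ and $\overline{\sigma}$. For $\overline{\sigma}(x)=x(1-x)(\theta_1-\theta_2)$ a direct derivative estimate on $[0,1]$ yields Lipschitz constant $|\theta_1-\theta_2|$, and the same calculation gives $|\overline{\sigma}(x)|\le|\theta_1-\theta_2|/4\le C(1+|x|)$. For $\overline{\mu}(x)=a_2-(a_1+a_2)x-\beta\overline{\sigma}(x)\hat{\theta}(x)$, I would decompose it as an affine term plus the product $\overline{\sigma}\cdot\hat{\theta}$; since both factors are bounded and Lipschitz on $[0,1]$, so is their product, and both estimates for $\overline{\mu}$ follow at once. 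The pointwise bound $|\xi(x,z)|\le 1\le 1+|x|$ is immediate from the definition of $\xi$ in \eqref{def:xi}.

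The main algebraic step for $\xi$ is to compute the difference explicitly. Clearing denominators yields the identity
\begin{equation*}
\xi(x,z)-\xi(y,z)=\frac{f_1(z)\,f_2(z)}{[f_1(z)x+f_2(z)(1-x)]\,[f_1(z)y+f_2(z)(1-y)]}(x-y).
\end{equation*}
Each convex combination $f_1(z)x+f_2(z)(1-x)$ takes values in $[\min(f_1(z),f_2(z)),\max(f_1(z),f_2(z))]$ as $x$ ranges over $[0,1]$, so the denominator is bounded below by $\min(f_1(z),f_2(z))^2$. Setting
$
\rho(z):=\max\bigl(f_1(z)/f_2(z),\,f_2(z)/f_1(z)\bigr)
$
therefore gives the desired pointwise Lipschitz bound $|\xi(x,z)-\xi(y,z)|\le\rho(z)|x-y|$.

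It remains to verify $\int_{\mathcal{Z}}\rho(z)^2 f_1(z)\,dz<\infty$, and this is where \ref{cond:BLR} really enters. I would split $\mathcal{Z}=A\cup A^c$ with $A:=\{z:f_1(z)\le f_2(z)\}$. On $A$ one has $\rho=f_2/f_1$, so $\rho^2 f_1=f_2^2/f_1\le b_{\max}f_2$ by the upper BLR bound, giving $\int_A\rho^2 f_1\,dz\le b_{\max}$. On $A^c$ one has $\rho=f_1/f_2$ and hence $\rho^2 f_1=f_1^3/f_2^2$; when $b_{\min}=0$ no pointwise upper bound on this ratio is available, but the $a$-divergence hypothesis $D_3(f_1\|f_2)<L_F$ is precisely designed to force $\int_{\mathcal{Z}}f_1^3/f_2^2\,dz<\infty$, so the required integrability on $A^c$ follows. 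Adding both pieces delivers the integrability of $\rho^2 f_1$ and concludes the lemma.

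The main obstacle, and the reason the $a$-divergence appears in \ref{cond:BLR} alongside the two-sided likelihood ratio bound, is the degenerate case $b_{\min}=0$ illustrated by Example~\ref{ex:BLR3}: in that regime $f_1/f_2$ can blow up, so no globally $z$-uniform Lipschitz constant for $\xi$ exists and one is forced into an $L^2(f_1\,dz)$-integrability framework. Recognizing that the $a=3$ divergence is precisely the quantity controlling the second moment of the likelihood ratio $f_1/f_2$ under $f_1$ is the key conceptual ingredient without which the estimate would break down for the broader signal-distribution classes in Examples~\ref{ex:BLR2}--\ref{ex:BLR3}.
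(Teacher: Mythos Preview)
Your proposal is correct and follows essentially the same approach as the paper: the paper also invokes the smoothness of $\overline{\mu},\overline{\sigma}$ on $[0,1]$, notes $\xi\in[0,1]$, and sets $\rho(z)=\max(f_1/f_2,f_2/f_1)$, then asserts $\int_{\mathcal Z}\rho^2 f_1\,dz<\infty$ ``under \ref{cond:BLR}'' without further detail. Your explicit factorization of $\xi(x,z)-\xi(y,z)$ and your split $\mathcal Z=A\cup A^c$ showing that the upper likelihood-ratio bound handles $A$ while the $a$-divergence bound is what controls $A^c$ supply exactly the justification the paper omits, and correctly identify why the $D_3$ hypothesis is needed once $b_{\min}=0$ is permitted.
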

We denote by $P_2$ the probability distribution of $(D_T,\mathcal{D}_T)$ induced by $(\Upsilon_{t})_{t\in[0,T]}$ under $\overline{\mathbb{P}}$. We show that $P_1$ is absolutely continuous with respect to $P_2$ and that the corresponding Radon–Nikodym derivative has the following form:
\begin{align}\label{eq:Xi_T}
	&\Xi_T(\Upsilon):= \frac{dP_1}{dP_2}(\Upsilon)=\prod_{i=1}^{n(T)}\frac{e^{\beta\nu(\tau_i,z_i)}\hat{f}(\Upsilon_{\tau_{i}},z_{i})}{f_1(z_{i})}\\
	&\exp\Big(-\sum_{i=0}^{n(T)}\int_{\tau_i}^{\tau_{i+1} \wedge T} \lambda[\Upsilon_s\mathbf{E}_1(e^{\beta\nu})+(1-\Upsilon_s)\mathbf{E}_2(e^{\beta\nu})-1 ]ds \Big),\nonumber
\end{align}
where $\mathbf{E}_i$ are expectation operators on $z\in\mathcal{Z}$ under density function $f_i$, $i=1,2$; $z_i$ are the sequence of jump size;
$\tau_i$ and $n(T)$ are the sequence of jump times and total jump times up to $T$ respectively: 
\begin{align*}
	\tau_0 = 0, \quad \tau_{i+1}=\inf\{s>\tau_i: \Upsilon_s \neq \Upsilon_{s-} \} \text{ and } n(T) = \max\{i:\tau_i\le T\}.
\end{align*}
Note that for $t>0$, when capital $T$ in \eqref{eq:Xi_T} is substituted by small $t$, we have
\begin{align*}
	\Xi_t(\Upsilon) - 1 = \int_{0}^{t}\int_{\mathcal{Z}} \Xi_{s-}(\Upsilon)(\frac{e^{\beta\nu(s,z)}\hat{f}(\Upsilon_{s-},z)}{f_1(z)}-1)\widetilde{N}_2(ds,dz),
\end{align*}
where $\widetilde{N}_2(ds,dz)= N_2(ds,dz)-\lambda f_1(z)dzdt$ is the compensated Poisson random measure under $\overline{\mathbb{P}}$. The operator $\widetilde{\mathcal{L}}^{\nu}$ associated with $\Upsilon$ is given by
\begin{align*}
	\widetilde{\mathcal{L}}^{\nu} g(x):=& \overline{\mu}(x) g'(x)+\frac{1}{2} \overline{\sigma}^2(x) g''(x) +\int_{\mathcal{Z}}\left\{g(\xi(x, z))-g(x)\right\}\lambda f_1(z) d z.
\end{align*}
It follows that the functional
$
\widetilde{K}_g(t): = g(\Upsilon_t) - g(x) - \int_{0}^{t}\widetilde{\mathcal{L}}^{\nu} g(\Upsilon_s) ds
$
is a martingale under $P_2$ for each point $x\in[0,1]$ and each function $g(x)\in C^2([0,1])$, and $P_2$ is the unique such probability distribution. Replacing $\pi$ by $\Upsilon$ in $K_g(t)$ defined in \eqref{def:kg1} and applying an integration by parts, we have
\begin{align*}
	&\Xi_t K_g(t) = \int_{0}^{t} K_g(s-)d\Xi_s + \int_{0}^{t}\Xi_{s-}d\widetilde{K}_g(s)+ \int_{0}^{t}\Xi_{s-}(dK_g(s)-d\widetilde{K}_g(s))\\
	&+\sum_{s \le t}(\Xi_s - \Xi_{s-})(K_g(s) - K_g(s-))\\
	&=\int_{0}^{t} K_g(s-)d\Xi_s + \int_{0}^{t}\Xi_{s-}d\widetilde{K}_g(s)\\
	&+\int_{0}^{t} \int_{\mathcal{Z}} \Xi_{s-}\Big[\frac{e^{\beta \nu(s,z)}\hat{f}(\Upsilon_{s-},z)}{f_1(z)} -1 \Big](g(\xi(\Upsilon_{s-},z))-g(\Upsilon_{s-}))\widetilde{N}_2(ds,dz).
\end{align*}
As both $\Xi_t$ and $\widetilde{K}_g(t)$ are martingales under $P_2$, it follows that $\Xi_tK_g(t)$ is a martingale under $P_2$. Now for each $A \in D_T$, we set $\widetilde{P}_1(A) = \int_{A} \Xi_T(\Upsilon) dP_2$. We can clearly see that $K_g(t)$ is a martingale under $\widetilde{P}_1$. Because of uniqueness, we conclude that $\widetilde{P}_1 = P_1$. \\
\indent  Having established the preparatory results above, now we provide the main result of this subsection. For the sake of  definiteness, we denote by $\pi^x$ and $\Upsilon^x$ the solutions to \eqref{eq:filter_measure_nu} and \eqref{eq:Upsilon}, respectively, starting from $x$. 
\begin{theorem}\label{thm:lipctsx}
	The  value function ${\Lambda}^M(t,x)$ is Lipschitz continuous in $x$.
\end{theorem}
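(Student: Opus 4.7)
The strategy is to move the $x$-dependence from the jump compensator of the filter process onto a Radon--Nikodym factor, thereby reducing the question to standard Lipschitz estimates for an SDE driven by an \emph{exogenous} Poisson random measure. Without loss of generality take $t=0$, fix $\nu\in\Theta^{0,M}$, and write
\[
\Lambda(0,x;\nu)=\mathbb{E}^{\overline{\mathbb{P}}}\bigl[\Xi_T(\Upsilon^x)\,F(\Upsilon^x;\nu)\bigr],\qquad F(\Upsilon;\nu):=e^{\int_0^T\Gamma(\Upsilon_u,\nu_u)\,du}+\int_0^T e^{\int_0^s\Gamma(\Upsilon_u,\nu_u)\,du}\,ds,
\]
with $\Upsilon^{x_1},\Upsilon^{x_2}$ coupled through the same $(W^{\beta},N_2)$ under $\overline{\mathbb{P}}$. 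By Lemma~\ref{lem:2}, a standard It\^o-isometry plus Gronwall argument applied to \eqref{eq:Upsilon} yields
\[
\mathbb{E}^{\overline{\mathbb{P}}}\Bigl[\sup_{s\le T}|\Upsilon^{x_1}_s-\Upsilon^{x_2}_s|^2\Bigr]\le C|x_1-x_2|^2.
\]
Because $\nu$ is constrained to $[-M,M]$ and $\hat{\theta},\hat{f}$ are bounded, $\Gamma(x,\nu)$ is bounded and Lipschitz in $x$ uniformly in $\nu$; therefore $F(\Upsilon;\nu)$ is bounded by a constant $C_M$ and pathwise Lipschitz with $|F(\Upsilon^{x_1};\nu)-F(\Upsilon^{x_2};\nu)|\le C_M\int_0^T|\Upsilon^{x_1}_u-\Upsilon^{x_2}_u|\,du$.

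Decompose $|\Lambda(0,x_1;\nu)-\Lambda(0,x_2;\nu)|\le \mathrm{(I)}+\mathrm{(II)}$, where
\[
\mathrm{(I)}:=\mathbb{E}^{\overline{\mathbb{P}}}\bigl[|\Xi_T(\Upsilon^{x_1})-\Xi_T(\Upsilon^{x_2})|\,F(\Upsilon^{x_1};\nu)\bigr],\quad \mathrm{(II)}:=\mathbb{E}^{\overline{\mathbb{P}}}\bigl[\Xi_T(\Upsilon^{x_2})\,|F(\Upsilon^{x_1};\nu)-F(\Upsilon^{x_2};\nu)|\bigr].
\]
Term $\mathrm{(II)}$ is handled by the change of measure $dP_1/dP_2=\Xi_T$: it equals $\widetilde{\mathbb{E}}^{0,x_2,\nu}[\,|F(\Upsilon^{x_1};\nu)-F(\Upsilon^{x_2};\nu)|\,]$ which, together with Cauchy--Schwarz and the coupled moment estimate above, is bounded by $C_M|x_1-x_2|$.

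The crux is $\mathrm{(I)}$, and this is where \ref{cond:BLR} is decisive. Reading $\log\Xi_T$ off \eqref{eq:Xi_T}, the $\Upsilon$-dependence enters through $\log\hat{f}(\Upsilon_{\tau_i},z_i)=\log[\Upsilon_{\tau_i}+(1-\Upsilon_{\tau_i})f_2(z_i)/f_1(z_i)]$ at each jump and through the linear-in-$\Upsilon_s$ compensator integral. The BLR bound $b_{\min}<f_2/f_1<b_{\max}$ keeps $\hat{f}(\cdot,z)/f_1(z)$ uniformly bounded away from $0$ and $\infty$, so $x\mapsto\log\hat{f}(x,z)$ is Lipschitz with a constant independent of $z$. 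Since the coupled processes share the jump times $\{\tau_i\}$ and marks $\{z_i\}$ of $N_2$,
\[
|\log\Xi_T(\Upsilon^{x_1})-\log\Xi_T(\Upsilon^{x_2})|\le C_M\Bigl(\sum_{i=1}^{n(T)}|\Upsilon^{x_1}_{\tau_i}-\Upsilon^{x_2}_{\tau_i}|+\int_0^T|\Upsilon^{x_1}_s-\Upsilon^{x_2}_s|\,ds\Bigr).
\]
Combining the mean-value inequality $|e^{a}-e^{b}|\le (e^{a}\vee e^{b})|a-b|$ with Cauchy--Schwarz, the fact that $n(T)$ is Poisson$(\lambda T)$ under $\overline{\mathbb{P}}$ (hence has all moments), and uniform $L^p$-bounds on $\Xi_T(\Upsilon^{x})$ (which follow from $\nu\in[-M,M]$, BLR, and \eqref{cond:5}--\eqref{cond:6}) yields $\mathrm{(I)}\le C_M|x_1-x_2|$. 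Summing the two estimates and taking $\sup$ (or $\inf$) over $\nu\in\Theta^{0,M}$ delivers Lipschitz continuity of $\Lambda^M(0,\cdot)$.

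The main obstacle, as flagged in \emph{Step 1} of the outline, is the uniform $L^{p}$ control of $\Xi_T$ in $x$ and $\nu\in\Theta^{0,M}$, which must be compatible with the coupled, random jump structure of the auxiliary process. The boundedness of $\nu$, the BLR bounds on $\hat{f}/f_1$, and the Poisson nature of $N_2$ conspire precisely to make this exponential-integrability step go through; this is the technical innovation highlighted in the introduction and the reason the passage through $\Upsilon$ and $\Xi_T$ was set up in the first place.
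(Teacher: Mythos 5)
Your overall strategy coincides with the paper's: introduce the auxiliary process $\Upsilon$ under $\overline{\mathbb{P}}$, shift the state-dependent jump intensity into the Radon--Nikodym factor $\Xi_T$, split the difference into a "$\Xi$-difference" term and an "exponential-of-$\Gamma$-difference" term, and close with the coupled Gronwall/Kunita estimate and $L^2$-bounds on $\Xi_T$. However, your treatment of term $\mathrm{(I)}$ contains a step that fails under \ref{cond:BLR} as stated. You pass to $\log\Xi_T$ and claim that $x\mapsto\log\hat f(x,z)$ is Lipschitz \emph{with a constant independent of $z$} because $b_{\min}<f_2/f_1<b_{\max}$. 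Writing $r(z)=f_2(z)/f_1(z)$, the relevant Lipschitz constant is $\sup_{x\in[0,1]}|1-r(z)|/(x+(1-x)r(z))=(1-r(z))/r(z)$ when $r(z)<1$, which blows up as $r(z)\downarrow 0$; since \ref{cond:BLR} only requires $b_{\min}\ge 0$ (and Example~\ref{ex:BLR3} explicitly takes $b_{\min}=0$), there is no uniform-in-$z$ constant, so your bound $|\log\Xi_T(\Upsilon^{x_1})-\log\Xi_T(\Upsilon^{x_2})|\le C_M(\sum_i|\Upsilon^{x_1}_{\tau_i}-\Upsilon^{x_2}_{\tau_i}|+\int_0^T|\Upsilon^{x_1}_s-\Upsilon^{x_2}_s|\,ds)$ does not hold in general. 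The paper's proof avoids the logarithm precisely for this reason: it estimates the difference of the products in \eqref{eq:Xi_T} directly, using only $e^{\beta\nu}|\hat f(x,z)-\hat f(y,z)|/f_1(z)\le e^{\beta M}(1+b_{\max})|x-y|$, i.e.\ only the \emph{upper} likelihood-ratio bound, together with the elementary inequality $|\prod a_i-\prod b_i|\le n(\max_i\{a_i,b_i\})^{n-1}\max_i|a_i-b_i|$ and conditioning on $n(T)$. Your route could be repaired by replacing the uniform constant with a $z$-dependent weight $L(z)\sim f_1(z)/f_2(z)$, whose square-integrability under $f_1$ is exactly what the $D_3$-divergence part of \ref{cond:BLR} provides (this is how $\rho(z)$ is used in Lemma~\ref{lem:2}), but then the subsequent moment bookkeeping over the random jump marks must be redone and is not supplied in your argument.

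A secondary imprecision: for term $\mathrm{(II)}$ you assert it "equals $\widetilde{\mathbb{E}}^{0,x_2,\nu}[\,|F(\Upsilon^{x_1};\nu)-F(\Upsilon^{x_2};\nu)|\,]$" and then invoke the coupled moment estimate, which was established under $\overline{\mathbb{P}}$; tilting by $\Xi_T(\Upsilon^{x_2})$ changes the joint law of the coupled pair, so that estimate is not directly available under the new measure. The clean fix (and the paper's route) is simply Cauchy--Schwarz under $\overline{\mathbb{P}}$, using the explicit bound $\overline{\mathbb{E}}[\Xi_T(\Upsilon^{x})^2]\le \exp\bigl(\lambda T(e^{2\beta M}(b_{\max}+1)^2+1)\bigr)$, which again needs only $b_{\max}$.
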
 
\begin{proof}  For $x,y\in[0,1]$, we have
	\begin{align*}
		&\Big\vert {\Lambda}^M(0,x)-\Lambda^M(0,y) \Big\vert \\
		& \le \sup_{\nu\in\Theta^{0,M}}  \Big \vert \widetilde{\mathbb{E}}^{0,x,\nu}[ \exp(\int_{0}^{T}\Gamma( \pi_u^{x},\nu)du )]  -\widetilde{\mathbb{E}}^{0,y,\nu}[\exp(\int_{0}^{T}\Gamma( \pi_u^{y},\nu)du  )  ]  \Big\vert \\
		&\quad + \Big| \widetilde{\mathbb{E}}^{0,x,\nu}[\int_{0}^{T} \exp(\int_{0}^{t}\Gamma( \pi_u^{x},\nu)du  ) dt ] - \widetilde{\mathbb{E}}^{0,y,\nu}[\int_{0}^{T} \exp(\int_{0}^{t}\Gamma( \pi_u^{y},\nu)du  ) dt ]\Big|\\ 
		&\le \sup_{\nu\in\Theta^{0,M}}   \overline{\mathbb{E}}   \Big\{ \mathbf{A}_T+\mathbf{B}_T + \int_{0}^{T} \mathbf{A}_t dt + \int_{0}^{T} \mathbf{B}_t dt  \Big\},
	\end{align*}
	with
	\begin{align*}
		&\mathbf{A}_t: = \left\vert \exp\Big[\int_{0}^{t}\Gamma( \Upsilon_u^{x},\nu)du\Big] \left(\Xi_t(\Upsilon^{x})- \Xi_t(\Upsilon^{y})\right)  \right\vert\\
		&\mathbf{B}_t:=  \Xi_t(\Upsilon^{y}) \left\vert \exp\Big[\int_{0}^{t}\Gamma( \Upsilon_u^{x},\nu)du\Big]  - \exp\Big[\int_{0}^{t}\Gamma( \Upsilon_u^{y},\nu)du\Big]  \right\vert.
	\end{align*}
	We first focus on term $\mathbf{A}$. As $\nu\in[-M,M]$, the function $\Gamma$ is bounded, and therefore, for some constant $R_A>0$,
	$	\overline{\mathbb{E}} (\mathbf{A}_T) \le R_A \overline{\mathbb{E}}  \left\vert \Xi_T(\Upsilon^{x})- \Xi_T(\Upsilon^{y})\right\vert.
	$
	Using the inequality: 
	$
	\left|\prod_{i=1}^{n} a_{i}-\prod_{i=1}^{n} b_{i}\right| \leq n\left(\max _{1 \leq i \leq n}\left\{a_{i}, b_{i}\right\}\right)^{n-1} \max _{1 \leq i \leq n}\left|a_{i}-b_{i}\right|
	$
	for any two positive sequences of $\left\{a_{i}\right\}_{i=1}^{n}$ and $\left\{b_{i}\right\}_{i=1}^{n}$, we obtain that
	\begin{align*}
		& \overline{\mathbb{E}} \left\vert \Xi_T(\Upsilon^{x}) - \Xi_T(\Upsilon^{y}) \right\vert \\
		&\le e^{\lambda T} \overline{\mathbb{E}}  \bigg\vert \prod_{i=1}^{n(T)}\frac{e^{\beta\nu(\tau_i,z_i)}\hat{f}(\Upsilon^{x}_{\tau_i},z_i)}{f_1(z_i)}\exp\Big(- \int_{0}^{T} \lambda[\Upsilon^{x}_s\mathbf{E}_1(e^{\beta\nu})+(1-\Upsilon^{x}_s)\mathbf{E}_2(e^{\beta\nu})]ds \Big)\\
		&\quad-
		\prod_{i=1}^{n(T)}\frac{e^{\beta\nu(\tau_i,z_i)}\hat{f}(\Upsilon^{y}_{\tau_i},z_i)}{f_1(z_i)}\exp\Big(- \int_{0}^{T} \lambda[\Upsilon^{y}_s\mathbf{E}_1(e^{\beta\nu})+(1-\Upsilon^{y}_s)\mathbf{E}_2(e^{\beta\nu}) ]ds  \Big) \bigg\vert\\
		&\le e^{\lambda T}  \overline{\mathbb{E}} \sum_{k=1}^{\infty}k(e^{\beta M}(1+b_{\max}))^{k-1}\mathds{1}_{n(T)=k-1}\max\Big\{\frac{e^{\beta\nu(\tau_i,z_i)}}{f_1(z_i)}\\
		&\quad \Big|\hat{f}(\Upsilon^{x}_{\tau_i}, z_i) -\hat{f}(\Upsilon^{y}_{\tau_i},z_i) \Big|, \Big|\exp(- \int_{0}^{T} \lambda[\Upsilon^{x}_s\mathbf{E}_1(e^{\beta\nu})+(1-\Upsilon^{x}_s)\mathbf{E}_2(e^{\beta\nu})]ds  \\
		&\quad-  \exp(- \int_{0}^{T} \lambda[\Upsilon^{y}_s\mathbf{E}_1(e^{\beta\nu})+(1-\Upsilon^{y}_s)\mathbf{E}_2(e^{\beta\nu}) ]ds \Big|, 1\le i \le k\Big\} \\
		&\le e^{\lambda T}  \overline{\mathbb{E}} \sum_{k=1}^{\infty}k(e^{\beta M}(1+b_{\max}))^{k-1}\mathds{1}_{n(T)=k-1} R_M \sup_{0\le s \le T}\left\vert \Upsilon^{x}_s - \Upsilon^{y}_s \right\vert,
	\end{align*}
	where the constant $R_M = 2\max(\lambda T,1+b_{\max})e^{\beta M}$. In the last inequality, we use the fact that $|\exp(-a)-\exp(-b)|\le |a-b|$ for any bounded $a,~b$. It is also clear the term in the last line is finite because $\Upsilon_t$ always stays in $[0,1]$. From Cauchy–Schwarz inequality, we further obtain
	\begin{align*}
		&\overline{\mathbb{E}} \left\vert \Xi_T(\Upsilon^{x}) - \Xi_T(\Upsilon^{y}) \right\vert \\
		& ~\le  e^{\lambda T}R_M\sum_{k=1}^{\infty}k(e^{\beta M}(1+b_{\max}))^{k-1}  \overline{\mathbb{E}}  [\mathds{1}_{n(T)=k-1}]\Big(  \overline{\mathbb{E}}  [\sup_{0\le s \le T}\left\vert \Upsilon^{x}_s - \Upsilon^{y}_s \right\vert^2 ] \Big)^{1/2}.
	\end{align*}
	Recall that $n(T)$ counts the total number of jumps of a Poisson process with the constant intensity rate $\lambda$ prior to $T$, it follows that for $C_M= (\lambda Te^{\beta M}(1+b_{\max})+1)$,
	\begin{align*}
		\sum_{k=1}^{\infty}k(e^{\beta M}(1+b_{\max}))^{k-1} \overline{\mathbb{E}} [\mathds{1}_{n(T)=k-1}] =  C_M\exp[\lambda T (e^{\beta M}(1+b_{\max})-1) ].
	\end{align*}
	It remains to show that there exists a constant $C>0$ such that 
	\begin{align}\label{eq:ycont}
		\overline{\mathbb{E}}  \Big[\sup_{0\le s \le T}\left\vert \Upsilon^{x}_s - \Upsilon^{y}_s \right\vert^2   \Big] \le C |x-y|^2.
	\end{align}
	Note that
	\begin{align*}
		d(\Upsilon^{x}_t - \Upsilon^{y}_t)=& \left(\overline{\mu}(\Upsilon^{x}_t) - \overline{\mu}(\Upsilon^{y}_t) \right) dt+ \left( \overline{\sigma}(\Upsilon^{x}_t)-\overline{\sigma}(\Upsilon^{y}_t)\right)dW_t^{\beta}\\
		&+ \int_{\mathcal{Z}} \left( \xi(\Upsilon_{t-}^x,z)-\xi(\Upsilon_{t-}^y,z) - \Upsilon_{t-}^x+\Upsilon_{t-}^y \right)N_2(dt,dz).
	\end{align*}
	Applying It\^{o}'s lemma to the function $|\Upsilon^{x}_t - \Upsilon^{y}_t|^2$ and  Corollary 2.12 in \cite{kunita2004stochastic}, we obtain a constant $C$ such that
	\begin{align*}
		\overline{\mathbb{E}}  \Big[\sup_{0\le s \le T}\left\vert \Upsilon^{x}_s - \Upsilon^{y}_s \right\vert^2   \Big]	\le& C\left\{  |x-y|^2 +  \overline{\mathbb{E}} \Big[ \int_{0}^{T} | \overline{\mu}(\Upsilon^{x}_t) - \overline{\mu}(\Upsilon^{y}_t)|^2 dt \Big]  \right.\\
		&+  \overline{\mathbb{E}} \Big[ \int_{0}^{T} | \overline{\sigma}(\Upsilon^{x}_t) - \overline{\sigma}(\Upsilon^{y}_t)|^2 dt \Big] \\
		&\left.+  \overline{\mathbb{E}} \Big[ \int_{0}^{T} \lambda| \xi(\Upsilon^{x}_t,z) - \xi(\Upsilon^{y}_t,z)|^2 f_1(z) dz dt \Big]\right\}.
	\end{align*}
	By the Lipshcitz conditions of $\overline{\mu}$, $\overline{\sigma}$, $\xi$ given by Lemma~\ref{lem:2}, we obtain the inequality
	\begin{align*}
		\overline{\mathbb{E}}  \Big[\sup_{0\le s \le T}\left\vert \Upsilon^{x}_s - \Upsilon^{y}_s \right\vert^2   \Big]
		\leq C'\Big\{|x-y|^{2}+\left(L_1+\lambda C_{\rho}' \right) \int_{0}^{T} \overline{\mathbb{E}} \Big[\sup _{0 \leq s \leq \tau}|\Upsilon_s^x -\Upsilon_s^y |^{2}\Big] d \tau \Big\},
	\end{align*}
	for some positive constant $C>0$ and $C_{\rho}': = \int_{\mathcal{Z}}\rho^2(z)f_1(z)dz$. Thus, from the Gronwall’s inequality, we obtain the desired inequality \eqref{eq:ycont}.\\
	\indent We next consider the term $\mathbf{B}$. From the Cauchy-Schwarz inequality, we obtain
	\begin{align*}
		\overline{\mathbb{E}}[\mathbf{B}_T] \le & \overline{\mathbb{E}}[\Xi_T(\Upsilon^{x})^2]  ^{1/2}\overline{\mathbb{E}}\Big[\Big\vert\exp(\int_{0}^{T}\Gamma( \Upsilon_u^{x},\nu)du)  - \exp(\int_{0}^{T}\Gamma( \Upsilon_u^{y},\nu)du)  \Big\vert^2\Big]^{1/2}\\
		\le & C_b\overline{\mathbb{E}}[\Xi_T(\Upsilon^{x})^2]  ^{1/2} \overline{\mathbb{E}} \Big[ \sup_{0\le s\le T}| \Gamma( \Upsilon_s^{x},\nu) - \Gamma( \Upsilon_s^{y},\nu)|^2 \Big]^{1/2}\\
		\le &  C_b^{\prime}\overline{\mathbb{E}}[\Xi_T(\Upsilon^{x})^2]  ^{1/2}\overline{\mathbb{E}}\Big[ \sup_{0\le s\le T}|\Upsilon^{x}_s - \Upsilon^{y}_s |^2 \Big]^{1/2},
	\end{align*}
	where in the second inequality, we use again $|\exp(-a)-\exp(-b)|\le |a-b|$; in the last inequality, we use that fact that $\Gamma(x,\nu)$ is Lipschitz continuous in the state variable $x$ for $\nu\in[-M,M]$. Recalling \eqref{eq:ycont}, it remains to show the following:
	\begin{align*}
		&	\overline{\mathbb{E}}[\Xi_T(\Upsilon^{x})^2]\\
		& \le e^{2\lambda T} 	\overline{\mathbb{E}}~\prod_{i=1}^{n(T)}\frac{e^{2\beta M}\hat{f}^2(\Upsilon^{x}_{\tau_i},z_i)}{f^2_1(z_i)}\exp\Big(- 2\int_{0}^{T} \lambda[\Upsilon^{x}_s\mathbf{E}_1(e^{\beta\nu})+(1-\Upsilon^{x}_s)\mathbf{E}_2(e^{\beta\nu})]ds \Big)\\
		& \le e^{2\lambda T} 	\overline{\mathbb{E}}\left[ (e^{2\beta M}(b_{\max}+1)^2)^{n(T)} \right]\\
		&\le \exp\left( \lambda T(e^{2\beta M}(b_{\max}+1)^2+1)\right) <+\infty.
	\end{align*}
	The above analysis can be easily extended to the other two terms $\int_0^T \mathbf{A}_t dt$ and $\int_0^T \mathbf{B}_t dt$. By the arbitrariness of $\nu \in \Theta^{0,M}$, we complete the proof. 
\end{proof}
\indent 	Next, we show the continuity of $\Lambda^M(t,x)$ in the time variable $t$. The following estimates of the filter process $\pi$ will be used, the proof is placed in Appendix~\ref{A.proof_Prop1}
\begin{proposition}\label{prop:pigrowth} For arbitrary  $\nu \in \Theta^{t,M}$, denote by $(\pi_s^{t,x,\nu})_{s\in[t,T]}$ the solution to \eqref{eq:filter_measure_nu} starting from $(t,x)\in \overline{\mathcal{U}}_T$. For any $k\in[0,2]$ and $0\le t \le s \le T$, there is a constant $C_{\pi}>0$ uniformly such that 
	\begin{align}
		&\widetilde{\mathbb{E}}^{t,x,\nu}\Big[ \sup\limits_{t \le u \le s}(1+|\pi_u^{t,x,\nu}|^k) \Big] \le C_{\pi}(1+|x|^k), \label{eq:prop5eq1}\\
		&\widetilde{\mathbb{E}}^{t,x,\nu}\Big[ \sup\limits_{t \le u \le s}|\pi_u^{t,x,\nu} - x|^k \Big] \le C_{\pi}(1+|x|^k)(s-t)^{\frac{k}{2}}.\label{eq:prop5eq2}
	\end{align}
\end{proposition}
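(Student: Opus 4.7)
\textbf{Plan for Proposition~\ref{prop:pigrowth}.} The first estimate is immediate: the filter $\pi_u^{t,x,\nu}$ takes values in $[0,1]$ by construction, so $\sup_{t\le u\le s}(1+|\pi_u^{t,x,\nu}|^k)\le 2$ for any $k\in[0,2]$, and since $1+|x|^k\ge 1$ the bound holds with $C_\pi=2$. The substantive content is the displacement estimate \eqref{eq:prop5eq2}, for which I would start from the integrated form of \eqref{eq:filter_measure_nu} and split the Poisson integral into its $\widetilde{\mathbb{P}}^{t,x,\nu}$-compensated martingale part (w.r.t.\ $\widetilde{N}^\beta$) and its compensator:
\begin{align*}
\pi_u^{t,x,\nu}-x &= \int_t^u \overline{\mu}(\pi_r)\,dr + \int_t^u \overline{\sigma}(\pi_r)\,dW_r^\beta + \int_t^u\!\!\int_{\mathcal{Z}}(\xi(\pi_{r-},z)-\pi_{r-})\,\widetilde{N}^\beta(dr,dz)\\
&\qquad + \int_t^u\!\!\int_{\mathcal{Z}}(\xi(\pi_{r-},z)-\pi_{r-})\,e^{\beta\nu(r,z)}\lambda\hat{f}(\pi_{r-},z)\,dz\,dr.
\end{align*}
I would take $\sup_{u\in[t,s]}$, use the elementary inequality $|\sum_{i=1}^4 a_i|^k\le 4^{(k-1)\vee 0}\sum_i |a_i|^k$, apply $\widetilde{\mathbb{E}}^{t,x,\nu}[\,\cdot\,]$, and bound each of the four pieces separately.

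The drift and compensator pieces are handled by $L^\infty$ bounds: $\overline{\mu}$ and $\overline{\sigma}$ are bounded on $[0,1]$ by Lemma~\ref{lem:2}; $|\xi(\pi,z)-\pi|\le 1$; and, because $\nu\in[-M,M]$ and $\int_{\mathcal{Z}}\hat{f}(\pi_{r-},z)\,dz=1$, the total compensator intensity $\int_{\mathcal{Z}} e^{\beta\nu(r,z)}\lambda\hat{f}(\pi_{r-},z)\,dz$ is uniformly bounded by $\lambda e^{|\beta|M}$. These two pieces therefore contribute $C(s-t)^k\le C T^{k/2}(s-t)^{k/2}$. The Brownian piece is controlled by the Burkholder--Davis--Gundy inequality together with boundedness of $\overline{\sigma}$, yielding $C_k(s-t)^{k/2}$. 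For the compensated Poisson stochastic integral I would invoke the Kunita-type inequality from \cite{kunita2004stochastic} (the same reference used in the proof of Theorem~\ref{thm:lipctsx}): for $k\in[1,2]$,
\begin{align*}
\widetilde{\mathbb{E}}^{t,x,\nu}\Big[\sup_{u\in[t,s]}\Big|\int_t^u\!\!\int_{\mathcal{Z}}(\xi(\pi_{r-},z)-\pi_{r-})\,\widetilde{N}^\beta(dr,dz)\Big|^k\Big]
\le C_k\,\widetilde{\mathbb{E}}^{t,x,\nu}\Big[\Big(\int_t^s\!\!\int_{\mathcal{Z}}(\xi-\pi_{r-})^2 e^{\beta\nu}\lambda\hat{f}\,dz\,dr\Big)^{k/2}\Big]\le C(s-t)^{k/2},
\end{align*}
while for $k\in[0,1)$ Lyapunov's/Jensen's inequality reduces matters to the $k=1$ case. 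Summing the four estimates and using $1\le 1+|x|^k$ yields \eqref{eq:prop5eq2}.

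The main obstacle, and the reason one works with $\nu\in\Theta^{t,M}$ rather than all of $\Theta^t$, is that under $\widetilde{\mathbb{P}}^{t,x,\nu}$ the jump intensity is \emph{state-dependent} through $\hat{f}(\pi_{r-},z)$ and \emph{control-dependent} through $e^{\beta\nu(r,z)}$; only the uniform bound $\nu\in[-M,M]$ together with the normalization $\int\hat{f}(\cdot,z)\,dz=1$ (guaranteed by the BLR setup in \ref{cond:BLR}) makes this compensator uniformly bounded, which is precisely what Kunita's inequality needs in order to deliver the correct rate $(s-t)^{k/2}$ with a constant $C_\pi$ independent of $(t,x,\nu)$.
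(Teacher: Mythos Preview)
Your argument is correct, and in fact more direct than the paper's. The paper proves the case $k=2$ by applying Corollary~2.12 of \cite{kunita2004stochastic} to obtain
\[
\widetilde{\mathbb{E}}^{t,x,\nu}\Big[\sup_{t\le u\le s}|\pi_u^{t,x,\nu}|^2\Big]
\le C\Big\{|x|^2+\widetilde{\mathbb{E}}^{t,x,\nu}\Big[\int_t^s|\overline{\mu}|^2+|\overline{\sigma}|^2+\lambda\!\int_{\mathcal{Z}}|\xi-\pi|^2\hat{f}\,dz\,du\Big]\Big\},
\]
then feeds in the \emph{linear growth} conditions of Lemma~\ref{lem:2} and closes with Gronwall; the cases $k\in[0,2)$ are obtained from $k=2$ via H\"older. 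The paper treats \eqref{eq:prop5eq2} by ``a similar argument''. Your route bypasses Gronwall entirely: for \eqref{eq:prop5eq1} you simply observe $\pi\in[0,1]$, and for \eqref{eq:prop5eq2} you use the $L^\infty$ bounds on $\overline{\mu},\overline{\sigma}$ and on the jump kernel (which the paper's growth conditions of course imply on $[0,1]$) to get the rate $(s-t)^{k/2}$ term by term. What the paper's approach buys is genericity---it is the standard template for SDEs with linearly growing coefficients and would survive if the state space were unbounded---whereas your approach buys simplicity by exploiting the specific fact that the filter lives in $[0,1]$. Both rely on the same Kunita maximal inequality for the compensated jump part and on the restriction $\nu\in[-M,M]$ to keep the $\widetilde{\mathbb{P}}^{t,x,\nu}$-intensity uniformly bounded, which you correctly identify as the essential point.
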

\begin{proposition} \label{prop:cts} For $t,s\in[0,T]$ and $x,y\in[0,1]$, there is constant $C>0$:
	\begin{align*}
		|\Lambda^M(t,x) - \Lambda^M(s,y)|\le C[|s-t|^{\frac{1}{2}} + |x-y| ].
	\end{align*}
\end{proposition}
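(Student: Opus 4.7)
The plan is to combine the Lipschitz regularity in $x$ already obtained in Theorem~\ref{thm:lipctsx} with a separate $\tfrac12$-H\"older estimate in the time variable. By the triangle inequality,
\[
|\Lambda^M(t,x)-\Lambda^M(s,y)|\le |\Lambda^M(t,x)-\Lambda^M(t,y)|+|\Lambda^M(t,y)-\Lambda^M(s,y)|,
\]
the first term is bounded by $C|x-y|$ by Theorem~\ref{thm:lipctsx}, so it suffices to show that $|\Lambda^M(t,y)-\Lambda^M(s,y)|\le C|s-t|^{1/2}$, which I establish below for $t\le s$.

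For this I plan to invoke the dynamic programming principle for the constrained value function $\Lambda^M$ on the sub-interval $[t,s]$. Since the filter is Feller by Proposition~\ref{prop:filter_feller} and $[-M,M]$ is compact, Theorem~\ref{thm:DPPfeller} adapts to the truncated control set $\Theta^{t,M}$ and yields
\[
\Lambda^M(t,y)=\sup_{\nu\in\Theta^{t,M}}\widetilde{\mathbb{E}}^{t,y,\nu}\Big[e^{\int_t^s\Gamma(\pi_u,\nu_u)du}\,\Lambda^M(s,\pi_s)+\int_t^s e^{\int_t^r\Gamma(\pi_u,\nu_u)du}\,dr\Big].
\]
Subtracting $\Lambda^M(s,y)$ and regrouping,
\begin{align*}
\Lambda^M(t,y)-\Lambda^M(s,y)=\sup_{\nu\in\Theta^{t,M}}\widetilde{\mathbb{E}}^{t,y,\nu}\Big[&\bigl(e^{\int_t^s\Gamma du}-1\bigr)\Lambda^M(s,\pi_s)\\
&+\bigl(\Lambda^M(s,\pi_s)-\Lambda^M(s,y)\bigr)+\int_t^s e^{\int_t^r\Gamma du}\,dr\Big].
\end{align*}

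Next I would bound each of the three contributions uniformly in $\nu\in\Theta^{t,M}$. Because $\nu$ takes values in $[-M,M]$ and $\pi\in[0,1]$, the function $\Gamma$ is uniformly bounded, so $|e^{\int_t^s\Gamma du}-1|\le C|s-t|$ and $\int_t^s e^{\int_t^r\Gamma du}dr\le C|s-t|$; combined with the a priori bound $\Lambda^M\le C_u$ from Proposition~\ref{prop:Lambda_bound}, this makes the first and third contributions $O(|s-t|)$. For the middle contribution, the $x$-Lipschitz estimate from Theorem~\ref{thm:lipctsx} gives $|\Lambda^M(s,\pi_s)-\Lambda^M(s,y)|\le C|\pi_s-y|$, while Proposition~\ref{prop:pigrowth} with $k=1$ (and Jensen's inequality) provides $\widetilde{\mathbb{E}}^{t,y,\nu}[|\pi_s-y|]\le C|s-t|^{1/2}$. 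Summing the three bounds and using $|s-t|\le T^{1/2}|s-t|^{1/2}$ on $[0,T]$ yields $|\Lambda^M(t,y)-\Lambda^M(s,y)|\le C|s-t|^{1/2}$, which closes the estimate.

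The main technical point worth checking is the transfer of Theorem~\ref{thm:DPPfeller} to the $M$-truncated setting; the compactness of $[-M,M]$ and the unchanged Feller structure of the filter make this essentially a repetition of the argument in \cite{zitkovic2014dynamic}, but it is the only ingredient that is not purely a routine estimate. The case $0<\kappa<1$, in which the supremum in the definition of $\Lambda^M$ is replaced by an infimum, is handled identically: the three contributions are bounded in absolute value by the same quantities, and the resulting bound on $|\Lambda^M(t,y)-\Lambda^M(s,y)|$ is unchanged.
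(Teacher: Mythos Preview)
Your proposal is correct and follows essentially the same approach as the paper: invoke the DPP for $\Lambda^M$ on $[t,s]$ (the paper likewise notes this uses Theorem~\ref{thm:DPPfeller} in a weaker form since the truncated control set is compact), decompose the difference into three pieces, and bound them using the uniform boundedness of $\Gamma$ and $\Lambda^M$, the $x$-Lipschitz estimate from Theorem~\ref{thm:lipctsx}, and the moment bound \eqref{eq:prop5eq2} from Proposition~\ref{prop:pigrowth}. The only cosmetic difference is the algebraic grouping: the paper writes $e^{\int\Gamma}(\Lambda^M(s,\pi_s)-\Lambda^M(s,x))+\Lambda^M(s,x)(e^{\int\Gamma}-1)$ whereas you write $(e^{\int\Gamma}-1)\Lambda^M(s,\pi_s)+(\Lambda^M(s,\pi_s)-\Lambda^M(s,y))$, but the resulting estimates are identical.
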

\begin{proof}
	Let $0\le t< s \le T$, applying Theorem~\ref{thm:DPPfeller} (a weaker form since the control set is closed for $\Lambda^M$), we obtain 
	\begin{align*}
		& |\Lambda^M(t,x) - \Lambda^M(s,x)|\\
		\le & \sup\limits_{\nu\in\Theta^{t,M}} \widetilde{\mathbb{E}}^{t,x,\nu} \Big\vert e^{\int_{t}^{s}\Gamma(\pi_u^{t,x,\nu},\nu)du}\Lambda^M(s,\pi^{t,x,\nu}_s) +\int_{t}^{s} e^{\int_{t}^{\tau}\Gamma(\pi_u^{t,x,\nu},\nu)du} d\tau -  \Lambda^M(s,x) \Big\vert\\
		\le &  \sup\limits_{\nu\in\Theta^{t,M}}  \widetilde{\mathbb{E}}^{t,x,\nu}\Big( e^{\int_{t}^{s}\Gamma(\pi_u^{t,x,\nu},\nu)du}\left|\Lambda^M(s,\pi^{t,x,\nu}_s) - \Lambda^M(s,x) \right| \\
		& + \Lambda^M(s,x) \left| e^{\int_{t}^{s}\Gamma(\pi_u^{t,x,\nu},\nu)du} - 1 \right| + \int_{t}^{s} e^{\int_{t}^{\tau}\Gamma(\pi_u^{t,x,\nu},\nu)du} d\tau \Big)=\text{(\rom{1})} + \text{(\rom{2})} + \text{(\rom{3})}.
	\end{align*}
	By virtue of the boundedness of $\Gamma$ for $\nu\in[-M,M]$, Lipschitz continuity of $\Lambda^M$ in $x$ by Theorem~\ref{thm:lipctsx}, and \eqref{eq:prop5eq2}, there is a  positive constant $C$ such that
	\begin{align*}
		\text{(\rom{1})} &\le C \sup_{\nu\in\Theta^{t,M}} \widetilde{\mathbb{E}}^{t,x,\nu}\Big[ |\pi_s^{t,x,\nu} - x| \Big] \le C(1+|x|)(s-t)^{\frac{1}{2}},\\
		\text{(\rom{2})}& \le  \Lambda^M(s,x) | e^{C(s-t)}-1| \le C|s-t|,\\
		\text{(\rom{3})}& \le C(s-t).
	\end{align*}
	Finally, we obtain
	$|\Lambda^M(t,x) - \Lambda^M(s,x)| \le (C+T^{\frac{1}{2}})|s-t|^{\frac{1}{2}}$, together with Theorem~\ref{thm:lipctsx}, the proof is complete.
\end{proof}
\subsection{ $\Lambda^M$ is a viscosity solution of the HJB PIDE \eqref{eq:constrainted_HJBPIDE1}}\label{sec:visc}
We adapt the notion of a viscosity solution introduced by \cite{barles2008second} to the case of integro-differential equations, which is based on the notion of a test function and interprets equation \eqref{eq:constrainted_HJBPIDE1} in a weaker sense. We focus on the case $\kappa<0$, and for  $0<\kappa<1$, we follow a similar argument.
\begin{definition}\label{def:viscosity1} {(viscosity solution (test functions))}
	A bounded function $g \in C(\overline{\mathcal{U}}_T)$ is a viscosity supersolution (subsolution) of equation \eqref{eq:constrainted_HJBPIDE1} if, for any bounded test function $\psi\in C^{1,2}(\overline{\mathcal{U}}_T)$ such that $(t_0,x_0)\in {\mathcal{U}}_T$ is a global minimum (maximum) point of $g-\psi$ with $g(t_0,x_0)=\psi(t_0,x_0)$, then we have
	\begin{align*}
		&(-{\partial_t}-\overline{\mu}(x_0)\partial_x-\frac{1}{2} \overline{\sigma}(x_0)^2\partial_{xx})\psi(t_0,x_0) -\max\limits_{\nu\in [-M,M]}  H_{\psi}(t_0,x_0,\nu) \geq 1 	\text{(resp. $\leq 1$)},\\
		&\text{ where }H_{\psi}(t,x,\nu): = \Gamma(x,\nu) \psi(t, x)+\int_{\mathcal{Z}}\left\{\psi(t,\xi(x, z))-\psi(t,x)\right\}\lambda e^{\beta\nu}\hat{f}(x,z) d z.
	\end{align*}
	A bounded function $g$  is a viscosity
	solution of \eqref{eq:constrainted_HJBPIDE1} if it is both a viscosity subsolution and supersolution of  \eqref{eq:constrainted_HJBPIDE1}. 
\end{definition}
We establish the following result.
\begin{theorem}\label{thm:visc}
	${\Lambda}^M$ is a bounded Lipshcitz continuous viscosity solution of HJB PIDE \eqref{eq:constrainted_HJBPIDE1} in ${\mathcal{U}}_T$ subject to the terminal condition ${\Lambda}^M(T,x)=1$, $x\in[0,1]$.
\end{theorem}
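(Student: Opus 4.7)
Boundedness, Lipschitz continuity in $(t,x)$, and the terminal condition $\Lambda^M(T,x)=1$ are already delivered by Proposition~\ref{prop:Lambda_bound}, Theorem~\ref{thm:lipctsx}, Proposition~\ref{prop:cts} and the definition \eqref{def:auxi_dual_valuefunc}. Since the filter $\pi$ is Feller (Proposition~\ref{prop:filter_feller}), Theorem~\ref{thm:DPPfeller} applied to the closed class $\Theta^{t,M}$ furnishes the DPP for $\Lambda^M$. The plan is the textbook ``DPP + It\^o + pass to the limit'' verification, carried out separately for the two viscosity inequalities; I write the case $\kappa<0$ only (the case $0<\kappa<1$ is symmetric with $\sup\leftrightarrow\inf$).

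\emph{Supersolution.} Fix $\psi\in C^{1,2}$ and $(t_0,x_0)\in\mathcal{U}_T$ with $\Lambda^M\ge\psi$ globally and equality at $(t_0,x_0)$. For each measurable function $\nu(\cdot):\mathcal{Z}\to[-M,M]$ the time-independent control $\nu(s,z):=\nu(z)$ lies in $\Theta^{t_0,M}$. DPP together with $\Lambda^M\ge\psi$ gives, at $\tau_h=t_0+h$,
\begin{align*}
\psi(t_0,x_0)\;\ge\;\widetilde{\mathbb{E}}^{t_0,x_0,\nu}\!\Big[e^{\int_{t_0}^{\tau_h}\Gamma(\pi_u,\nu)du}\psi(\tau_h,\pi_{\tau_h})+\int_{t_0}^{\tau_h}e^{\int_{t_0}^{s}\Gamma(\pi_u,\nu)du}\,ds\Big].
\end{align*}
Applying It\^o's formula to $s\mapsto e^{\int_{t_0}^{s}\Gamma\,du}\psi(s,\pi_s)$ under $\widetilde{\mathbb{P}}^{t_0,x_0,\nu}$ with the dynamics \eqref{eq:filter_measure_nu} of $\pi$, the local-martingale parts vanish in expectation and the drift equals $e^{\int_{t_0}^{s}\Gamma\,du}\bigl(\partial_s\psi+\mathcal{L}^{\nu}\psi+\Gamma(\pi_s,\nu)\psi\bigr)$. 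Rearranging, dividing by $h$, and sending $h\downarrow 0$ via the moment estimate \eqref{eq:prop5eq2} and dominated convergence yield the pointwise bound $\partial_t\psi(t_0,x_0)+\mathcal{L}^{\nu}\psi(t_0,x_0)+\Gamma(x_0,\nu)\psi(t_0,x_0)+1\le 0$. Because the $\nu(z)$-dependence is entirely pointwise inside the $z$-integrals of $\Gamma\psi$ and $\mathcal{L}^\nu\psi$, taking the supremum over such $\nu(\cdot)$ delivers the supersolution inequality of Definition~\ref{def:viscosity1}.

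\emph{Subsolution.} Swap the roles ($\Lambda^M\le\psi$ globally, equality at $(t_0,x_0)$) and for $\epsilon>0$ pick an $\epsilon$-optimal control $\nu^\epsilon\in\Theta^{t_0,M}$ supplied by Theorem~\ref{thm:DPPfeller}(ii). The reverse DPP inequality
\begin{align*}
\psi(t_0,x_0)\;\le\;\widetilde{\mathbb{E}}^{t_0,x_0,\nu^\epsilon}\!\Big[e^{\int_{t_0}^{\tau_h}\Gamma(\pi_u,\nu^\epsilon)du}\psi(\tau_h,\pi_{\tau_h})+\int_{t_0}^{\tau_h}e^{\int_{t_0}^{s}\Gamma(\pi_u,\nu^\epsilon)du}\,ds\Big]+\epsilon h,
\end{align*}
combined with It\^o and the deterministic envelope $\Gamma(\pi_s,\nu^\epsilon)\psi+\mathcal{L}^{\nu^\epsilon}\psi\le\sup_{\nu(\cdot):\mathcal{Z}\to[-M,M]}\{\Gamma(\pi_s,\nu)\psi+\mathcal{L}^{\nu}\psi\}$, then dividing by $h$ and letting first $h\downarrow 0$ and then $\epsilon\downarrow 0$, produces the subsolution inequality at $(t_0,x_0)$.

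\emph{Main obstacle.} The delicate step in both directions is justifying the $L^1$-passage $h\downarrow 0$ under the expectation, since the compensator of $N(ds,dz)$ is the filter-dependent measure $\lambda e^{\beta\nu(s,z)}\hat f(\pi_{s-},z)\,dz\,ds$ and the It\^o jump term involves $\psi(s,\xi(\pi_{s-},z))-\psi(s,\pi_{s-})$. A uniform integrable majorant arises from combining the Lipschitz bound $|\psi(s,\xi(x,z))-\psi(s,x)|\le\|\partial_x\psi\|_\infty|\xi(x,z)-x|$, the identity $\xi(x,z)-x=x(1-x)(f_1(z)-f_2(z))/\hat f(x,z)$, and \ref{cond:BLR} (which forces $\hat f(x,z)\ge\min(1,b_{\min})f_1(z)$ and makes $|f_1-f_2|$ integrable via the $a$-divergence bound), with the moment estimates of Proposition~\ref{prop:pigrowth} closing the dominated-convergence argument. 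A secondary subtlety is that $\nu^\epsilon$ is only universally measurable, but since it enters solely through the \emph{deterministic} pointwise envelope above, no further regularity of $\nu^\epsilon$ is required.
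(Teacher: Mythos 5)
Your proposal is correct, and for the supersolution half it is essentially the paper's argument (DPP for the constrained problem, It\^o applied to $s\mapsto e^{\int_{t_0}^s\Gamma\,du}\psi(s,\pi_s)$, passage to the limit using the uniform moment estimate \eqref{eq:prop5eq2}, then optimization over controls), only streamlined: you work directly at $(t_0,x_0)$ with the deterministic horizon $t_0+h$, whereas the paper runs the standard approximation scaffold with a sequence $(t_k,x_k)\to(t_0,x_0)$, the gaps $\varphi_k$, the time scale $\beta_k=\sqrt{\varphi_k}$ and exit times \eqref{eq:betak}; since continuity of $\Lambda^M$ (Proposition~\ref{prop:cts}) and the DPP are already available, your shortcut is legitimate. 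The subsolution half is a genuinely different route: the paper argues by contradiction, assuming the inequality fails with margin $\ell$, localizing in a neighborhood $\mathcal{N}_{\epsilon_0}$, using the strict-gap constant $\iota$ off that neighborhood and $\epsilon_k$-optimal controls to reach $\Lambda^M(t_k,x_k)\le\Lambda^M(t_k,x_k)-\iota$; you instead give a direct proof with $(\epsilon h)$-optimal controls, the pointwise envelope bound $\Gamma(\pi_s,\nu^\epsilon)\psi+\mathcal{L}^{\nu^\epsilon}\psi\le\sup_{\nu(\cdot)}\{\cdot\}$, and the limits $h\downarrow0$ then $\epsilon\downarrow0$. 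Both work here; what makes your direct version sound is precisely that the estimates of Proposition~\ref{prop:pigrowth} and the boundedness of $\Gamma$ are uniform over $\Theta^{t,M}$, so the fact that the near-optimal control varies with $h$ is harmless, and that the Hamiltonian envelope is continuous in $(t,x)$; the paper's contradiction argument buys freedom from these uniform-in-control limit interchanges at the cost of the localization bookkeeping ($\iota$, $C_\Gamma$, exit times). Two small remarks: your ``main obstacle'' is lighter than you suggest, since $|\psi(t,\xi(x,z))-\psi(t,x)|\,e^{\beta\nu}\hat f(x,z)\le 2\lambda e^{\beta M}\|\psi\|_\infty \hat f(x,z)$ (or $\le \lambda e^{\beta M}\|\partial_x\psi\|_\infty(f_1+f_2)$ via $\xi(x,z)-x=x(1-x)(f_1-f_2)/\hat f$) is already an integrable majorant without invoking the divergence bound in \ref{cond:BLR}; and for a fixed initial point $(t_0,x_0)$ the existence of an $\epsilon$-optimal control is immediate from the definition of the supremum, so the universal-measurability clause of Theorem~\ref{thm:DPPfeller}(ii) is not actually needed.
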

\begin{proof}
	{\it Step 1: Viscosity supersolution.} Let $(t_0,x_0)\in {\mathcal{U}}_T$ and $\psi\in C^{1,2}(\overline{\mathcal{U}}_T)$ such that
	$ 0 = (\Lambda^M-\psi)(t_0,x_0) = \min_{(t,x)\in \mathcal{U}_T}\left(  \Lambda^M(t,x)-\psi(t,x) \right)$,
	and hence $\Lambda^M \ge \psi$ on ${\mathcal{U}}_T$. Let $(t_k,x_k)$ be a sequence in ${\mathcal{U}}_T$ such that 
	$ \lim\limits_{k\rightarrow \infty} (t_k,x_k) = (t_0,x_0)$, 
	and define the sequence $\{ \varphi_k \}$ as $\varphi_k: = \Lambda^M(t_k,x_k) - \psi(t_k,x_k)$. 
	From the continuity of $\Lambda^M$ (Proposition~\ref{prop:cts}), we have $\lim\limits_{k\rightarrow \infty} \Lambda^M(t_k,x_k) = \Lambda^M(t_0,x_0)$, so $\lim\limits_{k\rightarrow \infty} \varphi_k =0 $.\\
	\indent Consider a given control ${\nu} \in \Theta^{t,M}$, denote the filter process (the solution to \eqref{eq:filter_measure_nu}) with the initial state $\pi_{t_k}^k = x_k$ by $\pi^k$, and define stopping times $\tau_k$  as
	\begin{align}\label{eq:betak}
		\tau_{k}: = \inf\{s>t_k: (s, \pi^k_s)\not\in[t_k,t_k+\beta_k) \times(x_k-\epsilon_0, x_k+\epsilon_0 )\cap \mathcal{U}_T  \}, 
	\end{align}
	for a given constant $\epsilon_0\in(0,1/2)$ and $\beta_k: = \sqrt{\varphi_k}\mathds{1}_{\varphi_k\neq0} + k^{-1}\mathds{1}_{\varphi_k= 0}$, so $\lim\limits_{k\rightarrow \infty} \tau_k = 0$. Using Theorem~\ref{thm:DPPfeller}, we obtain
	\begin{align*}
		\Lambda^M(t_k,x_k) 
		&\ge \widetilde{\mathbb{E}}^{t_k,x_k,\nu} \Big[ e^{\int_{t_k}^{\tau_k}\Gamma(\pi_u^{k},\nu_u)du}\psi(\tau_k,\pi^{k}_{\tau_k}) +\int_{t}^{\tau_k} e^{\int_{t}^{s}\Gamma(\pi_u^{k},\nu_u)du} ds \Big],
	\end{align*}
	and hence by the definition of $\varphi_k$, 
	\begin{align}\label{thm5.10pf1}
		\varphi_k\ge \widetilde{\mathbb{E}}^{t_k,x_k,\nu} \Big[ \zeta^k(\tau_k)\psi(\tau_k,\pi^{k}_{\tau_k}) - \psi(t_k,x_k) +\int_{t}^{\tau_k} \zeta^k(s) ds \Big],
	\end{align}
	where $\zeta^k(s):= \exp\{\int_{t_k}^{s}\Gamma(\pi_u^{k},\nu_u)du\}$. Applying It\^{o}'s lemma to $\zeta^k\psi$, we have 
	\begin{align*}
		&\zeta^k(\tau_k)\psi(\tau_k,\pi^{k}_{\tau_k})\\
		&\quad = \psi(t_k,x_k) + \int_{t_k}^{\tau_k} \Gamma(\pi_u^{k},\nu_u)\zeta^k(u) \psi(u,\pi^k_u)+ \zeta^k(u)\{ (\mathcal{L}^{\nu_u}+\partial_t)\psi(u,\pi^k_u)  \} du\\
		&\qquad+\int_{t_k}^{\tau_k}  \zeta^k(u)\overline{\sigma}(\pi_u^k)\partial_{x}\psi(u,\pi^k_u) dW^{\beta}_u\\
		&\qquad +\int_{t_k}^{\tau_k}  \zeta^k(u) \int_{\mathcal{Z}} \Big(\psi(u,\xi(\pi_{u-}^k,z)) - \psi(u, \pi_{u-}^k) \Big)\widetilde{N}^{\beta}(du,dz).
	\end{align*}
	By assumption, the last two terms are martingales under $\widetilde{\mathbb{P}}^{t_k,x_k,\nu}$. Thus,
	\begin{align*}
		&\widetilde{\mathbb{E}}^{t_k,x_k,\nu} \left[ \zeta^k(\tau_k)\psi(\tau_k,\pi^{k}_{\tau_k}) \right] = \psi(t_k,x_k) \\
		&\quad + \widetilde{\mathbb{E}}^{t_k,x_k,\nu}  \Big[\int_{t_k}^{\tau_k} \Gamma(\pi_u^{k},\nu_u)\zeta^k(u) \psi(u,\pi^k_u)+ \zeta^k(u)\{ (\mathcal{L}^{\nu_u}+\partial_t)\psi(u,\pi^k_u)  \} du \Big].
	\end{align*}
	Recalling \eqref{thm5.10pf1}, we obtain
	\begin{align}\label{thm5.10pf2}
		\varphi_k \ge \widetilde{\mathbb{E}}^{t_k,x_k,\nu}  \Big[\int_{t_k}^{\tau_k} \zeta^k(u)\{ (\mathcal{L}^{\nu_u}+\partial_t)\psi(u,\pi^k_u) +1+\Gamma(\pi_u^{k},\nu_u) \psi(u,\pi^k_u) \} du  \Big].
	\end{align}
	We now aim to let $k\rightarrow \infty$, but we cannot directly apply the mean-value theorem as $ u \rightarrow g(u, \pi^k_u,\nu_u)$ is not continuous for function $g$ in general. We first show that the last term of the r.h.s of \eqref{thm5.10pf2} satisfies the following estimates, 
	\begin{align}
		&\widetilde{\mathbb{E}}^{t_k,x_k,\nu} \Big[\int_{t_k}^{\tau_k} \zeta^k(u)\Gamma(\pi_u^{k},\nu_u) \psi(u,\pi^k_u)du
		\Big]\nonumber \\
		&\quad \ge \widetilde{\mathbb{E}}^{t_k,x_k,\nu} \Big[\int_{t_k}^{\tau_k} \Gamma(x_k,\nu_u) \psi(t_k,x_k) du  \Big] -\beta_k \epsilon(\beta_k), \label{eq:thm5stp1eq1}
	\end{align}
	for some $\epsilon(\beta_k)\rightarrow 0$ as $\beta_k \rightarrow 0$, where $\beta_k$ is defined in \eqref{eq:betak}. By choosing a sufficiently small $\epsilon_0$, and from the local Lipschitz continuity of bounded continuous function $\psi\in C^{1,2}(\overline{\mathcal{U}}_T)$, we have
	\begin{align*}
		|\psi(u,\pi_u^k) - \psi(t_k,x_k)| \le C_{\epsilon_0}\left(|u-t_k|+|\pi_u^k-x_k| \right),\quad \forall u\in [t_k,\tau_k].
	\end{align*}
	In addition, $\Gamma(x,\nu)$ is bounded and Lipschitz in $x$ for $\nu \in \Theta^{t,M}$ and therefore 
	\begin{align*}
		&|\Gamma(\pi^k_u,\nu_u) - \Gamma(x_k,\nu_u)| \le C |\pi_u^k - x_k|,\\
		& |\zeta^k(u) - 1| \le C |u-t_k|\sup\limits_{t_k \le s \le u}|\pi_s^k - x_k|.
	\end{align*}
	We denote by $	\| \psi \|$ the uniform norm of the function $\psi$, and we have
	\begin{align*}
		&\widetilde{\mathbb{E}}^{t_k,x_k,\nu}\Big[\int_{t_k}^{\tau_k} \zeta^k(u)\Gamma(\pi_u^{k},\nu_u) \psi(u,\pi^k_u) du
		\Big]\\
		&\quad  \ge \widetilde{\mathbb{E}}^{t_k,x_k,\nu}\Big[\int_{t_k}^{\tau_k} \Gamma(\pi^k_u,\nu_u) \zeta^k(u) \psi(t_k,x_k) du \Big]\\
		&\qquad-C_{\epsilon_0}\beta_k\left\{ \beta_k +  \widetilde{\mathbb{E}}^{t_k,x_k,\nu}\Big[ \sup\limits_{t_k \le u \le \tau_k}|\pi_u^{k} - x_k| \Big] \right\}\\
		&\quad \ge \widetilde{\mathbb{E}}^{t_k,x_k,\nu}\Big[\int_{t_k}^{\tau_k} \Gamma(x_k,\nu_u)  \psi(t_k,x_k) du \Big] \\ 
		& \qquad -C_{\epsilon_0}'	\| \psi \|\beta_k\left\{ \beta_k +  C'	\| \psi \|\widetilde{\mathbb{E}}^{t_k,x_k,\nu}\Big[ \sup\limits_{t_k \le u \le \tau_k}|\pi_u^{k} - x_k| \Big] \right\}.
	\end{align*}
	Together with \eqref{eq:prop5eq2} in Proposition~\ref{prop:pigrowth} gives \eqref{eq:thm5stp1eq1}.
	By using the continuity of $\partial_t \psi$, $\partial_{x}\psi$, and $\partial_{xx}\psi$, as well as \eqref{cond:1} in Lemma~\ref{lem:2}, similar arguments give
	\begin{align*}
		&\widetilde{\mathbb{E}}^{t_k,x_k,\nu} \Big[\int_{t_k}^{\tau_k} \zeta^k(u)\{\partial_t + \overline{\mu}(\pi^k_u)\partial_x+ \frac{1}{2}\overline{\sigma}^2(\pi^k_u)\partial_{xx}\}\psi(u,\pi^k_u) du  \Big] \\
		&\quad \ge \widetilde{\mathbb{E}}^{t_k,x_k,\nu} \Big[\int_{t_k}^{\tau_k} \{\partial_t + \overline{\mu}(x_k)\partial_x+ \frac{1}{2}\overline{\sigma}^2(x_k)\partial_{xx}\}\psi(t_k,x_k) du  \Big] -\beta_k \epsilon(\beta_k).
	\end{align*}
	Next, using \eqref{cond:1}, we have
	\begin{align*}
		&|	\psi(u,\xi(\pi^k_{u-},z)) - \psi(t_k, \xi(x_k,z)) | \\
		&\quad \le C_{\epsilon_0}\Big(|u-t_k| + | \xi(\pi^k_{u-},z) - \xi(x_k,z))  | \Big)\\
		&\quad \le C_{\epsilon_0}\Big(|u-t_k| + (\rho(z)+1)|\pi^k_{u-} - x_k|  \Big), ~\forall u\in [t_k,\tau_k].
	\end{align*}
	Note that $C_{\rho}: = \int_{\mathcal{Z}} \rho(z) (f_1(z) +f_2(z)) dz<\infty $ under \ref{cond:BLR}, therefore
	\begin{align*}
		&\widetilde{\mathbb{E}}^{t_k,x_k,\nu} \Big[ \int_{t_k}^{\tau_k}\zeta^k(u)\lambda\int_{\mathcal{Z}} \left\{\psi(u,\xi(\pi^k_{u-},z)) - \psi(u,\pi^k_{u-})\right\} \hat{f}(\pi^k_{u-},z)e^{\beta\nu_u} dz  du \Big] \\
		&\quad \ge \widetilde{\mathbb{E}}^{t_k,x_k,\nu} \Big[ \int_{t_k}^{\tau_k}\lambda\int_{\mathcal{Z}} \left\{\psi(t_k,\xi(x_k,z)) - \psi(t_k,x_k)\right\} \hat{f}(x_k,z)e^{\beta\nu_u} dz  du \Big] \\
		&\qquad - \beta_kC\bigg(\beta_k + C_{\rho}\widetilde{\mathbb{E}}^{t_k,x_k,\nu}\Big[ \sup\limits_{t_k \le u \le \tau_k}|\pi_u^{k} - x_k| \Big]  \bigg).
	\end{align*}
	By substituting these estimates back into \eqref{thm5.10pf2}, we have
	\begin{align*}
		\frac{\varphi_k}{\beta_k} \ge &\frac{1}{\beta_k}\widetilde{\mathbb{E}}^{t_k,x_k,\nu}  \Big[ \int_{t_k}^{\tau_k} \{{\partial_t}+\overline{\mu}(x_k)\partial_x +\frac{1}{2} \overline{\sigma}(x_k)^2\partial_{xx}\} \psi(t_k,x_k)+1\\
		&+H_{\psi}(t_k,x_k,\nu_u) du  \Big]- \epsilon(\beta_k).
	\end{align*} 
	Finally, we set $k\rightarrow \infty$, $t_k \rightarrow t_0$, ${\varphi_k}/{\beta_k}  \rightarrow 0$, $\epsilon(\beta_k) \rightarrow 0$, and from the mean-value theorem, the bounded convergence theorem, and when replacing $\nu$ by a constant strategy, we have
	\begin{align*}
		\{\partial_t+\overline{\mu}(x_0)\partial_x +\frac{1}{2} \overline{\sigma}(x_0)^2\partial_{xx}\} \psi(t_0,x_0)+1 +H_{\psi}(t_0,x_0,\nu) \le 0. 
	\end{align*}
	As $\nu$ is arbitrary, we obtain the supersolution viscosity inequality. 
	\begin{align*}
		\{\partial_t+\overline{\mu}(x_0)\partial_x +\frac{1}{2} \overline{\sigma}(x_0)^2\partial_{xx}\} \psi(t_0,x_0)+1 +\max\limits_{\nu\in [-M,M]}H_{\psi}(t_0,x_0,\nu) \le 0.
	\end{align*}
	{\it Step 2: Viscosity subsolution.}  Let $(t_0,x_0)\in{\mathcal{U}}_T$ and $\psi\in C^{1,2}(\overline{\mathcal{U}}_T)$ such that
	$
	0 = (\Lambda^M-\psi)(t_0,x_0) = \max_{(t,x)\in{\mathcal{U}}_T}\left(  \Lambda^M(t,x)-\psi(t,x) \right)$, 
	and thus $\Lambda^M \le \psi$ on ${\mathcal{U}}_T$.
	We aim to establish the subsolution viscosity inequality in $(t_0,x_0)$. We argue by contradiction and assume that there is $\ell>0$, such that 
	\begin{align*}
		\{\partial_t+ \overline{\mu}(x_0)\partial_x +\frac{1}{2} \overline{\sigma}(x_0)^2\partial_{xx}\} \psi(t_0,x_0)+1 +\max\limits_{\nu\in [-M,M]} H_{\psi}(t_0,x_0,\nu) < -\ell < 0.
	\end{align*}
	As $\mathcal{L}^{\nu} \psi$ is continuous, there exists an open set $\mathcal{N}_{\epsilon_0}$ surrounding $(t_0,x_0)$ defined for $\epsilon_0\in(0,1/2)$ as
	$$\mathcal{N}_{\epsilon_0}:=\{ (t,x): (t,x)\in(t_0-\epsilon_0,t_0+\epsilon_0)\times(x_0 - \epsilon_0,x_0+\epsilon_0)\cap \mathcal{U}_T \} $$  
	and such that for $x\in\mathcal{N}_{\epsilon_0}$,
	\begin{align*}
		\{\partial_t+ \overline{\mu}(x)\partial_x +\frac{1}{2} \overline{\sigma}(x)^2\partial_{xx}\} \psi(t_0,x_0)+1 +\max\limits_{\nu\in [-M,M]} H_{\psi}(t,x,\nu) < -\frac{\ell}{2}. 
	\end{align*}
	We let $\iota>0$ be such that 
	\begin{align*}
		&\max\limits_{(t,x)\in{\mathcal{U}}_T\backslash \mathcal{N}_{\epsilon_0}} (\Lambda^M - \psi)(t,x) \le -\iota e^{-\epsilon_0 C_{\Gamma}}<0,
	\end{align*}
	where $C_{\Gamma}:=\max\limits_{x\in[0,1], \nu\in[-M,M]}(-\Gamma(x,\nu),0) < \infty$ by the boundedness of $\Gamma$.\\ 
	\indent Let $(t_k,x_k)$ be a sequence in $\mathcal{N}_{\epsilon_0}$ such that 
	$$ \lim\limits_{k\rightarrow \infty} (t_k,x_k) = (t_0,x_0),$$
	and define the sequence $\{ \varphi_k \}$ as $\varphi_k: = \Lambda^M(t_k,x_k) - \psi(t_k,x_k)$. By continuity of $\Lambda^M$ and $\psi$, we have $\lim\limits_{k\rightarrow \infty} \varphi_k = 0$. For all $k \ge 1$ and $\epsilon_k >0$ with $\lim\limits_{k\rightarrow \infty} \epsilon_k = 0$,  consider the $\epsilon_k$-optimal control $\nu^{*,k}$, such that 
	\begin{align}\label{thm5.10pf3}
		\Lambda^M(t_k,x_k) \le \Lambda(t_k,x_k,\nu^{*,k}) + \epsilon_k.
	\end{align}
	Denote the filter process (the solution to \eqref{eq:filter_measure_nu}) by $\widetilde{\pi}^k$, with the initial state given by $\widetilde{\pi}_{t_k}^k = x_k$ and the control given by $\nu = \nu^{*,k}$, and we define the stopping time 
	\begin{align}
		\tau_{k}: = \inf\{s>t_k: (s, \widetilde{\pi}^k_s)\not\in \mathcal{N}_{\epsilon_0} \}.
	\end{align}
	By definition, we have $\Lambda^M(\tau_{k},\widetilde{\pi}^k_{	\tau_{k}}) -\psi(\tau_{k},\widetilde{\pi}^k_{	\tau_{k}}) \le -\iota e^{-\epsilon_0 C_{\Gamma}} $.\\
	\indent Let $\widetilde{\zeta}^k(s):= \exp\{\int_{t_k}^{s}\Gamma(\widetilde{\pi}_u^{k},\nu^{*,k}_u)du\}$, we have
	\begin{align*}
		&\widetilde{\zeta}^k(\tau_k) \Lambda^M(\tau_{k},\widetilde{\pi}^k_{	\tau_{k}}) + \int_{t_k}^{\tau_k}\widetilde{\zeta}^k(s)ds - \Lambda^M(t_k,x_k)\\
		&\quad \le \widetilde{\zeta}^k(\tau_k) \psi(\tau_{k},\widetilde{\pi}^k_{	\tau_{k}}) + \int_{t_k}^{\tau_k}\widetilde{\zeta}^k(s)ds - \psi(t_k,x_k)  -\iota e^{-\epsilon_0 C_{\Gamma}} \widetilde{\zeta}^k(\tau_k) - \varphi_k\\
		&\quad \le \int_{t_k}^{\tau_k}  \widetilde{\zeta}^k(u)\Big((\partial_t+\mathcal{L}^{\nu^{*,k}_u})\psi(u,\widetilde{\pi}^k_u)+1\Big)du-\iota - \varphi_k.
	\end{align*}
	From the above calculations, we have 
	\begin{align*}
		&\widetilde{\mathbb{E}}^{t_k,x_k,\nu^{*,k}}\Big[  \widetilde{\zeta}^k(\tau_k) \Lambda^M(\tau_{k},\widetilde{\pi}^k_{	\tau_{k}}) + \int_{t_k}^{\tau_k}\widetilde{\zeta}^k(s)ds  \Big]\\	
		&\qquad \le \Lambda^M(t_k,x_k) -\iota - \varphi_k - \frac{\ell}{2} \widetilde{\mathbb{E}}^{t_k,x_k,\nu^{*,k}}[\tau_k - t_k].
	\end{align*}
	However, from the optimality of $\Lambda^M$ and \eqref{thm5.10pf3} we have
	\begin{align*}
		\widetilde{\mathbb{E}}^{t_k,x_k,\nu^{*,k}}\Big[  \widetilde{\zeta}^k(\tau_k) \Lambda^M(\tau_{k},\widetilde{\pi}^k_{	\tau_{k}}) + \int_{t_k}^{\tau_k}\widetilde{\zeta}^k(s)ds  \Big]	 \ge \Lambda^M(t_k,x_k)-\epsilon_k.
	\end{align*}
	By selecting $\epsilon_k = \varphi_k$, we have 
	$\Lambda^M(t_k,x_k)\le \Lambda^M(t_k,x_k) - \iota$, which is a contradiction, and therefore we have shown the subsolution inequality. 
\end{proof}
\subsection{ $\Lambda^M$ is a classical solution of HJB \eqref{eq:HJBPDE}.}\label{sec:PDE}
\indent We introduce an alternative definition of the viscosity solution first suggested by \cite{pham1998optimal} and formalized in various contexts as in \cite{barles2008second,davis2010impulse,seydel2010general}, and show that this alternative definition is equivalent to Definition~\ref{def:viscosity1}.
\begin{definition}[viscosity solution (test functions in the local terms only).]\label{def:viscosity2} A bounded function $g \in C(\overline{\mathcal{U}}_T)$ is a viscosity supersolution (subsolution) of equation \eqref{eq:constrainted_HJBPIDE1} if, for any bounded test function $\psi\in C^{1,2}(\overline{\mathcal{U}}_T)$ such that $(t_0,x_0)\in {\mathcal{U}}_T$ is a global minimum (maximum) point of $g-\psi$ with $g(t_0,x_0)=\psi(t_0,x_0)$, we have
	\begin{align*}
		&(-\partial_t-\overline{\mu}(x_0)\partial_x-\frac{1}{2} \overline{\sigma}(x_0)^2\partial_{xx})\psi(t_0,x_0)-\max\limits_{\nu\in [-M,M]}  H_{g}(t_0,x_0,\nu) \geq 1 	\text{(resp.$\leq 1$)},\\
		&\text{ where }H_{g}(t,x,\nu): = \Gamma(x,\nu) g(t, x)+\int_{\mathcal{Z}}\left\{g(t,\xi(x, z))-g(t,x)\right\}\lambda e^{\beta\nu}\hat{f}(x,z) d z.
	\end{align*}
	A bounded function $g$  is a viscosity
	solution of \eqref{eq:constrainted_HJBPIDE1} if it is both a viscosity subsolution and supersolution of  \eqref{eq:constrainted_HJBPIDE1}.
\end{definition}
\begin{proposition}\label{prop:visco_def_eq}
	Definitions~\ref{def:viscosity1} and \ref{def:viscosity2} of viscosity solutions are 
	equivalent.
\end{proposition}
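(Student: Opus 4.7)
My plan is to prove the two inclusions separately. One direction (Def.~\ref{def:viscosity2} $\Rightarrow$ Def.~\ref{def:viscosity1}) is a direct monotonicity comparison that exploits the fact that $g$ and the test function touch at the test point and that the integration weight in the nonlocal term is nonnegative. The other direction (Def.~\ref{def:viscosity1} $\Rightarrow$ Def.~\ref{def:viscosity2}) is harder, because the pointwise inequality runs the wrong way; I handle it by approximating the Def.~\ref{def:viscosity2} test function by Def.~\ref{def:viscosity1} test functions that coincide with $g$ far from the test point and then pass to the limit.

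For the easy direction (supersolution case), let $\psi$ be a Def.~\ref{def:viscosity1} test function at $(t_0,x_0)$, so $\psi\le g$ globally with $\psi(t_0,x_0)=g(t_0,x_0)$. Since $\lambda e^{\beta\nu}\hat f(x_0,z)\ge 0$ and $\psi(t_0,\xi(x_0,z))\le g(t_0,\xi(x_0,z))$, the $\Gamma$-contributions to $H_\psi$ and $H_g$ at $(t_0,x_0)$ cancel and one obtains
\[
H_g(t_0,x_0,\nu)-H_\psi(t_0,x_0,\nu)=\int_{\mathcal{Z}}\bigl[g(t_0,\xi(x_0,z))-\psi(t_0,\xi(x_0,z))\bigr]\lambda e^{\beta\nu}\hat f(x_0,z)\,dz\ \ge\ 0
\]
for every $\nu\in[-M,M]$, hence $-\max_\nu H_\psi\ge -\max_\nu H_g$. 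Plugging this into Def.~\ref{def:viscosity2} gives the Def.~\ref{def:viscosity1} supersolution inequality. The subsolution case is symmetric: at a global maximum of $g-\psi$ one has $\psi\ge g$, so the integrand reverses sign and $H_\psi\ge H_g$, which transfers the Def.~\ref{def:viscosity2} subsolution inequality to Def.~\ref{def:viscosity1}.

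For the harder direction (supersolution case; subsolution is symmetric), given a Def.~\ref{def:viscosity2} test function $\psi$ with $\psi\le g$ and $\psi(t_0,x_0)=g(t_0,x_0)$, I build a family $\psi_{\delta,\eta}\in C^{1,2}(\overline{\mathcal U}_T)$ that agrees with $\psi$ on a small ball $B_{\delta/2}$ around $(t_0,x_0)$, stays $\le g$ globally, and converges pointwise to $g$ on $\overline{\mathcal U}_T\setminus\{(t_0,x_0)\}$ as $\delta,\eta\to 0$. Concretely, using the Lipschitz continuity of $g$ from Theorem~\ref{thm:lipctsx}, I construct a smooth below-envelope $g_\eta\le g$ (via inf-convolution followed by standard mollification, with a vanishing downward shift to restore $g_\eta\le g$ after mollification), and glue $\psi$ with $g_\eta$ through a smooth cutoff $\chi_\delta$ supported in a ball $B_\delta$: $\psi_{\delta,\eta}:=\chi_\delta\,\psi+(1-\chi_\delta)\,g_\eta$. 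By construction, the local derivatives of $\psi_{\delta,\eta}$ at $(t_0,x_0)$ coincide with those of $\psi$, so applying Def.~\ref{def:viscosity1} to $\psi_{\delta,\eta}$ retains the original local part while the nonlocal part $H_{\psi_{\delta,\eta}}(t_0,x_0,\nu)$ converges, by dominated convergence with dominating weight $\lambda e^{\beta M}\hat f(x_0,\cdot)$ and the uniform bound on $g,\psi$, to $H_g(t_0,x_0,\nu)$. The convergence is uniform in $\nu\in[-M,M]$ by compactness of the control set and continuity of $\nu\mapsto e^{\beta\nu}$, so $\max_\nu H_{\psi_{\delta,\eta}}\to\max_\nu H_g$ and the Def.~\ref{def:viscosity2} inequality follows in the limit.

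The main obstacle is exactly the one-sided smooth approximation of the merely continuous solution $g$: mollification alone does not preserve the one-sided bound $g_\eta\le g$, so inf-convolution must be used first, and the subsequent mollification error must be quantitatively controlled—this is where Lipschitz regularity from Theorem~\ref{thm:lipctsx} is essential, since it makes the mollification error $O(\eta)$ and allows a vanishing shift to restore $g_\eta\le g$ while keeping $g_\eta(t_0,x_0)\to g(t_0,x_0)$. Once this smooth one-sided approximation is in hand, the cutoff glueing, the matching of derivatives at $(t_0,x_0)$, and the limiting arguments follow the scheme used for PIDE viscosity solutions in \cite{barles2008second,pham1998optimal}.
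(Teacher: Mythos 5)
Your proof is correct and follows essentially the same route as the paper: the easy inclusion (Definition~\ref{def:viscosity2} $\Rightarrow$ Definition~\ref{def:viscosity1}) by monotonicity of the nonlocal term at the touching point, and the hard inclusion by gluing the test function near $(t_0,x_0)$ with a smooth one-sided approximation of $g$ away from it and passing to the limit in $H_{\psi_{\delta,\eta}}(t_0,x_0,\nu)$ via dominated convergence and compactness of $[-M,M]$ --- your inf-convolution/mollification/cutoff construction simply makes explicit the approximating sequence $\varphi_k$ that the paper only asserts. One minor remark: invoking the Lipschitz regularity of Theorem~\ref{thm:lipctsx} is not needed (and is not available for an arbitrary bounded continuous $g$ in the definitions); uniform continuity of $g$ on the compact domain $\overline{\mathcal{U}}_T$ already makes the one-sided approximation error vanish.
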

The proof is placed in Appendix~\ref{A.proof_Prop1}.\\
Together with Theorem~\ref{thm:visc}, we immediately conclude the following corollary corresponding to Step 4 in proof of Theorem~\ref{thm:main} in Sect.~\ref{sec:4.1}.
\begin{corollary}\label{cor:visco_pde}
	The function ${\Lambda}^M$ is a viscosity solution of PDE~\eqref{eq:HJBPDE}. 
\end{corollary}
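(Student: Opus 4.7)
The plan is to derive Corollary~\ref{cor:visco_pde} as an essentially direct consequence of Theorem~\ref{thm:visc} and Proposition~\ref{prop:visco_def_eq}, once the pointwise optimization in $\nu$ inside \eqref{eq:constrainted_HJBPIDE1} is carried out explicitly and recognized as the nonlocal operator $\mathcal{I}_\beta$ appearing in \eqref{eq:HJBPDE}. First, I would invoke Theorem~\ref{thm:visc} to obtain that $\Lambda^M$ is a viscosity solution of the PIDE \eqref{eq:constrainted_HJBPIDE1} in the sense of Definition~\ref{def:viscosity1}, and then use Proposition~\ref{prop:visco_def_eq} to pass freely to the equivalent Definition~\ref{def:viscosity2}. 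The benefit of the second formulation is crucial: the test function $\psi$ replaces the solution only inside the local differential operator, while the nonlocal integral keeps $\Lambda^M$ itself inside. This is exactly the format in which $\mathcal{I}_\beta[\Lambda^M]$ enters the PDE \eqref{eq:HJBPDE} as a prescribed autonomous source.

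Next, I would optimize $\nu\mapsto H_{\Lambda^M}(t_0,x_0,\nu)$ pointwise in $(t_0,x_0)\in\mathcal{U}_T$. Differentiating in the $z$-indexed variable $\nu(z)$ gives a unique critical point
$$\nu^*(z)=\frac{1}{1-\beta}\ln\!\left[\frac{\Lambda^M(t_0,\xi(x_0,z))}{\Lambda^M(t_0,x_0)}\right],$$
which is precisely the Markov policy \eqref{eq:optimal_nu_2}. Proposition~\ref{prop:Lambda_bound} together with the choice \eqref{eq:Mvalue} of $M$ ensures $\nu^*\in[-M,M]$, so the constrained optimum coincides with the unconstrained one. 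Substituting $\nu^*$ back and using the first-order relation $e^{\beta\nu^*}\Lambda^M(t_0,\xi(x_0,z))=e^{\nu^*}\Lambda^M(t_0,x_0)$, a direct algebraic simplification of $\Gamma(x_0,\nu^*)\Lambda^M(t_0,x_0)$ plus the jump integral collapses to
$$\max_{\nu\in[-M,M]}H_{\Lambda^M}(t_0,x_0,\nu)=-d_0(x_0)\Lambda^M(t_0,x_0)+\mathcal{I}_\beta[\Lambda^M](t_0,x_0),$$
with the same identity holding (with $\min$) in the case $0<\kappa<1$.

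To conclude, for any test function $\psi\in C^{1,2}(\overline{\mathcal{U}}_T)$ touching $\Lambda^M$ from below at $(t_0,x_0)$, the touching condition $\psi(t_0,x_0)=\Lambda^M(t_0,x_0)$ allows replacing $-d_0(x_0)\Lambda^M(t_0,x_0)$ by $-d_0(x_0)\psi(t_0,x_0)$ in the identity above. Plugging this into the Definition~\ref{def:viscosity2} supersolution inequality for \eqref{eq:constrainted_HJBPIDE1} yields
$$\partial_t\psi(t_0,x_0)+\overline{\mu}(x_0)\partial_x\psi(t_0,x_0)+\tfrac{1}{2}\overline{\sigma}(x_0)^2\partial_{xx}\psi(t_0,x_0)-d_0(x_0)\psi(t_0,x_0)+\mathcal{I}_\beta[\Lambda^M](t_0,x_0)+1\le 0,$$
which is exactly the viscosity supersolution inequality for the PDE \eqref{eq:HJBPDE}, in which $\mathcal{I}_\beta[\Lambda^M](t_0,x_0)$ is now a fixed real number independent of $\psi$. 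The subsolution case is fully symmetric.

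The main conceptual obstacle, namely the equivalence of the two viscosity-solution formulations, has already been absorbed into Proposition~\ref{prop:visco_def_eq}. Beyond that, the only technical point requiring care is the confirmation that the unconstrained pointwise optimizer $\nu^*$ remains inside $[-M,M]$; this is guaranteed by the a priori bounds $C_\ell\le\Lambda^M\le C_u$ from Proposition~\ref{prop:Lambda_bound} together with the selection of $M$ in \eqref{eq:Mvalue}. Everything else is an explicit plug-in verification, which is why the corollary follows ``immediately'' from Proposition~\ref{prop:visco_def_eq} and Theorem~\ref{thm:visc}.
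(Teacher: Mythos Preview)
Your proposal is correct and follows essentially the same route as the paper: invoke Theorem~\ref{thm:visc}, pass to Definition~\ref{def:viscosity2} via Proposition~\ref{prop:visco_def_eq}, and then identify $\max_{\nu\in[-M,M]}H_{\Lambda^M}=-d_0\,\Lambda^M+\mathcal{I}_\beta[\Lambda^M]$ using the bounds from Proposition~\ref{prop:Lambda_bound} and the choice of $M$ in \eqref{eq:Mvalue}. You have in fact spelled out one detail the paper leaves implicit, namely that the touching condition $\psi(t_0,x_0)=\Lambda^M(t_0,x_0)$ is what allows the zeroth-order term $-d_0(x_0)\Lambda^M(t_0,x_0)$ to be rewritten as $-d_0(x_0)\psi(t_0,x_0)$, so that the resulting inequality is exactly the viscosity inequality for the \emph{linear} PDE~\eqref{eq:HJBPDE}.
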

Given the results above, we formally define the functional $\mathcal{I}_{\beta}[g]$:
\begin{align*}
	\mathcal{I}_{\beta}[g](t,x): = (1-\beta)\lambda\int_{\mathcal{Z}} {g(t,x)}^{\frac{\beta}{\beta-1}} \left[{g(t,\xi(x,z))}^{\frac{1}{1-\beta}}-{g(t,x)^\frac{1}{1-\beta}}\right]\hat{f}(x,z)dz.
\end{align*}
Under \ref{cond:BLR}, $\mathcal{I}_{\beta}[g]$ is well defined for the bounded function $g$.  We observe that for $M$ sufficiently large (as \eqref{eq:Mvalue} in Lemma~\ref{lem:main}), 	$$\mathcal{I}_{\beta}[\Lambda^M](t,x) = \max\limits_{\nu\in [-M,M]} H_{\Lambda^M}(t,x,\nu) + d_0(x).$$
Thus, we rewrite the HJB PIDE \eqref{eq:constrainted_HJBPIDE1} as the equivalent parabolic PDE \eqref{eq:HJBPDE}, as stated in Step 3 in Sect.~\ref{sec:4.1}. We provide the following Lemma on $\mathcal{I}_{\beta}$, which will be used when we prove the uniqueness and existence of the classical solution to the PDE \eqref{eq:HJBPDE}, the proof is placed in Appendix~\ref{A.proof_Prop1}.
\begin{lemma}\label{lem:3}
	The functional $\mathcal{I}_{\beta}[{\Lambda}^M](t,x)$ is bounded and Lipschitz continuous in $x$ on $\overline{\mathcal{U}}_T$, therefore it is H\"{o}lder continuous in $x$ with some exponent $0<\iota<1$.
\end{lemma}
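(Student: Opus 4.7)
The plan is to exploit three ingredients already established: the two-sided bounds $C_\ell\le\hat{\Lambda}\le C_u$ (so $\Lambda^M\in[C_\ell,C_u]$) from Proposition~\ref{prop:Lambda_bound}, the $x$-Lipschitz continuity of $\Lambda^M$ from Theorem~\ref{thm:lipctsx}, and the local Lipschitz bound $|\xi(x,z)-\xi(y,z)|\le\rho(z)|x-y|$ together with $\int_\mathcal{Z}\rho(z)^2 f_1(z)\,dz<\infty$ from Lemma~\ref{lem:2}. The first step is an algebraic simplification using the identity $\tfrac{\beta}{\beta-1}+\tfrac{1}{1-\beta}=1$. Since $\int_\mathcal{Z}\hat f(x,z)\,dz=x+(1-x)=1$, the definition of $\mathcal{I}_\beta$ rearranges as
\begin{align*}
\mathcal{I}_\beta[\Lambda^M](t,x)=(1-\beta)\lambda\bigl[F(t,x)-\Lambda^M(t,x)\bigr],\quad F(t,x):=\int_\mathcal{Z}\Lambda^M(t,x)^{\frac{\beta}{\beta-1}}\Lambda^M(t,\xi(x,z))^{\frac{1}{1-\beta}}\hat f(x,z)\,dz.
\end{align*}
Boundedness is then immediate: both power expressions are continuous functions of $\Lambda^M$ restricted to the compact set $[C_\ell,C_u]\subset(0,\infty)$, so the integrand in $F$ is bounded by a constant times $\hat f(x,z)$, whose integral over $\mathcal{Z}$ equals one.

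For $x$-Lipschitz continuity, I would decompose $F(t,x)-F(t,x')$ into three pieces by successively replacing $x$ with $x'$ in (i) the outer factor $\Lambda^M(t,x)^{\beta/(\beta-1)}$, (ii) the composite factor $\Lambda^M(t,\xi(x,z))^{1/(1-\beta)}$, and (iii) the density $\hat f(x,z)$. Piece (i) is controlled by the local Lipschitz constant of $r\mapsto r^{\beta/(\beta-1)}$ on $[C_\ell,C_u]$ applied to Theorem~\ref{thm:lipctsx}, times the uniform bound $C_u^{1/(1-\beta)}$ on the remaining integrand. Piece (ii) uses in addition the chain rule: $|\Lambda^M(t,\xi(x,z))-\Lambda^M(t,\xi(x',z))|\le C_{\mathrm{Lip}}\rho(z)|x-x'|$. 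Piece (iii) uses the linearity $\hat f(x,z)-\hat f(x',z)=(x-x')(f_1(z)-f_2(z))$ together with $|f_1-f_2|\le f_1+f_2$ and $\int(f_1+f_2)\,dz=2$.

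The integrability requirements then reduce to verifying that $\int_\mathcal{Z}\rho(z)\hat f(x,z)\,dz$ is finite uniformly in $x\in[0,1]$. This is where \ref{cond:BLR} enters: Cauchy--Schwarz with $\int\rho^2 f_1\,dz<\infty$ gives $\int\rho\,f_1\,dz<\infty$, and the BLR bound $f_2\le b_{\max}f_1$ transfers this to $f_2$, so $\int\rho\,(f_1+f_2)\,dz<\infty$ and hence $\int\rho(z)\hat f(x,z)\,dz$ is bounded by a constant independent of $x$. Summing the three estimates gives a Lipschitz bound for $F$, which combined with the Lipschitz continuity of $\Lambda^M$ delivers the claim for $\mathcal{I}_\beta[\Lambda^M]$; H\"older continuity with any exponent $\iota\in(0,1)$ is automatic on the bounded state space.

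The main obstacle is precisely the uniform integrability of $\rho(z)$ against $\hat f(x,z)$ in piece (ii): the factor $\rho(z)$ inherited from the $x$-sensitivity of $\xi$ need not be integrable against an arbitrary density, and without \ref{cond:BLR} one has no control on $\int\rho\,f_2\,dz$. Everything else is standard bookkeeping of Lipschitz constants of scalar power functions on $[C_\ell,C_u]$ and of the already-established continuity results.
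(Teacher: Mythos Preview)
Your proposal is correct and follows essentially the same approach as the paper: both proofs decompose the difference $\mathcal{I}_\beta[\Lambda^M](t,x)-\mathcal{I}_\beta[\Lambda^M](t,y)$ by a product/telescoping argument and then control each piece using the two-sided bounds on $\Lambda^M$ (Proposition~\ref{prop:Lambda_bound}), its $x$-Lipschitz continuity (Theorem~\ref{thm:lipctsx}), the Lipschitz estimate on $\xi$ from Lemma~\ref{lem:2}, and \ref{cond:BLR} for the integrability of $\rho$. Your presentation is slightly more explicit than the paper's---you first separate the $-\Lambda^M$ term via $\int_{\mathcal{Z}}\hat f(x,z)\,dz=1$ and then run a three-factor telescope on $F$, whereas the paper keeps the bracketed integrand and applies a two-factor product rule $PQ-P'Q'=P(Q-Q')+Q'(P-P')$---but the ingredients and the place where \ref{cond:BLR} is invoked (to bound $\int_{\mathcal{Z}}\rho(z)\hat f(x,z)\,dz$) are identical.
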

As described in Step 5, we cite the following comparison result for the degenerate parabolic PDE to demonstrate the uniqueness of the solution to PDE \eqref{eq:HJBPDE}.
\begin{theorem}[Theorem 2 in \cite{amadori2007uniqueness}]\label{thm:unique}
	We take $u$ and $v$ as a bounded upper semicontinuous subsolution and a bounded lower semicontinuous supersolution, respectively, to \eqref{eq:HJBPDE}, subject to the terminal condition $u(T,x) = v(T,x) = 1$, $x\in[0,1]$. Then $u \le v$ on ${\mathcal{U}}_T$.
\end{theorem}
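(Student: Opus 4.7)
The plan is to prove this comparison principle via Ishii's doubling of variables technique, adapted to degenerate parabolic equations. A key simplification is that \eqref{eq:HJBPDE} is \emph{linear} in the unknown $g$: the nonlocal term $\mathcal{I}_{\beta}[\Lambda^M](t,x) + 1$ is prescribed data (determined once $\Lambda^M$ has been characterized in Step 2 of Sect.~\ref{sec:4.1}) and does not depend on $g$. Hence the same forcing appears in both the subsolution inequality for $u$ and the supersolution inequality for $v$, so the comparison effectively reduces to the homogeneous linear operator $\partial_t + \overline{\mu}(x)\partial_x + \tfrac12\overline{\sigma}(x)^2\partial_{xx} - d_0(x)$. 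An exponential rescaling $\tilde{w} := e^{-\lambda(T-t)}(u-v)$ for $\lambda > \sup_{x\in[0,1]} |d_0(x)|$ (finite since $d_0$ is bounded) removes the sign issue from the zeroth-order term, reducing the task to ruling out a positive interior maximum of $\tilde{w}$.

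To handle the non-smoothness of viscosity sub/supersolutions, suppose for contradiction that $M_0 := \sup_{\overline{\mathcal{U}}_T}(u-v) > 0$, and introduce the auxiliary function
\begin{equation*}
\Psi_\epsilon(t,s,x,y) \;=\; u(t,x) \;-\; v(s,y) \;-\; \frac{|x-y|^2 + (t-s)^2}{2\epsilon} \;-\; \frac{\eta}{T-t},
\end{equation*}
for small $\eta>0$. Since $u$ is upper semicontinuous, $v$ lower semicontinuous, and both bounded, $\Psi_\epsilon$ attains a maximum at some $(\bar t_\epsilon, \bar s_\epsilon, \bar x_\epsilon, \bar y_\epsilon)$. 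Standard penalization estimates (e.g.\ Lemma~3.1 of the Crandall--Ishii--Lions guide) give that the penalty quantities $|\bar x_\epsilon - \bar y_\epsilon|^2/\epsilon$ and $(\bar t_\epsilon - \bar s_\epsilon)^2/\epsilon$ vanish as $\epsilon \to 0$ and the maximum points cluster at a common interior time point. Applying the Crandall--Ishii maximum principle at $(\bar t_\epsilon, \bar s_\epsilon, \bar x_\epsilon, \bar y_\epsilon)$ produces matrices $X\le Y$, and the sub/supersolution inequalities combine with the Lipschitz regularity of $\overline{\mu}$ and $\overline{\sigma}^2$ (Lemma~\ref{lem:2}) to yield an estimate of the form $\eta/(T-\bar t_\epsilon)^2 \leq o_\epsilon(1) + C M_0$, producing a contradiction upon sending $\epsilon \to 0$ and then $\eta \to 0$ with $\lambda$ chosen large enough.

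The central difficulty — and the reason this result is nontrivial rather than textbook — is the \emph{degeneracy} of the diffusion coefficient $\overline{\sigma}(x) = x(1-x)(\theta_1-\theta_2)$ at the two boundary points $x\in\{0,1\}$. The remedy is to exploit the Fichera conditions built into the filter dynamics: $\overline{\mu}(0) = a_2 > 0$ and $\overline{\mu}(1) = -a_1 < 0$, so the first-order drift points strictly inward exactly where the second-order term degenerates. This is the PDE-level manifestation of Proposition~\ref{prop:filter_boundary} (boundaries are unattainable for the filter), and it implies that a positive interior maximum of $\tilde w$ cannot escape to $\{0,1\}$ in the doubling argument; equivalently, a suitable boundary penalty of the form $-\delta(x^{-\alpha} + (1-x)^{-\alpha})$ can be added with $\alpha>0$ small so that its contribution through the drift term is dominated by the sign of $\overline{\mu}$ near the boundary, while its effect on the diffusion term vanishes due to the quadratic vanishing of $\overline{\sigma}^2$.

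The main obstacle will be coordinating the three complications simultaneously: (i) the degeneracy at $\{0,1\}$, (ii) the reaction term $-d_0(x)g$ which prevents a direct maximum principle, and (iii) the parabolic (time-dependent) structure which forces both variable and time doublings. Each has a standard treatment in isolation, but the bookkeeping to combine them — especially verifying that the boundary penalty absorbs into the drift term uniformly in the penalization parameter $\epsilon$ — is delicate. In practice this is precisely the scenario covered by the general comparison theorem of \cite{amadori2007uniqueness} invoked here, whose hypotheses (Lipschitz drift and diffusion, bounded zeroth order, Fichera-type inward drift at degenerate boundary) are all verified by the coefficients of \eqref{eq:HJBPDE}.
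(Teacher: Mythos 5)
The paper does not actually prove this statement: it is imported wholesale as Theorem 2 of \cite{amadori2007uniqueness}, and the paper's only task is to verify that the hypotheses of that theorem (Assumptions 1 and 2 therein) hold for \eqref{eq:HJBPDE}, which it does by invoking Proposition~\ref{prop:filter_boundary} (unattainable boundaries), Proposition~\ref{prop:Lambda_bound} (boundedness), and Lemma~\ref{lem:2} (Lipschitz/growth of the coefficients). Your proposal instead sketches a self-contained comparison proof by doubling of variables plus a boundary penalty, i.e.\ you re-derive the machinery that the cited theorem packages. Your two structural observations are correct and are precisely what makes the citation legitimate: \eqref{eq:HJBPDE} is linear in $g$, with $\mathcal{I}_{\beta}[\Lambda^M]+1$ acting as fixed data that appears identically in the sub- and supersolution inequalities, and the Fichera-type inward drift at the degenerate boundary ($\overline{\mu}(0)=a_2>0$, $\overline{\mu}(1)=-a_1<0$, since $\overline{\sigma}$ vanishes there so the $-\beta\overline{\sigma}\hat{\theta}$ correction drops out) is the PDE-level counterpart of Proposition~\ref{prop:filter_boundary} and is what allows comparison without imposing conditions at $x\in\{0,1\}$. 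What each route buys: the paper's citation is short and rigorous modulo the hypothesis check, while your direct argument makes the analytic mechanism explicit but would need substantial bookkeeping to be complete. If you do pursue it, two points need tightening: the penalty $-\delta(x^{-\alpha}+(1-x)^{-\alpha})$ does not make the second-order contribution vanish near the boundary --- it is of order $x^{-\alpha}$ (since $\overline{\sigma}^2\sim x^2$) and is merely dominated by the drift contribution of order $x^{-\alpha-1}$, so the sign bookkeeping must be done on the sum; and the concluding estimate $\eta/(T-\bar t_\epsilon)^2 \le o_\epsilon(1)+CM_0$ is not by itself a contradiction --- it only becomes one after the exponential rescaling is carried through the doubling argument so that the inequality reads $\lambda M_0 \le C M_0 + o_\epsilon(1)$ with $\lambda>C$, rather than being invoked separately at the start.
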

By the virtue of Proposition~\ref{prop:filter_boundary} (the boundaries of the filter process are unattainable), Proposition~\ref{prop:Lambda_bound} (boundedness of $\Lambda^M$), and Lemma~\ref{lem:2} (Lipschitz and growth conditions of coefficients in PDE), we straightforwardly check that the assumptions in \cite{amadori2007uniqueness} (Assumptions 1 and 2 therein) are verified for our case. 
\begin{corollary}\label{cor:unique} The value function ${\Lambda}^M$ is the unique viscosity solution of the\\ parabolic PDE \eqref{eq:HJBPDE} subject to the terminal condition ${\Lambda}^M(T,x)=1$.
\end{corollary}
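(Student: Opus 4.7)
The plan is to derive the corollary as a direct consequence of the comparison principle in Theorem~\ref{thm:unique}, after verifying that all of its structural hypotheses are met by our specific PDE \eqref{eq:HJBPDE}. First I would record that $\Lambda^M$ itself is a bounded continuous viscosity solution of \eqref{eq:HJBPDE} on $\mathcal{U}_T$ with terminal value $\Lambda^M(T,x)=1$: continuity and boundedness come from Proposition~\ref{prop:cts} and Proposition~\ref{prop:Lambda_bound}, and the viscosity property is exactly Corollary~\ref{cor:visco_pde}. So the existence side of the statement is already in hand, and only uniqueness requires work.

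Next I would check the assumptions of \cite{amadori2007uniqueness} in our state space $[0,1]$. The first-order coefficient $\overline{\mu}(x)=(a_2-(a_1+a_2)x)-\beta\overline{\sigma}(x)\hat{\theta}(x)$ and the diffusion coefficient $\overline{\sigma}(x)=x(1-x)(\theta_1-\theta_2)$ are Lipschitz with linear growth on $[0,1]$ by Lemma~\ref{lem:2}; the coefficient $d_0(x)$ defined in \eqref{def:d} is bounded and continuous on $[0,1]$; and the nonhomogeneous term in \eqref{eq:HJBPDE}, namely $\mathcal{I}_\beta[\Lambda^M](t,x)+1$, is bounded and H\"older continuous in $x$ uniformly in $t$ by Lemma~\ref{lem:3}. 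Crucially, $\overline{\sigma}$ vanishes at $x\in\{0,1\}$, so the PDE is genuinely degenerate on the boundary; the absence of any imposed boundary data is justified by Proposition~\ref{prop:filter_boundary}, which shows the filter cannot reach $0$ or $1$ from the interior, mirroring the Fichera-type inward drift required by Theorem~\ref{thm:unique}. At the terminal time, both candidate solutions are pinned at $1$.

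With these hypotheses in place, I would apply Theorem~\ref{thm:unique} twice. Let $u$ be any other bounded continuous viscosity solution of \eqref{eq:HJBPDE} with $u(T,x)=1$. Taking $u$ as the (upper semicontinuous) subsolution and $\Lambda^M$ as the (lower semicontinuous) supersolution gives $u\le\Lambda^M$ on $\mathcal{U}_T$; reversing the roles gives $\Lambda^M\le u$. Hence $u\equiv\Lambda^M$, which is the desired uniqueness.

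The main obstacle is not producing this argument but ensuring the hypotheses of the comparison theorem are truly satisfied despite the double degeneracy at $x=0$ and $x=1$ and the nonlocal dependence hidden inside the source term. This is precisely why Steps 1--4 of the program in Sect.~\ref{sec:4.1} are preparatory: the Lipschitz continuity in $x$ (Theorem~\ref{thm:lipctsx}) is what feeds into Lemma~\ref{lem:3} and makes the frozen source $\mathcal{I}_\beta[\Lambda^M]$ admissible as the forcing term in the parabolic PDE; the unattainability result (Proposition~\ref{prop:filter_boundary}) is what replaces boundary data at the degenerate endpoints; and the uniform bounds in Proposition~\ref{prop:Lambda_bound} are what place both candidate solutions in the bounded class to which the cited comparison result applies. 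Once each of these is explicitly matched to the hypotheses of \cite{amadori2007uniqueness}, the corollary follows immediately.
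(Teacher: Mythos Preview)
Your proposal is correct and follows essentially the same approach as the paper: the paper likewise derives the corollary directly from Theorem~\ref{thm:unique} after recording that the hypotheses of \cite{amadori2007uniqueness} are met via Proposition~\ref{prop:filter_boundary}, Proposition~\ref{prop:Lambda_bound}, and Lemma~\ref{lem:2}. Your write-up is in fact more explicit than the paper's one-line verification, additionally invoking Lemma~\ref{lem:3} for the source term and Corollary~\ref{cor:visco_pde} for existence, but the structure of the argument is identical.
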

Following Step 6, it remains to establish the existence result of PDE \eqref{eq:HJBPDE}. We provide the following theorem, with proof placed in Appendix~\ref{A.proof_thm_exist}.
\begin{theorem}\label{thm:exist}
	The PDE \eqref{eq:HJBPDE} admits a classical solution $g\in C^{1,2}(\mathcal{U}_T)$ subject to  the terminal condition $g(T,x) = 1$.
\end{theorem}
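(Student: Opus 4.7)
\smallskip

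\textbf{Proof proposal.} The crucial observation is that, once Lemma~\ref{lem:3} is in hand, the term $\mathcal{I}_{\beta}[\Lambda^M](t,x)+1$ is a \emph{given} bounded, H\"older continuous function of $(t,x)$ (with exponent $\iota\in(0,1)$), independent of the unknown $g$. Consequently \eqref{eq:HJBPDE} is a \emph{linear} degenerate parabolic PDE with bounded smooth coefficients $\overline{\mu},\overline{\sigma}^2,d_0$ and a bounded H\"older source term, degenerating only on the lateral boundary $\{x=0,1\}$. The plan is therefore: (i) write down a Feynman--Kac candidate $g$; (ii) prove boundedness and continuity of $g$ on $\overline{\mathcal{U}}_T$; (iii) upgrade to $C^{1,2}(\mathcal{U}_T)$ via interior parabolic Schauder estimates; (iv) verify the PDE holds classically via It\^o's formula.

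For (i), consider the pure diffusion $dX_s=\overline{\mu}(X_s)ds+\overline{\sigma}(X_s)dW_s$, $X_t=x\in[0,1]$, which has a unique strong solution by the Lipschitz and growth bounds of Lemma~\ref{lem:2}. The same boundary analysis as in Proposition~\ref{prop:filter_boundary} (now without the jump term, which only makes $0,1$ easier to avoid) shows $X_s\in(0,1)$ for all $s\ge t$ whenever $x\in(0,1)$. Define
\begin{equation*}
g(t,x):=\mathbb{E}^{t,x}\!\left[e^{-\int_t^{T}d_0(X_u)du}+\int_t^{T}\!e^{-\int_t^{s}d_0(X_u)du}\bigl(\mathcal{I}_{\beta}[\Lambda^M](s,X_s)+1\bigr)ds\right].
\end{equation*}
Boundedness of $g$ is immediate from boundedness of $d_0,\mathcal{I}_{\beta}[\Lambda^M]$, and the terminal condition $g(T,x)=1$ holds by inspection. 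Joint continuity of $g$ on $\overline{\mathcal{U}}_T$ follows from the $L^2$-continuity of the flow $x\mapsto X^{t,x}$ (by Gronwall using Lemma~\ref{lem:2}), H\"older regularity of the source, and dominated convergence, paralleling Proposition~\ref{prop:cts}.

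For (iii), fix any compact $K=[t_0,T_1]\times[a,b]\subset\mathcal{U}_T$ with $T_1<T$ and $0<a<b<1$. On $K$ we have $\overline{\sigma}^2(x)\ge \min_{[a,b]}\overline{\sigma}^2>0$, so \eqref{eq:HJBPDE} is uniformly parabolic on $K$ with coefficients in $C^{\iota,1+\iota}(K)$ and source in $C^{\iota/2,\iota}(K)$. Classical parabolic Schauder theory (e.g.\ Ladyzhenskaya--Solonnikov--Ural'tseva, Thm.~IV.5.1, or Friedman, Thm.~3.1.11) gives a unique classical $C^{1+\iota/2,\,2+\iota}(K)$ solution $\widetilde g_K$ of the linear PDE on $K$ with parabolic-boundary data $g\!\restriction_{\partial_pK}$. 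A Feynman--Kac identification applied to the diffusion $X$ killed upon exiting $(a,b)$ shows $\widetilde g_K\equiv g$ on $K$; since $K$ is arbitrary, $g\in C^{1,2}(\mathcal{U}_T)$. Step (iv) then applies It\^o's formula to $s\mapsto e^{-\int_t^{s}d_0(X_u)du}g(s,X_s)$ on any interior cylinder and takes expectations: the martingale parts vanish and matching the two integrals forces the classical PDE \eqref{eq:HJBPDE} to hold pointwise in $\mathcal{U}_T$.

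\textbf{Main obstacle.} The difficulty is the simultaneous degeneracy of $\overline{\sigma}^2$ at $x=0,1$ and the need to avoid imposing artificial lateral boundary conditions: global Schauder theory is unavailable. The proposal circumvents this by exploiting that (a) Theorem~\ref{thm:exist} only asks for $C^{1,2}$ on the \emph{open} cylinder $\mathcal{U}_T$, matching the Fichera/inward-drift structure $\overline{\mu}(0)=a_2>0$, $\overline{\mu}(1)=-a_1<0$ and Proposition~\ref{prop:filter_boundary}; and (b) Schauder regularity is needed only locally, on compact interior subsets where the equation is uniformly parabolic. The probabilistic representation is what ties these local classical solutions together into a single $g\in C(\overline{\mathcal{U}}_T)\cap C^{1,2}(\mathcal{U}_T)$, which by Corollary~\ref{cor:unique} must coincide with $\Lambda^M$, completing the proof.
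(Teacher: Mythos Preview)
Your approach is correct and takes a genuinely different route from the paper. The paper works purely PDE-theoretically: it first solves a Dirichlet problem on truncated cylinders $\mathcal{O}_\ell=(0,T)\times(1/\ell,1-1/\ell)$ using $W_p^{1,2}$ theory (Fleming--Rishel, Appendix~E), upgrades to local $\overline{C}^{2+\iota}$ via interior estimates, and then runs a compactness/diagonal argument on a sequence of \emph{regularized non-degenerate} equations (adding $1-\psi_{\ell_0}^2$ to the diffusion coefficient) to extract a $C^{1,2}(\mathcal{U}_T)$ limit. By contrast, you write down the Feynman--Kac candidate for the linear equation directly, establish its continuity, and identify it with the local Schauder solution on each interior rectangle via the stopped diffusion and the strong Markov property. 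Your route is more probabilistic and arguably cleaner, since it bypasses the regularization/compactness layer entirely; the paper's route is more self-contained on the PDE side and does not need the auxiliary fact that the pure diffusion $X$ with drift $\overline{\mu}$ stays in $(0,1)$.

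Two minor points worth tightening. First, your invocation of LSU Thm.~IV.5.1 overstates the conclusion: with merely continuous parabolic-boundary data $g|_{\partial_p K}$ you do not get $\widetilde g_K\in C^{1+\iota/2,2+\iota}(K)$ up to the boundary. What you do get---via approximation of the boundary data by smooth functions, the maximum principle, and interior Schauder estimates---is a solution in $C(K)\cap C^{1,2}(\mathrm{int}\,K)$, which is exactly what the Feynman--Kac identification and the theorem require. Second, interior Schauder needs joint $(t,x)$-H\"older regularity of the source, whereas Lemma~\ref{lem:3} only asserts Lipschitz in $x$; the missing $t$-regularity follows from Proposition~\ref{prop:cts} (since $\Lambda^M$ is $1/2$-H\"older in $t$ uniformly in $x$ and bounded away from $0$ and $\infty$), and the paper relies on this implicitly as well. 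Your step~(iv) is redundant once step~(iii) is complete.
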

\section{Conclusion}\label{sec:conclu}
In this study, we establish the first duality approach to the optimal investment--consumption problem with partial information and mixed-type observations. Interestingly, the inclusion of alternative data makes our problem part of the family of incomplete markets, and our dual problem is an optimization problem over a set of equivalent local martingale measures. We  comprehensively demonstrate its application in a bull--bear market regime economy by drawing on expert opinions as a complementary observation source. The analytically tractable results for the power utility case show that the optimal investment and consumption policies are determined by the solution of a PIDE, which takes into account the effect of the alternative observations.

\appendix\normalsize
\section{Proof of Theorem~\ref{thm:duality_main}}\label{A.proof_thm1}
Without loss of generality, we prove the result for starting time $t=0$ and arbitrary initial guess $x\in[0,1]$. For notation convenience, we suppress the index $t$ and $x$ in $\mathcal{A}(t,x,v)$, $V^{t,x,v,\varpi,c}$, $\Theta^t$, $\widetilde{L}(t,x,y;\nu)$, and $\hat{L}(t,x,y)$.
\begin{assump}\label{ass:utilitu_gen} 	 The time-dependent utility functions $U_i(t,c)\in C^{2}([0,T]\times\mathbb{R}_+)$, $i=1,2$, has the following properties for any given $t\in[0,T]$,
	\begin{enumerate} 	
		\item[i.] $U_i(t,c)$ is strictly concave with respect to $c$ and 
		\begin{align}
			\lim_{c\rightarrow 0+} \partial_c U_i(t,c) = +\infty,\quad \lim_{c\rightarrow +\infty} \partial_c U_i(t,c) = 0.
		\end{align}
		
		\item[ii.]  There is $c_0>0$,  $\zeta \in(0,1)$ and $\iota >1$ such that  $$\zeta \partial_cU_i(t,c) \ge \partial_c U_i(t, \iota c) \text{ for }c>c_0.$$
		\item [iii.] 
		There are positive constants $K$ and $\hat{\kappa}$ such that
		\begin{align}\label{eq:ass3eq1}
			\limsup_{c\rightarrow+\infty} \max_{t \in [0,T]} \partial_c U_i(t,c) c^{\hat{\kappa}} \le K.
		\end{align}
	\end{enumerate}
\end{assump}
Assumption~\ref{ass:utilitu_gen}i. is referred to as Inada conditions that are commonly applied in economic models \cite{inada1963two,kramkov1999asymptotic}. The two growth conditions in Assumption~\ref{ass:utilitu_gen}ii. and \ref{ass:utilitu_gen}iii. are standard ones in duality approaches \cite{karatzas1991martingale,cvitanic1992convex} and used to straightforwardly obtain the regularity of various functions in the duality treatment. Assumption~\ref{ass:utilitu_gen} is satisfied by most popular utility functions, such as CRRA and the constant absolute risk aversion utilities such as
\begin{align*}
	U_i(t,c) = \frac{c^{\kappa^{\prime}}}{\kappa^{\prime}}( 0\neq \kappa^{\prime} <1 ), ~~	U_i(t,c) = \ln c, \text{ or }	U_i(t,c) = 1 - e^{-\kappa^{\prime\prime}c}(\kappa^{\prime\prime}>0). 
\end{align*}
It is also satisfied by the class of utility functions with time-varying risk aversion, for example, $U_i(t,c) = 
c^{\kappa(t)}/\kappa(t)$, where $\kappa(t)$ is a deterministic function satisfying $\underline{\kappa}_{l}\le \kappa(t) \le \underline{\kappa}_{u}$, with $\underline{\kappa}_{l}<\underline{\kappa}_{u}<1$ being constants such that $\underline{\kappa}_{l}>0$ or $\underline{\kappa}_{u}<0$. In some applications, economists are interested in the power utility; particularly the case of negative power which is considered to be more realistic from the standpoint of agent behavior but usually requires a different treatment, and thus is rarely discussed in literature \cite{federico2015utility}.
Assumption~\ref{ass:utilitu_gen}ii. is thought to cover the case of a power utility with a negative power, which is the case arising in our optimal control problem \eqref{def:controlprob}-\eqref{eq:utilit_form} considered in Sect.~\ref{sec:1}.\\
\indent In addition, if $U_i(t,\infty)>0$ for all $t\in[0,T]$, then Assumption~\ref{ass:utilitu_gen}ii. implies Assumption~\ref{ass:utilitu_gen}iii. and that the utility function has asymptotic elasticity strictly less than 1 \cite{kramkov1999asymptotic}. In other words, for a given $t$,
$
\mathrm{AE}(U_i(t,\cdot)):= \limsup_{c\rightarrow\infty}\frac{x\partial_cU_i(t,c)}{U_i(t,c)} <1. 
$
The asymptotic elasticity can be interpreted as the ratio of the marginal utility $\partial_cU$ to the average utility $U(c)/c$, for a large $x>0$. With a financial application in mind, we may think of an agent comparing her marginal utility from very large wealth/consumption levels with her current average utility. As noted in \cite{schachermayer2004portfolio}, if the limit of the coefficient of relative risk aversion, $\lim\limits_{c\rightarrow\infty}-c\partial_{cc}U/\partial_cU$ exists and is strictly positive; then $\mathrm{AE}(U)<1$. As non-increasing relative risk aversion is considered to be common among economic agents, it follows that these agents have asymptotic elasticity less than one.\\
\indent For convenience of exposition, we list properties of convex dual functions $\widetilde{U}_i$ and $I_i$ (inverse function of $\partial_cU_i(t,\cdot)$) for the above general class of utility functions.
\begin{lem}\label{prop:I} Under Assumption~\ref{ass:utilitu_gen}, convex dual functions $\widetilde{U}_i$ and functions $I_i$, $i=1,2$ have the following properties. For any $t\in[0,T]$, 
	\begin{itemize}
		\item[i.] $\widetilde{U}_i:[0,T]\times\mathbb{R}_+ \rightarrow \mathbb{R}$ is strictly decreasing, strictly convex and satisfies
		\begin{align*}
			&\partial_y \widetilde{U}_i(t,y) = - I_i(t,y), ~y\in\mathbb{R}_+,\\
			&U_i(t,c) = \inf_{y\in\mathbb{R}_+} [\widetilde{U}_i(t,y) + xy] = \widetilde{U}_i(t,\partial_cU_i(t, c))+c \partial_cU_i(t, c),~ c\in\mathbb{R}_+.
		\end{align*}
		\item[ii.] 
		
		$\lim\limits_{y\rightarrow 0+} I_i(t,y)= +\infty,~\lim\limits_{y\rightarrow +\infty} I_i(t,y) = 0.
		$
		\item[iii.] For some constant $L_{I}>0$ and $\hat{\kappa}$ the constant in Assumption~\ref{ass:utilitu_gen} iii., 
		\begin{align*}
			I_i(t,y) \le L_{I}(1+y^{-\frac{1}{\hat{\kappa}}}),\quad \forall (t,y)\in[0,T]\times\mathbb{R}_+.
		\end{align*}
		\item[iv.]There exists $y_0>0$ such that for any $\zeta\in(0,1)$, there is a constant $\iota>1$:
		\begin{align}\label{rem2:eq2}
			I_i(t, \zeta y) \le \iota  I_i(t,y),~\forall 0<y<y_0.
		\end{align}
	\end{itemize}
\end{lem}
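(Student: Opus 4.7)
My plan is to derive all four items from the Legendre--Fenchel machinery applied to $U_i(t,\cdot)$ combined with the hypotheses in Assumption~\ref{ass:utilitu_gen}, exploiting throughout that, by Assumption~\ref{ass:utilitu_gen}i., the map $\partial_cU_i(t,\cdot)$ is a strictly decreasing smooth bijection of $\mathbb{R}_+$ onto itself whose inverse is precisely $I_i(t,\cdot)$.

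For item (i), I first note that $\widetilde{U}_i(t,\cdot)$ is convex as a supremum of affine functions of $y$, and the Inada condition ensures the supremum is attained at the unique $c^{\star}=I_i(t,y)$ satisfying the first-order condition $\partial_cU_i(t,c^{\star})=y$. The derivative formula $\partial_y\widetilde{U}_i(t,y)=-I_i(t,y)$ is then the standard envelope identity; strict monotonicity of $I_i(t,\cdot)$ upgrades this to both strict monotonicity (as $\partial_y\widetilde{U}_i<0$) and strict convexity (as $\partial_y\widetilde{U}_i$ is strictly increasing) of $\widetilde{U}_i$. The biconjugate identity $U_i(t,c)=\inf_{y>0}[\widetilde{U}_i(t,y)+cy]$, with the minimizer $y=\partial_cU_i(t,c)$, is just Fenchel--Moreau applied to the proper concave function $U_i(t,\cdot)$.

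Item (ii) is immediate by inverting the Inada limits of Assumption~\ref{ass:utilitu_gen}i. For item (iii), Assumption~\ref{ass:utilitu_gen}iii. gives a uniform (in $t\in[0,T]$) estimate of the form $\partial_cU_i(t,c)\le K'c^{-\hat\kappa}$ for $c\ge c_1$ with some $c_1,K'>0$. Substituting $y=\partial_cU_i(t,c)$ and inverting yields $I_i(t,y)\le (K'/y)^{1/\hat\kappa}$ whenever $I_i(t,y)\ge c_1$; the complementary range $I_i(t,y)\le c_1$ is handled by enlarging $L_I$, producing the bound $I_i(t,y)\le L_I(1+y^{-1/\hat\kappa})$ uniformly in $t$.

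The most delicate piece is item (iv). Assumption~\ref{ass:utilitu_gen}ii. supplies specific constants $c_0,\zeta_0,\iota_0$ with $\zeta_0\partial_cU_i(t,c)\ge\partial_cU_i(t,\iota_0 c)$ for $c>c_0$. Applying the strictly decreasing function $I_i(t,\cdot)$ to both sides, with the substitution $y:=\partial_cU_i(t,c)$, translates this into $I_i(t,\zeta_0 y)\le \iota_0 I_i(t,y)$, valid for every $y<y_0:=\partial_cU_i(t,c_0)$ since by item (ii) such $y$ correspond to $I_i(t,y)>c_0$. To cover an arbitrary $\zeta\in(0,1)$, I would iterate: pick the smallest integer $n\ge 1$ with $\zeta_0^{\,n}\le\zeta$; because $\zeta_0^{\,k}y<y_0$ for every $k\ge 0$, the iteration stays in the valid regime, and combining monotonicity of $I_i$ with the base inequality $n$ times yields $I_i(t,\zeta y)\le I_i(t,\zeta_0^{\,n}y)\le\iota_0^{\,n}I_i(t,y)$, so $\iota:=\iota_0^{\,n}$ works. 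The subtle point I need to monitor is precisely this bookkeeping of staying inside the regime where Assumption~\ref{ass:utilitu_gen}ii. applies, which is why the restriction $y<y_0$ is maintained throughout.
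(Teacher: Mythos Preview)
The paper states this lemma without proof, treating items (i)--(iv) as standard consequences of Legendre--Fenchel duality together with the Inada and growth hypotheses in Assumption~\ref{ass:utilitu_gen}. Your derivation is correct and is precisely the standard argument the authors implicitly rely on: items (i)--(ii) are classical convex-conjugate facts, item (iii) follows by inverting the asymptotic marginal-utility bound, and item (iv) by inverting Assumption~\ref{ass:utilitu_gen}ii.\ and iterating to reach an arbitrary $\zeta\in(0,1)$.

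One small bookkeeping point on item (iv): your threshold $y_0:=\partial_cU_i(t,c_0)$ depends on $t$, whereas the paper's subsequent use of \eqref{rem2:eq2} (inside the proof of Lemma~\ref{lem:1}, where the indicator $\mathds{1}_{\{yZ_t^{\nu^y}<y_0\}}$ appears for varying $t$) needs a single $y_0$ valid for all $t\in[0,T]$. This is easily repaired by taking $y_0:=\min_{t\in[0,T]}\partial_cU_i(t,c_0)$, which is strictly positive since $\partial_cU_i\in C([0,T]\times\mathbb{R}_+)$ and $\partial_cU_i(t,c_0)>0$ for each $t$ by strict concavity. With that adjustment your iteration stays in the valid regime uniformly in $t$, and the constant $\iota=\iota_0^{\,n}$ depends only on $\zeta$ and the fixed pair $(\zeta_0,\iota_0)$ from Assumption~\ref{ass:utilitu_gen}ii., as required.
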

With a slight abuse of notation, let $\Theta$ be the set collecting those process $\nu:=(\nu_D,\nu_J)$ associated with $\overline{Z}^{\mathbb{Q}}$ in \eqref{eq:z_gene} for $\mathbb{Q}\in\mathcal{Q}$, and write $\overline{Z}^{\mathbb{Q}}$ as $Z^{\nu}$. We introduce  the function $\chi(y;\nu):\mathbb{R}_+\times\Theta\rightarrow\mathbb{R}$: 
\begin{align}\label{def:dualchi}
	\chi(y;\nu):= \mathbb{E}[e^{-rT}Z_T^{\nu} I_1(T, e^{-rT}yZ_T^{\nu})+ \int_{0}^{T} e^{-rt}Z_t^{\nu}I_2(t, e^{-rt}yZ_t^{\nu})dt].
\end{align}
If the condition 
\begin{align}\label{cond:7}
	\chi(y;\nu) < \infty,~\forall y\in\mathbb{R}_+ 
\end{align}
prevails, the Monotone Convergence Theorem and the Dominated Convergence Theorem together with Lemma~\ref{prop:I}ii. imply that $\chi(\cdot;\nu)$ is continuous and 
\begin{align}\label{eq:chiasy}
	\lim_{y\rightarrow \infty} \chi(y;\nu) = 0, \quad \lim_{y\rightarrow 0+} \chi(y;\nu) = \infty,
\end{align}
and $\chi(\cdot;\nu)$ is strictly decreasing on $\mathbb{R}_+$, for given $\nu\in \Theta$. We verify that the condition \eqref{cond:7} holds under Assumption~\ref{ass:utilitu_gen}. By Lemma~\ref{prop:I}iii., for $y >0$, we have
\begin{align*}
	\chi(y;\nu) \le L_I  +L_I y^{-\frac{1}{\hat{\kappa}}} \mathbb{E}[ e^{\frac{rT}{\underline{\iota}}}(Z_T^{\nu})^{-\frac{1}{\underline{\iota}}}+ \int_{0}^{T} e^{\frac{rt}{\underline{\iota}}}(Z_t^{\nu})^{-\frac{1}{\underline{\iota}}} dt],\quad {\underline{\iota}}: = \frac{\hat{\kappa}}{1-\hat{\kappa}}.
\end{align*}
Due to the boundedness of $\hat{\theta}$, i.e., that of $\theta$ (recall that $\theta(\alpha_t)= (\mu(\alpha_t)-r)/\sigma$ is bounded as defined in \eqref{eq:theta}), together with $\nu\in\Theta$ satisfying conditions \eqref{cond:5} and \eqref{cond:6}, we obtain the estimate that $\chi(y;\nu)< \infty$. \\
\indent As assumed in Theorem~\ref{thm:duality_main}, for $y\in\mathbb{R}_+$, dual optimizer  $\nu^{y}\in \Theta$ of \eqref{def:gene_dual_L} exists:
\begin{align}\label{eq:prop4eq0}
	\hat{L}(y) = \inf_{\nu \in \Theta} \widetilde{L}(y;\nu) =  \widetilde{L}(y;\nu^y) < \infty.
\end{align}
We first establish a useful fact for $\nu^y$.
\begin{lem} \label{lem:1} For given $y\in\mathbb{R}_+$, let $\nu^{y}$ be the dual optimizer of \eqref{def:gene_dual_L} for $y$, 
	\begin{align}\label{eq:prop4eq4}
		\sup_{\nu\in\Theta}\mathbb{E}[e^{-rT}Z_T^{\nu} I_1(T, e^{-rT}yZ_T^{\nu^y})+ \int_{0}^{T} e^{-rt}Z_t^{\nu}I_2(t, e^{-rt}yZ_t^{\nu^y})dt] \le \chi(y;\nu^y).
	\end{align}
\end{lem}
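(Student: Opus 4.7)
The plan is a standard convex-variational argument exploiting the optimality of $\nu^y$ together with the convex-conjugate identity $\partial_y \widetilde{U}_i(t,y)=-I_i(t,y)$ from Lemma~\ref{prop:I}. First I would invoke the convexity of $\mathcal{Q}$: given $\mathbb{Q}^y\in\mathcal{Q}$ associated with $\nu^y$ and any competitor $\mathbb{Q}\in\mathcal{Q}$ associated with $\nu\in\Theta$, the mixture $\mathbb{Q}^{\epsilon}:=(1-\epsilon)\mathbb{Q}^y+\epsilon\mathbb{Q}$ lies in $\mathcal{Q}$ for every $\epsilon\in[0,1]$, because linearity of expectation preserves the discounted-price martingale property. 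Through the martingale representation set up in Sect.~\ref{sec:3} (Proposition~\ref{prop:martg_represen}), this mixture corresponds to some $\nu^{\epsilon}\in\Theta$ whose density satisfies the affine interpolation $Z_t^{\nu^{\epsilon}}=(1-\epsilon)Z_t^{\nu^y}+\epsilon Z_t^{\nu}$ on $[0,T]$.

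By the optimality of $\nu^y$, $\widetilde{L}(y;\nu^{\epsilon})\ge \widetilde{L}(y;\nu^y)$ for every $\epsilon\in(0,1]$, hence the difference quotient
\[
\Phi(\epsilon):=\frac{\widetilde{L}(y;\nu^{\epsilon})-\widetilde{L}(y;\nu^y)}{\epsilon}\ge 0 .
\]
Because each $\widetilde{U}_i(t,\cdot)$ is convex and $\epsilon\mapsto Z_t^{\nu^{\epsilon}}$ is affine, the map $\epsilon\mapsto\widetilde{U}_i(t,ye^{-rt}Z_t^{\nu^{\epsilon}})$ is convex on $[0,1]$; its $\epsilon$-difference quotient is therefore nondecreasing in $\epsilon$ and, as $\epsilon\downarrow 0$, decreases pointwise to the right derivative $-ye^{-rt}I_i(t,ye^{-rt}Z_t^{\nu^y})(Z_t^{\nu}-Z_t^{\nu^y})$. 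Passing $\epsilon\downarrow 0$ inside the expectation (monotone convergence for the decreasing sequence of quotients, dominated above by its integrable value at $\epsilon=1$) and swapping the time integral with the expectation by Fubini, the inequality $\lim_{\epsilon\downarrow 0}\Phi(\epsilon)\ge 0$ becomes
\[
0\le -y\,\mathbb{E}\!\left[e^{-rT}I_1(T,ye^{-rT}Z_T^{\nu^y})(Z_T^{\nu}-Z_T^{\nu^y})\right]-y\!\int_0^T\!\mathbb{E}\!\left[e^{-rt}I_2(t,ye^{-rt}Z_t^{\nu^y})(Z_t^{\nu}-Z_t^{\nu^y})\right]dt .
\]
Dividing by $y>0$ and rearranging, the two $Z^{\nu^y}$-terms on the right assemble into $\chi(y;\nu^y)$ by \eqref{def:dualchi}, yielding the stated inequality for the fixed $\nu$; taking $\sup_{\nu\in\Theta}$ then completes the proof.

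The main obstacle I expect is the interchange of the $\epsilon\downarrow 0$ limit with the integrations. The monotonicity of the $\epsilon$-quotients (from convexity of $\widetilde{U}_i$) furnishes a decreasing sequence dominated above by $\widetilde{U}_i(t,ye^{-rt}Z_t^{\nu})-\widetilde{U}_i(t,ye^{-rt}Z_t^{\nu^y})$, whose integrability is controlled via the growth bound $I_i(t,y)\le L_I(1+y^{-1/\hat{\kappa}})$ of Lemma~\ref{prop:I} and the exponential moments of $Z^{\nu}$ encoded in conditions~\eqref{cond:5}--\eqref{cond:6}. Crucially, one need not know a priori that the limit is integrable: monotone convergence produces an inequality whose right-hand side is the finite quantity $\chi(y;\nu^y)$, which automatically certifies integrability of the limit. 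A secondary subtlety is that the affine combination of two Radon--Nikodym densities in $\mathcal{Q}$ is generally not a stochastic exponential of any elementary combination of $\nu^y$ and $\nu$; this is sidestepped by arguing at the level of measures and using the bijection $\Theta\leftrightarrow\mathcal{Q}$ provided by Proposition~\ref{prop:martg_represen}.
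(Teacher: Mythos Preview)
Your variational set-up is exactly the paper's: form the convex combination $Z_t^{\nu^{\epsilon}}=(1-\epsilon)Z_t^{\nu^y}+\epsilon Z_t^{\nu}$, use optimality of $\nu^y$ to get a signed difference quotient, and differentiate at $\epsilon=0$ via $\partial_y\widetilde U_i=-I_i$. (The paper even writes out the explicit $(\nu_D^\epsilon,\nu_J^\epsilon)$ realizing $G_\epsilon=Z^{\nu^\epsilon}$, so your appeal to Proposition~\ref{prop:martg_represen} is fine.)

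The gap is in the interchange step. Your monotone-convergence argument requires the $\epsilon=1$ quotient $\widetilde U_i(t,ye^{-rt}Z_t^{\nu})-\widetilde U_i(t,ye^{-rt}Z_t^{\nu^y})$ to have integrable positive part, i.e.\ essentially $\widetilde L(y;\nu)<\infty$ for an \emph{arbitrary} competitor $\nu\in\Theta$. Under Assumption~\ref{ass:utilitu_gen} this is not granted: when $\widetilde U_i$ is unbounded above (e.g.\ power utility with $0<\kappa<1$, where $\widetilde U_i(y)=-y^\beta/\beta$ with $\beta<0$), finiteness of $\widetilde L(y;\nu)$ amounts to a negative-moment bound on $Z^{\nu}$ that conditions \eqref{cond:5}--\eqref{cond:6} do not supply. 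The growth bound on $I_i$ you invoke controls $\chi(y;\nu)$, not $\widetilde L(y;\nu)$; these are different objects. Your closing remark that ``monotone convergence produces an inequality whose right-hand side is the finite quantity $\chi(y;\nu^y)$'' does not rescue this: MCT for a decreasing sequence needs an integrable majorant \emph{a priori}, and the finiteness of $\chi(y;\nu^y)$ only certifies integrability of the limit \emph{after} the passage, which is circular.

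The paper avoids this by bounding the integrand from \emph{below} (in its sign convention), uniformly in small $\epsilon$, and then applying Fatou. Concretely, it writes the quotient via the mean-value theorem as $I_i(t,e^{-rt}yF)e^{-rt}y(Z_t^{\nu'}-Z_t^{\nu^y})$ with $F$ between $(1-\epsilon)Z_t^{\nu^y}$ and $Z_t^{\nu^y}$, and then controls the dangerous case $Z_t^{\nu'}<Z_t^{\nu^y}$ using the ``asymptotic elasticity'' property Lemma~\ref{prop:I}(iv), namely $I_i(t,\zeta y)\le \iota\, I_i(t,y)$ for $y<y_0$. This yields a lower bound of the form $-C\,e^{-rt}I_i(t,e^{-rt}yZ_t^{\nu^y})Z_t^{\nu^y}$, whose integrability is exactly the already-established finiteness of $\chi(y;\nu^y)$. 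In short: your convexity/monotonicity observation is correct and elegant, but it does not replace the need for an $\epsilon$-uniform integrable bound; the paper manufactures that bound via Lemma~\ref{prop:I}(iv), which is the missing ingredient in your argument.
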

\begin{proof}
	Fix $\epsilon \in (0,1)$, and arbitrary $\nu'=(\nu_D', \nu_J') \in \Theta$, we define 
	\begin{align*}
		&G_{\epsilon}(t)= (1-\epsilon) Z^{\nu^y}_t + \epsilon  Z^{\nu'}_t, \quad \nu^{\epsilon}_D(t) = G_{\epsilon}(t)^{-1} ((1-\epsilon)Z_t^{\nu^y}\nu^y_D(t) + \epsilon  Z_t^{\nu'}\nu'_D(t)),\\
		&\nu^{\epsilon}_J(t,q) = \ln \{G_{\epsilon}(t)^{-1} ((1-\epsilon)Z_t^{\nu^y}e^{\nu^y_J(t,q)} + \epsilon  Z_t^{\nu'}e^{\nu'_J(t,q)}) \}.
	\end{align*}
	Then $\nu^{\epsilon}:=(\nu^{\epsilon}_D, \nu^{\epsilon}_J) \in \Theta$ and we have 
	\begin{align*}
		d	G_{\epsilon}(t) = -\hat{\theta}(\pi_t) 	G_{\epsilon}(t) d\widetilde{W}_t -\nu^{\epsilon}_D(t) G_{\epsilon}(t) d\widetilde{B}_t - \int_{\mathbb{R}} G_{\epsilon}(t)(1-e^{\nu^{\epsilon}_J(t,q)}) \overline{m}^{\pi}(dt,dq).
	\end{align*}
	By comparing the solutions to the above SDE and \eqref{eq:z_gene}, we find that $G_{\epsilon}(t) = Z^{\nu^{\epsilon}}_t$ by the uniqueness of the Dol\'{e}an's exponential. Since $\nu^y$ is optimal, we have $$\epsilon^{-1}\left(\widetilde{L}(y;\nu^y)-\widetilde{L}(y;\nu^{\epsilon})\right)\le0,$$ or equivalently, 
	\begin{align}
		&\mathbb{E}\Big[ \epsilon^{-1}\big(\widetilde{U}_1(T, y e^{-rT}Z_T^{\nu^y}) + \int_{0}^{T} \widetilde{U}_2(t,  y e^{-rt}Z_t^{\nu^y}) dt\cr
		&-\widetilde{U}_1(T, y e^{-rT}Z_T^{\nu^{\epsilon}}) - \int_{0}^{T} \widetilde{U}_2(t,  y e^{-rt}Z_t^{\nu^{\epsilon}}) dt\big)\Big] \le 0.\label{eq:thm3eq1}
	\end{align}
	Recalling the fact $\partial_y\widetilde{U}_i(t,y) = - I_i(t,y)$, we see that \eqref{eq:prop4eq4} can be obtained from taking the limit as $\epsilon\downarrow0$ inside the expectation sign of \eqref{eq:thm3eq1}. For a rigorous justification, we show that the random variable inside the expectation operator in \eqref{eq:thm3eq1} is bounded from below by uniformly integrable terms. For a given $t\in[0,T]$, fix $\omega \in \Omega$, suppressing the dependence on $\omega$. Once 
	$Z_t^{\nu^{\prime}} > Z_t^{\nu^{y}}$, the mean-value theorem implies that
	\begin{align}\label{eq:prop4eq3}
		\epsilon^{-1}  \left(\widetilde{U}_i(t, y e^{-rt}Z_t^{\nu^y}) - \widetilde{U}_i(t, y e^{-rt}Z_t^{\nu^{\epsilon}}) \right) &=I_i(t,e^{-rt}y F)e^{-rt}y\epsilon^{-1}(G_{\epsilon}(t) - Z_t^{\nu^{y}})\nonumber\\
		&= I_i(t,e^{-rt}y F)e^{-rt}y(Z_t^{\nu^{\prime}} - Z_t^{\nu^{y}}),
	\end{align}
	where $ Z_t^{\nu^{y}} \le F \le  Z_t^{\nu^{y}} + \epsilon( Z_t^{\nu^{\prime}}- Z_t^{\nu^{y}}) <  Z_t^{\nu^{\prime}}$, and in the first equality, we use the fact that $G_{\epsilon}(t) = Z^{\nu^{\epsilon}}_t$. As $I_i(t,y)$ is decreasing in $y$, we obtain
	\begin{align}\label{eq:prop4eq5}
		&I_i(t,e^{-rt}y F)e^{-rt}y(Z_t^{\nu^{\prime}} - Z_t^{\nu^{y}}) \ge -e^{-rt}yI_i(t,e^{-rt}y Z_t^{\nu^{y}})Z_t^{\nu^{y}}.
	\end{align}
	Alternatively, if $Z_t^{\nu^{\prime}} < Z_t^{\nu^{y}}$, we have 
	\begin{align*}
		&\epsilon^{-1} \left(\widetilde{U}_i(t, y e^{-rt}Z_t^{\nu^y}) - \widetilde{U}_i(t, y e^{-rt}Z_t^{\nu^{\epsilon}}) \right) = 	I_i(t,e^{-rt}y F )e^{-rt}y(Z_t^{\nu^{\prime}} - Z_t^{\nu^{y}}),\\
		& \ge -I_i(t,e^{-rt}y(1-\epsilon)Z_t^{\nu^y})Z_t^{\nu^y}\\
		&=-I_i(t,e^{-rt}y(1-\epsilon)Z_t^{\nu^y})Z_t^{\nu^y}(\mathds{1}_{\left\{ y Z_t^{\nu^y} < y_0\right\}}+\mathds{1}_{\left\{ y Z_t^{\nu^y} \ge y_0\right\}})\\
		&\ge -\iota I_i(t,e^{-rt}yZ_t^{\nu^y})Z_t^{\nu^y} -I_i(t,e^{-rt}y_0(1-\epsilon)Z_t^{\nu^y})Z_t^{\nu^y},	\end{align*}
	where we use the fact that $ (1-\epsilon)Z_t^{\nu^y} \le (1-\epsilon)Z_t^{\nu^y} + \epsilon Z_t^{\nu^{\prime}}\le F \le Z_t^{\nu^y}$ in the second line; the last inequality holds for a sufficiently small $\epsilon$ in the sense that $\epsilon<1-\xi$, where the constants $y_0>0$, $\xi>0$ and $\iota>1$ defined in \eqref{rem2:eq2} are used.\\
	\indent Repeat the proof of \eqref{cond:7}, we obtain that  the random variable inside the expectation operator in \eqref{eq:thm3eq1} is bounded from below by uniformly integrable terms when $\epsilon$ is sufficiently small. As a result, Fatou's lemma can be applied when taking the limit as $\epsilon \downarrow 0$ in \eqref{eq:prop4eq3}, which implies \eqref{eq:prop4eq4} by the arbitrary choice of $\nu'$. 
\end{proof}
Next we show that $\nu^y$ leads to a pair of admissible strategy $(\varpi,c)\in\mathcal{A}(\chi(y;\nu^y))$, as a corollary of the following theorem and Lemma~\ref{lem:1}.
\begin{thm}\label{thm:budget} {(Budget constraint)} 
	Let $\mathcal{V}$ be a nonnegative $\mathcal{H}_T$-measurable random variable and $c_t$ a consumption rate process such that 
	\begin{align}\label{eq:budget}
		\sup_{\nu\in\Theta} \mathbb{E}[e^{-rT}Z_T^{\nu}\mathcal{V} + \int_{0}^{T} e^{-rt}Z_t^{\nu}c_t dt] \le v.
	\end{align}
	Then there exists an investment process $\varpi$ such that $(\varpi,c)\in\mathcal{A}(v)$ and $V_T^{\varpi,c} \ge \mathcal{V}$. 
\end{thm}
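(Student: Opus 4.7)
The plan is to follow the classical super-replication duality in incomplete markets, adapted to our filtration generated by $(\widetilde W,\widetilde B,\overline m^{\pi})$. The relevant ``envelope'' process is
\begin{align*}
\mathcal{V}_t:=\operatorname*{ess\,sup}_{\nu\in\Theta}\,\mathbb{E}\!\left[e^{-r(T-t)}\tfrac{Z^{\nu}_T}{Z^{\nu}_t}\mathcal{V}+\int_{t}^{T}e^{-r(s-t)}\tfrac{Z^{\nu}_s}{Z^{\nu}_t}c_s\,ds\,\Big|\,\mathcal{H}_t\right],
\end{align*}
which, by hypothesis \eqref{eq:budget}, satisfies $\mathcal{V}_0\le v$ and $\mathcal{V}_T\ge\mathcal{V}$. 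I would first show that the family inside the essential supremum has the lattice property (stability under pasting of $\nu$'s through the stable structure of $\Theta$: since $\Theta$ is closed under the concatenation $\nu_1\mathbf{1}_{A}+\nu_2\mathbf{1}_{A^{c}}$ on any $\mathcal{H}_t$-set $A$, the family is directed upward), hence there is an approximating increasing sequence, giving the standard ``tower'' identity for $\mathcal{V}_t$.

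The next step is to check that, for every $\nu\in\Theta$, the $\mathbb{P}$-process
\begin{align*}
\widetilde M^{\nu}_t:=e^{-rt}Z^{\nu}_t\mathcal{V}_t+\int_{0}^{t}e^{-rs}Z^{\nu}_sc_s\,ds
\end{align*}
is a $\mathbb{P}$-supermartingale, or equivalently $e^{-rt}\mathcal{V}_t+\int_0^t e^{-rs}c_s\,ds$ is a $\mathbb{Q}^{\nu}$-supermartingale for each $\mathbb{Q}^{\nu}\in\mathcal{Q}$. This is routine from the dynamic programming property of the essential supremum and the explicit form \eqref{eq:z_gene} of $Z^{\nu}$. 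Taking $\nu=0$ gives that $\widetilde M^{0}$ is a $\mathbb{P}$-supermartingale of c\`adl\`ag type, and the family $\{\widetilde M^{\nu}\}_{\nu\in\Theta}$ jointly enjoys the $\mathcal{Q}$-supermartingale property that is the input hypothesis of the optional decomposition theorem.

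The key step is then to apply the optional decomposition theorem of Kramkov/F\"ollmer--Kabanov in the present jump-diffusion filtration: there exists an $\mathbb{H}$-adapted increasing process $A$ with $A_0=0$ and a c\`adl\`ag $\mathbb{H}$-local martingale $M$ such that
\begin{align*}
e^{-rt}\mathcal{V}_t+\int_{0}^{t}e^{-rs}c_s\,ds=\mathcal{V}_0+M_t-A_t, \qquad t\in[0,T].
\end{align*}
Using the martingale representation result available for the triplet $(\widetilde W,\widetilde B,\overline m^{\pi})$ (Proposition~\ref{prop:martg_represen}), $M$ admits a representation $M_t=\int_0^t\phi_s\,d\widetilde W_s+\int_0^t\psi_s\,d\widetilde B_s+\int_0^t\!\!\int_{\mathbb R}\zeta_s(q)\,\overline m^{\pi}(ds,dq)$. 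Since the process must remain a supermartingale under \emph{every} $\mathbb{Q}^{\nu}$, a change-of-measure computation forces $\psi\equiv 0$ and $\zeta\equiv 0$ (otherwise one could construct a $\nu\in\Theta$ for which $\widetilde M^{\nu}$ fails to be a $\mathbb{P}$-supermartingale). Thus the martingale part is driven only by the tradable innovation $\widetilde W$, and setting $\varpi_t:=e^{rt}\phi_t/\sigma$ identifies the candidate investment strategy.

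Finally I would verify admissibility and super-replication. With this $\varpi$ the wealth dynamics \eqref{eq:wealth_1} and the filtering relation \eqref{eq:W} give $e^{-rt}V^{v,\varpi,c}_t+\int_0^t e^{-rs}c_s ds=v+\int_0^t\sigma e^{-rs}\varpi_s\,d\widetilde W_s$, so that
\begin{align*}
e^{-rt}V^{v,\varpi,c}_t=v-\mathcal{V}_0+e^{-rt}\mathcal{V}_t+A_t\ge e^{-rt}\mathcal{V}_t\ge 0,
\end{align*}
using $v\ge\mathcal{V}_0$ and the nonnegativity of $\mathcal{V}_t$ and $A_t$; in particular $V^{v,\varpi,c}_T\ge\mathcal{V}_T\ge\mathcal{V}$. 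The integrability requirements (h1)--(h2) for $\varpi$ follow from the localization implicit in the optional decomposition combined with the $L^2$-estimate on $\phi$ inherited from the supermartingale property.

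The main obstacle is the optional decomposition step under the jump-driven incomplete-market structure with alternative data: one must ensure the ``degeneracy'' of the martingale in the $\widetilde B$ and $\overline m^{\pi}$ directions is rigorously derivable from $\mathcal{Q}$-supermartingality using the flexibility of $(\nu_D,\nu_J)$ in $\Theta$, and that the resulting $\phi$ yields a genuine $\mathbb{H}$-predictable $\varpi$ satisfying (h1). The subtle part is the sufficient richness of $\Theta$ (stability under pasting and flexibility in $\nu_D$, $\nu_J$) needed to rule out non-$\widetilde W$ components in $M$, which is precisely what the density parametrization \eqref{eq:z_gene} furnishes.
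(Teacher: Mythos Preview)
Your overall strategy coincides with the paper's, which only sketches the argument and defers to Pham--Quenez: build the dynamic superhedging envelope $\mathcal{V}_t$, establish the $\mathbb{Q}^{\nu}$-supermartingale property via lattice/pasting, decompose, use martingale representation (Proposition~\ref{prop:martg_represen}) together with the richness of $\Theta$ to kill the $\widetilde B$ and $\overline m^{\pi}$ components, and read off $\varpi$.

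There is, however, a genuine slip in how you combine the decomposition with the wealth identity. You invoke the ``optional decomposition of Kramkov/F\"ollmer--Kabanov'' but what you actually write, $X_t=\mathcal{V}_0+M_t-A_t$ with $M$ represented via Proposition~\ref{prop:martg_represen}, is the Doob--Meyer decomposition under $\mathbb{P}$ (Proposition~\ref{prop:martg_represen} represents $(\mathbb{P},\mathbb{H})$-local martingales). But $\mathbb{P}\notin\mathcal{Q}$, and this mismatch bites in your verification: from \eqref{eq:wealth_1} and \eqref{eq:W} one gets
\[
e^{-rt}V^{v,\varpi,c}_t+\int_0^t e^{-rs}c_s\,ds=v+\int_0^t e^{-rs}\varpi_s\sigma\,\hat\theta(\pi_s)\,ds+\int_0^t e^{-rs}\varpi_s\sigma\,d\widetilde W_s,
\]
so the $\hat\theta$-drift you dropped does not vanish. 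With $\varpi_t=e^{rt}\phi_t/\sigma$ and $M=\int\phi\,d\widetilde W$, your claimed identity $e^{-rt}V_t=v-\mathcal{V}_0+e^{-rt}\mathcal{V}_t+A_t$ is off by $\int_0^t\phi_s\hat\theta(\pi_s)\,ds$, which has no sign, and the super-replication inequality does not follow. The fix is exactly the Pham--Quenez device the paper points to: perform the Doob--Meyer decomposition under a \emph{reference ELMM}, say $\mathbb{Q}^{0}$ (i.e.\ $\nu\equiv 0$), so that $M$ is a $\mathbb{Q}^{0}$-local martingale represented through $(\widetilde W^{0},\widetilde B,\overline m^{\pi})$ with $\widetilde W^{0}=\widetilde W+\int\hat\theta\,dt$. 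Then your richness-of-$\Theta$ argument still forces $\psi\equiv 0$, $\zeta\equiv 0$, the wealth identity becomes $e^{-rt}V_t+\int_0^t e^{-rs}c_s\,ds=v+\int_0^t e^{-rs}\varpi_s\sigma\,d\widetilde W^{0}_s$, and your comparison goes through verbatim. Equivalently, you may invoke Kramkov's optional decomposition in its correct form, which already delivers $M$ as a stochastic integral with respect to the discounted asset price and makes the subsequent representation/elimination step redundant.
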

\begin{proof}
	The techniques are similar to those in \cite{pham2001optimal}. That is, to show for a given candidate terminal wealth level $\mathcal{V}$ and consumption plan $c$, the superhedging price (l.h.s of \eqref{eq:budget}) satisfies certain dynamic characterization and, therefore, the martingale representation theorem in Proposition~\ref{prop:martg_represen} can be applied. The differences are both the investment and consumption strategies are considered, and the filtration $\mathbb{H}$ includes both Brownian filtration and jump filtration in our analysis, so we omit the details of the proof due to space constraints.
\end{proof}
\noindent Recall \eqref{eq:prop4eq4} in Lemma~\ref{lem:1} and set $\mathcal{V}=I_1(T, e^{-rT}yZ_T^{\nu^y})$ and $c_t = I_2(t, e^{-rt}yZ_t^{\nu^y})$. Theorem~\ref{thm:budget} implies that there exists $(\varpi,c)\in\mathcal{A}(\chi(y;\nu^y))$ such that $V_T^{\varpi,c} \ge \mathcal{V}$.\\
\indent Finally, we aim to show that for every $v\in\mathbb{R}_+$, there exists $y^*= y(v) \in\mathbb{R}_+$ such that $v = \chi(y^*;\nu^{y^*})$, where $\nu^{y^*}$ is the dual optimizer in \eqref{def:gene_dual_L} for $y^*$. This statement is a corollary of the following result.
\begin{lem}\label{lem:1.5}The dual value function $\hat{L}(y)$ defined in \eqref{def:gene_dual_L} is continuously differentiable and its derivative
	$\partial_y\hat{L}(y) = - \chi(y;\nu^{y})$, for $y\in\mathbb{R}_+$ and $\nu^y$ as the dual optimizer for $y$. In addition, 
	\begin{align}\label{eq:derivativeasy}
		\lim_{y\rightarrow 0+}  \partial_y\hat{L}(y) = -\infty, \quad \lim_{y\rightarrow +\infty}  \partial_y\hat{L}(y) = 0.
	\end{align} 
\end{lem}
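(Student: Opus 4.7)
The plan is to establish the derivative formula via an envelope-theorem argument comparing $\hat{L}$ with the parametric family $\widetilde{L}(\cdot;\nu^y)$, combined with convexity of $\hat{L}$, and then to extract the asymptotic behavior from the Inada-type conditions in Lemma~\ref{prop:I}. Two preparatory facts are at the heart: (a) for each fixed $\nu\in\Theta$, $y\mapsto\widetilde{L}(y;\nu)$ is differentiable with $\partial_y\widetilde{L}(y;\nu)=-\chi(y;\nu)$; and (b) $\hat{L}$ is convex in $y$.

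Fact (a) follows by differentiating under the expectation, using $\partial_y\widetilde{U}_i(t,y)=-I_i(t,y)$ from Lemma~\ref{prop:I}.i, with dominated convergence justified by the bound $I_i(t,y)\le L_I(1+y^{-1/\hat{\kappa}})$ from Lemma~\ref{prop:I}.iii---exactly the estimate that already underlies~\eqref{cond:7}. For fact (b), given $y=\lambda y_1+(1-\lambda)y_2$ and $\nu_j=\nu^{y_j}$, the process $\bar{Z}_t:=\lambda(y_1/y)Z^{\nu_1}_t+(1-\lambda)(y_2/y)Z^{\nu_2}_t$ is a positive $(\mathbb{P},\mathbb{H})$-martingale with $\mathbb{E}[\bar{Z}_T]=1$; its Dol\'eans decomposition, obtained via Proposition~\ref{prop:martg_represen}, corresponds to some admissible $\bar\nu\in\Theta$, and applying the convexity of $\widetilde{U}_i(t,\cdot)$ pointwise yields $\hat{L}(y)\le\widetilde{L}(y;\bar\nu)\le\lambda\hat{L}(y_1)+(1-\lambda)\hat{L}(y_2)$.

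The envelope step is then short: since $\nu^y$ is admissible at every $y'$ but optimal only at $y$, for $h>0$ I obtain
\[
\hat{L}(y+h)-\hat{L}(y)\le\widetilde{L}(y+h;\nu^y)-\widetilde{L}(y;\nu^y),\qquad \hat{L}(y)-\hat{L}(y-h)\ge\widetilde{L}(y;\nu^y)-\widetilde{L}(y-h;\nu^y).
\]
Dividing by $h$, applying (a), and sending $h\downarrow 0$ yields $\partial_y^+\hat{L}(y)\le-\chi(y;\nu^y)\le\partial_y^-\hat{L}(y)$; combined with $\partial_y^-\hat{L}\le\partial_y^+\hat{L}$ from (b), both sides collapse to $\partial_y\hat{L}(y)=-\chi(y;\nu^y)$. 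Continuity of the derivative is automatic, as a finite convex function differentiable at every interior point has a continuous derivative.

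The asymptotic limits are the main technical obstacle. Write $\chi^*(y):=-\partial_y\hat{L}(y)$; it is non-increasing by convexity of $\hat{L}$. For $\chi^*(y)\to 0$ as $y\to\infty$, I would use $\hat{L}(y)\le\widetilde{L}(y;\nu^{y'})$ to deduce the averaged inequality $\int_y^{y'}\chi^*(s)\,ds\le\int_y^{y'}\chi(s;\nu^{y'})\,ds$ and combine \eqref{eq:chiasy} with a weak compactness argument for $\{Z_T^{\nu^{y_n}}\}$, whose uniform integrability is secured by~\eqref{cond:5}--\eqref{cond:6}, to pass to the limit along a subsequence. For $\chi^*(y)\to+\infty$ as $y\to 0^+$, I would instead restrict the expectation defining $\chi(y;\nu^y)$ to the set $\{\epsilon\le Z_T^{\nu^y}\le K\}$, giving
\[
\chi^*(y)\ge e^{-rT}\epsilon\,I_1\bigl(T,e^{-rT}Ky\bigr)\,\mathbb{P}\bigl(\epsilon\le Z_T^{\nu^y}\le K\bigr);
\]
the factor $I_1(T,e^{-rT}Ky)\to\infty$ by the Inada condition in Lemma~\ref{prop:I}.ii, and everything hinges on producing a uniform-in-$y$ lower bound on the probability, extracted from $\mathbb{E}[Z_T^{\nu^y}]=1$ and a uniform second-moment estimate for $\{Z_T^{\nu^y}\}$ available through~\eqref{cond:5}. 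Securing that uniformity is, in my view, the genuinely hard step in the whole proof.
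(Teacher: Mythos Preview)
Your envelope-theorem argument for the derivative formula is essentially the paper's: both compare $\hat{L}$ with the touching function $\overline{L}(y)=\widetilde{L}(y;\nu^{\bar y})$ and squeeze the one-sided derivatives. The paper is slightly more careful about the left derivative, invoking the uniform-integrability estimate already established in Lemma~\ref{lem:1} rather than a blanket dominated-convergence claim, but the structure is the same. Your convexity argument via the mixture $\bar Z$ is correct and in fact mirrors the $G_\epsilon$ construction from Lemma~\ref{lem:1}.

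Where you diverge from the paper---and where there is a genuine gap---is in the asymptotics. Both of your proposed arguments require \emph{uniform-in-$y$} control on the family $\{Z_T^{\nu^y}\}$: a uniform second-moment bound for $y\to 0^+$, and tightness/weak compactness for $y\to\infty$. Conditions~\eqref{cond:5}--\eqref{cond:6} are imposed on each individual $\nu\in\Theta$; they give no uniformity over the optimizers $\nu^y$ as $y$ varies, and no such uniformity is assumed or proved anywhere in the paper. The ``genuinely hard step'' you flag is therefore not merely hard but unavailable under the stated hypotheses.

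The paper sidesteps this entirely by never using $\nu^y$ in the asymptotic analysis. For $y\to\infty$ it applies l'H\^opital (legitimate since $\partial_y\hat L$ is monotone) to reduce to $\lim_{y\to\infty}-\hat L(y)/y$, and then bounds this using the sublinear growth $-\widetilde U_i(t,y)\le K(\epsilon)+\epsilon y$ together with the supermartingale property $\mathbb{E}[Z_t^\nu]\le 1$, which holds \emph{uniformly in $\nu$} for free. For $y\to 0^+$ it first pins down $\hat L(0^+)=\widetilde U_1(T,0^+)+\int_0^T\widetilde U_2(t,0^+)\,dt$ via Jensen (lower bound) and comparison with the \emph{fixed} control $\nu\equiv 0$ (upper bound), then bounds the difference quotient $(\hat L(0^+)-\hat L(y))/y$ from below by $\chi(y;\nu')$ for an arbitrary \emph{fixed} $\nu'$, which blows up by~\eqref{eq:chiasy}. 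The moral: exploit the variational inequality $\hat L(y)\le\widetilde L(y;\nu')$ with a fixed $\nu'$ rather than chasing the moving target $\nu^y$.
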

\begin{proof}By properties of $\widetilde{U}_i$ in Lemma~\ref{prop:I}i., $\hat{L}$ is clearly decreasing and convex in $y$. Firstly, we show that $\hat{L}$ is differentiable w.r.t $y$, and hence is continuously differentiable by its convexity. For a fixed $\bar{y}>0$, let $\nu^{\bar{y}}$ be the corresponding minimizer such that $\hat{L}(\bar{y}) = \widetilde{L}(\bar{y};\nu^{\bar{y}})$. We consider the function $\overline{L}(y) : = \widetilde{L}(y;\nu^{\bar{y}})$, which is also convex and decreasing in $y$. We have that $\overline{L}(y) \ge \hat{L}(y)$ for all $y\in\mathbb{R}_+$ and $\overline{L}(\bar{y}) = \hat{L}(\bar{y})$. It then follows that 
	$$ \partial_{-}\overline{L}(\bar{y})\le \partial_{-}\hat{L}(\bar{y}) \le \partial_{+}\hat{L}(\bar{y}) \le\partial_{+}\overline{L}(\bar{y}), $$
	where $\partial_{\pm}$ denote the left and the right derivatives respectively, their existence is guaranteed by the convexity of $\overline{L}$ and $\hat{L}$. 
	By Monotone Convergence Theorem and the fact that $\partial_y\widetilde{U}_i(t,y)=-I_i(t,y)$, we have 
	$ \partial_{+}\overline{L}(\bar{y}) \le -\chi(\bar{y};\nu^{\bar{y}}).$
	On the other hand, by convexity,  
	\begin{align*}
		\partial_{-} \overline{L}(\bar{y}) \ge& \limsup_{\epsilon\rightarrow0+} \mathbb{E}\Big[-e^{-rT}Z_T^{\nu^{\bar{y}}} I_1(T, e^{-rT}(\bar{y}-\epsilon)Z_T^{\nu^{\bar{y}}}\\
		&- \int_{0}^{T} e^{-rt}Z_t^{\nu^{\bar{y}}}I_2(t, e^{-rt}(\bar{y}-\epsilon)Z_t^{\nu^{\bar{y}}})dt\Big],\end{align*}
	where the term inside the expectation operator is uniformly integrable when $\epsilon$ is sufficiently small by following the same arguments of proving Lemma~\ref{lem:1}. We conclude that $\partial_{-} \overline{L}(\bar{y}) \ge -\chi(\bar{y};\nu^{\bar{y}})$. Hence $\partial_y\hat{L}(\bar{y}) = - \chi(\bar{y};\nu^{\bar{y}})$, for all $\bar{y}>0$.\\
	\indent Next, we prove \eqref{eq:derivativeasy}. For notation convenience, we write $\phi(0+): = \lim_{y\rightarrow 0+} \phi(y)$ for function $\phi$.
	Notice that $\hat{L}(0+) \ge \widetilde{U}_1(T,0+)+ \int_{0}^{T}\widetilde{U}_2(t,0+)dt $.  This follows by using Jensen's inequality, the convexity and decreasing properties of $\widetilde{U}_i(t,\cdot)$, and the (super)martingale property of $Z^{\nu}$ for arbitrary $\nu\in \Theta$:
	\begin{align*}
		\widetilde{L}(y;\nu) &\ge \widetilde{U}_1(T,ye^{-rT}\mathbb{E}[Z_T^{\nu}]) + \int_{0}^{T}\widetilde{U}_2(t,ye^{-rt}\mathbb{E}[Z_t^{\nu}])dt\\
		&\ge\widetilde{U}_1(T,ye^{-rT}) + \int_{0}^{T}\widetilde{U}_2(t,ye^{-rt})dt\xrightarrow{y\downarrow 0 }\widetilde{U}_1(T,0+)+ \int_{0}^{T}\widetilde{U}_2(t,0+)dt,
	\end{align*}
	and taking infimum over $\nu\in \Theta$ on both sides. 
	If $\widetilde{U}_1(T,0+)+ \int_{0}^{T}\widetilde{U}_2(t,0+)dt = \infty$, $\hat{L}(0+) = \infty$ and $\partial_y\hat{L}(0+)=-\infty$. Otherwise, if $\widetilde{U}_1(T,0+)+ \int_{0}^{T}\widetilde{U}_2(t,0+)dt < \infty$, we observe that 
	\begin{align}\label{eq:Lhatproof1}
		\hat{L}(y) \le \mathbb{E}\left[ \widetilde{U}_1(T,ye^{-rT}Z^0_T) + \int_{0}^{T}\widetilde{U}_2(t,ye^{-rt}Z^0_t)dt \right],
	\end{align}
	where
	$
	Z_t^0 := \exp\left(-\frac{1}{2}\int_{0}^{t}\hat{\theta}(\pi_s)^2ds-\int_{0}^{t}\hat{\theta}(\pi_s)d\widetilde{W}_s  \right)$, $t\in[0,T]
	$. 
	Since $(Z_t^0)_{t\in[0,T]}$ is a geometric Brownian motion with uniformly bounded $\hat{\theta}$, we have $$yZ_t^0\rightarrow0,\quad \forall t\in[0,T], \text{ a.s. }$$ 
	The last term inside the expectation operator in \eqref{eq:Lhatproof1} is bounded above by\\ $\widetilde{U}_1(T,0+)+ \int_{0}^{T}\widetilde{U}_2(t,0+)dt < \infty$ in the present case. By Dominated Convergence Theorem, we conclude $\hat{L}(0+) = \widetilde{U}_1(T,0+)+ \int_{0}^{T}\widetilde{U}_2(t,0+)dt<\infty$. Hence,
	\begin{align*}
		-\partial_y\hat{L}(0+) \ge \frac{\hat{L}(0+) - \hat{L}(y)}{y}&\ge \frac{1}{y}\Big[  \widetilde{U}_1(T,0+)+ \int_{0}^{T}\widetilde{U}_2(t,0+)dt -  \widetilde{L}(y;\nu')  \Big],
	\end{align*}
	where the last term is greater than $\chi(y;\nu')$ for all $y\in\mathbb{R}_+$ and $\nu'\in \Theta$. Use \eqref{eq:chiasy}, letting $y\rightarrow0$, we have $-\partial_y\hat{L}(0+)\ge \infty$, or $\partial_y\hat{L}(0+)=  -\infty$.\\
	\indent In addition, we observe that the function $-\widetilde{U}_i(t,y)$ is increasing in $y$ and 
	$\lim_{y\rightarrow\infty}-\partial_y\widetilde{U}_i(t,y) = 0$ for  $t\in[0,T]$. Therefore, for any $\epsilon>0$, there exists a constant $K(\epsilon)$ such that
	\begin{align*}
		-\widetilde{U}_1(T,y) \le  K(\epsilon) + \epsilon y, ~\sup_{t\in[0,T]}\{-\widetilde{U}_2(t,y)\} \le K(\epsilon) + \epsilon y, ~\forall y\in\mathbb{R}_+.
	\end{align*}
	By l’Hospital’s Rule, we have 
	\begin{align*}
		0&\le \lim\limits_{y\rightarrow \infty}-\partial_y\hat{L}(y) = \lim\limits_{y\rightarrow\infty}\frac{-\hat{L}(y)}{y} =  \lim\limits_{y\rightarrow\infty} \sup_{\nu\in\Theta} \frac{-\widetilde{L}(y;\nu)}{y}\\
		&\le  \lim\limits_{y\rightarrow\infty} \sup_{\nu\in\Theta} \mathbb{E}\Big[ \frac{K(\epsilon)(1+T)}{y} + \big( e^{-rT}Z_T^{\nu} + \int_{0}^{T}e^{-rt}Z_t^{\nu}dt \big)\epsilon \Big] \le 2\epsilon.
	\end{align*}
	Therefore we have $\lim\limits_{y\rightarrow \infty}-\partial_y\hat{L}(y) = 0$. 
\end{proof}
\begin{proof}[Proof of Theorem~\ref{thm:duality_main}]
	Lemma~\ref{lem:1.5} indicates that for every $v\in\mathbb{R}_+$, there exists $y^*\in\mathbb{R}_+$ such that $-\partial_y\hat{L}(y^*) = v$, or equivalently, $\chi(y^{*},\nu^{y^*})=v$. Theorem~\ref{thm:budget} and Lemma~\ref{lem:1} implies the existence of $(\varpi^*,c^*)\in\mathcal{A}(v)$ with $c^*_t=  I_2(t, e^{-rt}y^*Z_t^{\nu^{y^*}})$ and $V_T^{\varpi^*,c^*} \ge I_1(T, e^{-rT}y^*Z_t^{\nu^{y^*}})$. To  verify the optimality of $(\varpi^*,c^*)$ and that there is no duality gap,  we show the reverse inequality in \eqref{eq:dualgap},
	\begin{align*}
		&\widetilde{J}(v;\varpi^*,c^*) \ge \mathbb{E}\Big[U_1(T,I_1(T, e^{-rT}y^*Z_t^{\nu^{y^*}}))+ \int_{0}^{T}U_2(t,c^*_t)dt\Big] \\
		&=\mathbb{E}\Big[U_1(T,I_1(T, e^{-rT}y^*Z_t^{\nu^{y^*}}))+ \int_{0}^{T}U_2(t,c^*_t)dt\Big] -y^*\chi(y^*;\nu^{y^*})+ y^*v\\
		& = \mathbb{E}\Big[\widetilde{U}_1(T, y^* e^{-rT}Z_T^{\nu^{y^*}}) + \int_{0}^{T} \widetilde{U}_2(t,  y^* e^{-rt}Z_t^{\nu^{y^*}}) dt\Big] + y^*v\\
		&= \hat{L}(y^*) + y^*v \ge \inf_{y^{\prime}>0} \{ \hat{L}(y^{\prime}) + y^{\prime}v \}.
	\end{align*}
	The above calculation shows that $y^*$ attains $\inf_{y\in\mathbb{R}_+}\{\hat{L}(y) + vy\}$. 
\end{proof}

\section{Proof of Theorem~\ref{thm:exist}}\label{A.proof_thm_exist}
\begin{proof}
	To streamline the presentation, we introduce the following notation. Consider a cylindrical domain $\mathcal{O}:=(t_1,t_2)\times O \subset (0,T) \times (0,1)$.\\
	\noindent $\bullet$ $\partial^*\mathcal{O}$ the boundary of $\mathcal{O}$, i.e.,
	$\partial^*\mathcal{O}: = (\{t_1,t_2\}\times O) \cup ((t_1,t_2)\times \partial O).$\\
	\noindent $\bullet$ $\mathcal{L}^p(\mathcal{O})$, the space of $p$th power integrable functions on $\mathcal{O}$, $\| \cdot \|_{p,\mathcal{O}}$ the norm in $\mathcal{L}^p(\mathcal{O})$:
	$\| \varphi \|_{p,\mathcal{O}} = \Big(\int_{\mathcal{O}} |\varphi|^p dxdt\Big)^{1/p}$.\\
	\noindent $\bullet$ $W_p^{1,2}(\mathcal{O})$, $1<p<\infty$, the completion of $C^{\infty}(\mathcal{O})$ under the Sobolev-type norm: 
	\begin{align*}
		\| \varphi \|_{W_p^{1,2}(\mathcal{O})}:= \Big[ \int_{\mathcal{O}}  (|\varphi|^p+|\partial_t \varphi|^p+|\partial_x \varphi|^p+|\partial_{xx} \varphi |^p)dtdx\Big]^{1/p}.
	\end{align*}
	This is the space of all functions $\varphi$ such that for $\varphi(t,x)$ and all of its generalized partial derivatives $\partial_t\varphi$, $\partial_{x} \varphi$, and $\partial_{xx} \varphi$ are in $\mathcal{L}^p(\mathcal{O})$. \\
	\noindent $\bullet$ $|\cdot|_{C_{\iota,\iota/2}(\mathcal{O})}$, $0<\iota\le1$, the H\"{o}lder norm given by $|\varphi|_{C_{\iota,\iota/2}(\mathcal{O})}: = \sup\limits_{(t,x)\in\mathcal{O}} |\varphi(t,x)| +\sup _{\substack{(x, y) \in \overline{O}^2 \\t_1 \leq t \leq t_2}} \frac{\left|\varphi(t, x)-\varphi(t, y)\right|}{|x-y|^{\iota}}+\sup _{\substack{x \in \overline{O}\\
			t_1 \leq s, t \leq t_2}} \frac{\left|\varphi(s, x)-\varphi(t, x)\right|}{|s-t|^{\iota / 2}}$, and
	\begin{align*}
		&|\varphi|^{1}_{C_{\iota,\iota/2}(\mathcal{O})}: = |\varphi|_{C_{\iota,\iota/2}(\mathcal{O})} + 	|\partial_x\varphi|_{C_{\iota,\iota/2}(\mathcal{O})},\\
		& |\varphi|^{2}_{C_{\iota,\iota/2}(\mathcal{O})}: = 	|\varphi|^{1}_{C_{\iota,\iota/2}(\mathcal{O})} + 	|\partial_{xx}\varphi|_{C_{\iota,\iota/2}(\mathcal{O})}+|\partial_{t}\varphi|_{C_{\iota,\iota/2}(\mathcal{O})}.
	\end{align*} We also let	$\overline{C}^{\iota}(\mathcal{O})$ and $\overline{C}^{2+\iota}(\mathcal{O})$ denotes the H\"{o}lder space of all functions $\varphi$ such that $|\varphi|_{C_{\iota,\iota/2}(\mathcal{O})}<\infty$ and $|\varphi|^{2}_{C_{\iota,\iota/2}(\mathcal{O})}<\infty$ respectively.\\ 
	\noindent 	{\it Step 1. }
	The PDE \eqref{eq:HJBPDE} that we analyze has degenerate coefficients on the boundaries of the state space, i.e., on $x=0$ and $x=1$. Thus, we begin with the following auxiliary problem: for a fixed $\ell > 2$, consider a bounded domain $\mathcal{O}_{\ell}: = (0,T)\times (1/\ell, 1-1/\ell) $. The PDE for this auxiliary problem is expressed as 
	\begin{align}\label{thm5.13prob1}
		(\partial_t + \overline{\mu}(x)\partial_{x}+ \frac{1}{2}\overline{\sigma}(x)^2\partial_{xx}-d_0(x))g(t,x) + \mathcal{I}_{\beta}[{\Lambda}^M](t,x) +1= 0, ~\text{ in }\mathcal{O}_{\ell},
	\end{align}
	subject to boundary conditions
	\begin{align*}
		g(t,x) = \Psi(t,x), ~ (t,x) \in \partial^*\mathcal{O}_{\ell}:=( (0,T)\times\{\frac{1}{\ell},1-\frac{1}{\ell}\} \cup (\{T\}\times (\frac{1}{\ell}, 1-\frac{1}{\ell}) ),
	\end{align*}
	with $\Psi(T,x) = 1$ for all $x\in(1/\ell, 1-1/\ell)$ and $\Psi(t,x) = \psi(t,x)$, for all $(t,x)\in(0,T)\times\{\ell,1-\ell\}$, for some $\psi \in C^{1,2}(\overline{\mathcal{O}}_{\ell})$. As $\overline{\mathcal{O}}_{\ell}$ avoids the boundaries $x=0$ and $x=1$, together with the boundedness of $\overline{\mu}$, $\overline{\sigma}$, $d_0$, and the autonomous term $\mathcal{I}_{\beta}[\Lambda^M]$, it follows from the standard results on parabolic PDEs (see e.g. Appendix E in \cite{fleming2012deterministic}) that the boundary value problem \eqref{thm5.13prob1} has a unique solution in $W_p^{1,2}(\mathcal{O}_{\ell})$, for any $p>0$. Applying $W_p^{1,2}$ interior estimate with part of the boundary (see e.g. Chapter IV in \cite{ladyzhenskaia1988linear}), we have
	\begin{align}\label{thm5.15eqlocest1}
		\| g \|_{W_p^{1,2}(\mathcal{O}_{\ell})} \le C( 	\|\mathcal{I}_{\beta}[\Lambda^M]\|_{p,\mathcal{O}_{\ell}} + \| \Psi \|_{W_p^{1,2}(\mathcal{O}_{\ell})} ) \le C_1.
	\end{align}
	For $p>3$, the finiteness of $\| g \|_{W_p^{1,2}(\mathcal{O}_{\ell})}$ implies the finiteness of $|g|_{C_{\iota,\iota/2}(\mathcal{O}_{\ell})}$, for some $\iota>0$. For $p>3$, by estimate (E.9) in Appendix E in \cite{fleming2012deterministic}, we have
	\begin{align*}
		|g|^{1}_{C_{\iota,\iota/2}(\mathcal{O}_{\ell})} \le C\| g \|_{W_p^{1,2}(\mathcal{O}_{\ell})} \le C_2.
	\end{align*}
	We now consider an open subset $\mathcal{O}_{\ell}^{\prime}$ of $\mathcal{O}_{\ell}$ such that $\overline{\mathcal{O}}_{\ell}^{\prime} \subset \overline{\mathcal{O}}_{\ell}$. Recall Lemma~\ref{lem:3},  $\mathcal{I}_{\beta}[\Lambda^M](t,x)$ is H\"{o}lder continuous in $x$ with exponent $0<\iota<1$ . By estimate (E.10) in Appendix E of \cite{fleming2012deterministic}, we have
	\begin{align}\label{thm5.15eqlocest3}
		|g|^{2}_{C_{\iota,\iota/2}(\mathcal{O}_{\ell}^{\prime})} \leq C(|\mathcal{I}_{\beta}(\Lambda^M)|_{C_{\iota,\iota/2}(\mathcal{O}_{\ell})}+\sup\limits_{(t,x)\in\mathcal{O}_{\ell}} |g(t,x)|) \le C_{3},
	\end{align}
	for some constant $C_{3}$ depending solely on $\mathcal{O}_{\ell}^{\prime}$ and $\mathcal{O}_{\ell}$. The H\"{o}lder norm $|g|^{2}_{C_{\iota,\iota/2}(\mathcal{O}_{\ell}^{\prime})} $ is finite, and thus we have $g\in \overline{C}^{2+\iota}(\mathcal{O}_{\ell}^{\prime})$ for any compact subset $\mathcal{O}_{\ell}^{\prime}$ of $\mathcal{O}_{\ell}$. From Theorem 10.1 in Chapter IV of \cite{ladyzhenskaia1988linear}, we conclude $g \in C^{1,2}(\mathcal{O}_{\ell})$.\\
	\indent {\it Step 2. } For $\ell = 3,4,\cdots$, construct a function $\psi_{\ell}$ satisfying the following 
	\begin{align*}
		\psi_{\ell}(x) \in C^{\infty}, ~0 \le 	\psi_{\ell}(x) \le 1, ~  	|\psi_{\ell}'(x)| \le 2,
	\end{align*}
	with $\psi_{\ell}(x) = 1$ for $x\in\mathcal{O}_{\ell}$ and $\psi_{\ell}(x) = 0$ for $x\in\mathcal{U}_T\backslash\mathcal{O}_{\ell+1}$.
	For any fixed $\ell_0>2$, let $\widetilde{g}_{\ell}$ be a solution of PDE:
	\begin{align}\label{thm5.15prob2}
		\partial_t\widetilde{g}_{\ell}+ \overline{\mu}\partial_{x}\widetilde{g}_{\ell}+ \frac{1}{2}(\overline{\sigma}+ 1-\psi_{\ell_0}^2)\partial_{xx}\widetilde{g}_{\ell}-d_0\widetilde{g}_{\ell} + \mathcal{I}_{\beta}[\Lambda^M] +1= 0,~\text{in }\mathcal{U}_T,
	\end{align}
	subject to boundary conditions $\widetilde{g}_{\ell}(T,x) = \psi_{\ell}(x)$. For $\ell>\ell_0$, applying $W_p^{1,2}$ interior estimate with part of the boundary to \eqref{thm5.15prob2},  we have
	$$\| \widetilde{g}_{\ell} \|_{W_p^{1,2}(\mathcal{O}_{\ell_0})} \le C_4, ~\forall  p>1, \ell > \ell_0, $$
	with $C_4$ solely depends on $\ell_0$ but is independent of $\ell$. In addition, for $\ell > \ell_0$, $\widetilde{g}_{\ell}$ uniformly 
	satisfies the PDE 
	\begin{align*}
		\partial_t v+ \overline{\mu}\partial_{x}\widetilde{g}_{\ell}+ \frac{1}{2}\overline{\sigma}^2\partial_{xx}\widetilde{g}_{\ell}-d_0\widetilde{g}_{\ell} + \mathcal{I}_{\beta}[\Lambda^M]+1= 0, ~\text{in }\mathcal{O}_{\ell_0}.
	\end{align*}
	Using the arguments developed in Step 1, we show that the above PDE has a unique solution that coincides with $\widetilde{g}_{\ell}$ in $\mathcal{O}_{\ell_0}$. Taking into account the estimates in \eqref{thm5.15eqlocest3}, we see that $\partial_{t}\widetilde{g}_{\ell}$, $\partial_{x}\widetilde{g}_{\ell}$ and $\partial_{xx}\widetilde{g}_{\ell}$ also satisfy a uniform H\"{o}lder condition on $\mathcal{O}_{\ell_0}$. Note that the coefficients of these equations are the same for all $\ell$ and that $\widetilde{g}_{\ell}$ are uniformly bounded above on $\mathcal{O}_{\ell_0}$. It then follows from Theorem 15 on page 80 in \cite{fleming2012deterministic}  that for any subsequence $\{ \widetilde{g}_{\ell'}\}$ of $\{\widetilde{g}_{\ell}\}$, there exists a further subsequence $\{ \widetilde{g}_{\ell''}\}$, such that itself (and its derivatives $\{ \partial_{t}\widetilde{g}_{\ell''}\}$, $\{ \partial_{x}\widetilde{g}_{\ell''}\}$, and $\{ \partial_{xx}\widetilde{g}_{\ell''}\}$) tends to the limit $\widetilde{g}$ (resp. $\{ \partial_{t}\widetilde{g}\}$, $\{ \partial_{x}\widetilde{g}\}$, and $\{ \partial_{xx}\widetilde{g}\}$) uniformly on each compact subset of $\overline{\mathcal{O}}_{\ell_0}$ (resp. $\mathcal{O}_{\ell_0}$). It then follows from the continuity of $\widetilde{g}_{\ell''}$ and the uniform convergence that $\widetilde{g}\in C^{1,2}(\mathcal{U}_T)$ as $\mathcal{O}_{\ell_0}$ is arbitrarily chosen. To conclude, $\widetilde{g}$ is a classical solution to PDE \eqref{eq:HJBPDE} with the terminal condition $\widetilde{g}(T,x) = 1$. 
\end{proof}
\section{Other proofs}\label{A.proof_Prop1}
\begin{proof}[Proof of Proposition~\ref{prop:filter_boundary}]
	Consider the infinitesimal generator $\mathcal{L}$ of \eqref{eq:filter2} that operates on $\phi\in C^2([0,1])$:
	\begin{align*}
		\mathcal{L}\phi(x):= &(a_2-(a_1+a_2)x)\phi'(x)+ \frac{1}{2}x^2(1-x)^2 (\theta_1-\theta_2)^2\phi{''}(x) \\
		&+ \lambda\int_{\mathcal{Z}} [\phi(\xi(x,z)) - \phi(x)] \hat{f}(x,z)dz.
	\end{align*}
	We show that boundary $0$ is unattainable from the interior of the state space and the arguments for boundary $1$ are similar. Without loss of generality, we prove for the process $\pi^{x_0}:=(\pi^{x_0}_t)_{t\ge0}$ which is defined as the solution of \eqref{eq:filter2} starting from time 0 and a given starting point $x_0\in(0,1)$.
	Consider the function $\phi(x)=1/x$, we have $\phi'(x)= - 1/x^2$, $\phi''(x) = 2/x^3$, and $\phi(x)\rightarrow\infty$ as $x\rightarrow 0$. Consequently, 
	\begin{align*}
		\mathcal{L}\phi(x)
		\le& \frac{1}{x}\Big[ a_1+a_2 + (1-x)^2(\theta_1-\theta_2)^2 +\lambda\int_{\mathcal{Z}} \frac{(f_2(z)-f_1(z))(1-x)\hat{f}(x,z)}{f_1(z)} dz \Big]\\
		\le & \frac{1}{x}\left[ a_1+a_2 + (\theta_1-\theta_2)^2 +\lambda(b_{\max}-1) \right] = L_{0} \phi(x),
	\end{align*}
	by setting $L_{0}: = a_1+a_2 + (\theta_1-\theta_2)^2 +\lambda(b_{\max}-1)>0$. Define $\tau_{n}:=\inf\{t>0:\pi_t^{x_0} \le n \}$, for $0\le n<1$. From the above calculations, we have 
	\begin{align*}
		\mathbb{E}\left[ \phi(\pi^{x_0}_{t \wedge \tau_n }) \right] \le \phi(x_0) + L_{0}\int_{0}^{t}\mathbb{E}\left[ \phi(\pi^{x_0}_{s \wedge \tau_n }) \right]ds,~\forall t>0,~n\in[0,1).
	\end{align*}
	By Gronwall's lemma, we get 
	\begin{align}\label{eq:bound1}
		\mathbb{E}\left[ \phi(\pi^{x_0}_{t \wedge \tau_n }) \right] \le e^{L_0t}\phi(x_0), ~\forall t>0.
	\end{align}
	Assume the contrary that $0$ is attainable, that is $\mathbb{P}(\tau_0 < \infty) >0$. Then, for a large $T_0>0$, $\mathbb{P}(\tau_0 \le T_0) > 0$. Taking $t = T_0$ in \eqref{eq:bound1}, we get 
	\begin{align}\label{eq:bound2}
		\mathbb{E}\left[ \mathds{1}_{(\tau_0\le T_0)}  \phi(\pi^{x_0}_{\tau_0})\right] \le e^{L_0 T_0}\phi(x_0).
	\end{align}
	As $\phi(\pi^{x_0}_{\tau_0}) = \phi(0) = +\infty$ on a positive measure subset $\{ \tau_0 \le T_0\}$, the left-hand side of \eqref{eq:bound2} is infinite while the right-hand side is finite, which is a contradiction. Therefore,  $\mathbb{P}(\tau_0 < \infty) = 0$. 
\end{proof}
\begin{proof}[Proof of Proposition~\ref{prop:filter_feller}]
	This proposition follows by slightly modifying the proof of Theorem~\ref{thm:lipctsx} in Sect.~\ref{sec:cts}. Consider the case when $t=0$, from the definitions of {\it Feller process}, we need only to show that $|\mathbb{E}^{0,x}[f(\pi^x_s)]-\mathbb{E}^{0,y}[f(\pi_s^y)]|$ tends to 0 as $|x-y|\rightarrow 0$ for any given $s>0$ and bounded continuous function $f$. The proof here is even simpler since we can use the uniform norm of the function $f$ to establish appropriate estimates as what we have done in the proof of Theorem~\ref{thm:lipctsx}. 
\end{proof}

\begin{proof}[Proof of Proposition~\ref{prop:back_primal}] For given $(t,x,v) \in \mathcal{U}_T\times\mathbb{R}_+$, by Theorem~\ref{thm:1} and Theorem~\ref{thm:duality_main}, there is no duality gap, that is, 
	\begin{align*}
		J(t,x,v) = \inf_{y\in\mathbb{R}_+} \left(\hat{L}(t,x,y) +vy\right) = \frac{1}{\kappa} v^{\kappa} \Lambda^M(t,x)^{1-\kappa}.
	\end{align*}
	Now we derive the optimal controls for the primal problem. Given $\hat{\Lambda}$ is smooth and the terminal condition $\hat{\Lambda}(T,x) = 1$, it is standard to verify that the process 
	\begin{align*}
		\mathcal{M}_s: =\left(  e^{-r(s-t)}Z_s^{\nu^*}\right)^{\beta} \hat{\Lambda}(s,\pi_s) + \int_{t}^{s} \left(  e^{-r(u-t)}Z_u^{\nu^*}\right)^{\beta} du,~s\in[t,T],
	\end{align*}
	is a $(\mathbb{P}^{t,x},\mathbb{H})$-martingale, for $Z^{\nu^*}$ defined as in \eqref{eq:z_2}. Thus, $\mathbb{E}^{t,x}[\mathcal{M}_T|\mathcal{H}_s] = \mathcal{M}_s$. Following Theoerem~\ref{thm:duality_main}, the candidate optimal wealth process $V^*$ is defined as 
	\begin{align*}
		{V}^*_s :=  \mathbb{E}^{t,x}\Big[ e^{-r(T-s)}Z^{\nu^*}_T\mathcal{V}^* + \int_{s}^{T}  e^{-r(u-s)}Z^{\nu^*}_u{c}^*_u du \Big\vert \mathcal{H}_s \Big],~s \in[t,T],
	\end{align*}
	and $\mathcal{V}^*= {v}(  e^{-r(T-t)}Z_T^{\nu^*})^{\beta-1}/{\hat{\Lambda}(t,x)}$ and $c^*_s = {v}(  e^{-r(s-t)}Z_s^{\nu^*})^{\beta-1}/{\hat{\Lambda}(t,x)}$. Hence,
	\begin{align*}
		{V}^*_s 
		= & \frac{ e^{r(s-t)}v}{\hat{\Lambda}(t,x) Z_s^{\nu^*}}\mathbb{E}^{t,x}\Big[  \mathcal{M}_T - \int_{t}^{s} \Big(  e^{-r(u-t)}Z_u^{\nu^*}\Big)^{\beta} du  \Big \vert \mathcal{H}_s \Big]\\
		=& v (e^{-r(s-t)}Z_s^{t,{\nu}^*})^{\beta-1} \frac{\hat{\Lambda}(s,\pi_s)}{\hat{\Lambda}(t,x)}.
	\end{align*}
	Applying It\^{o}'s lemma to the discounted wealth process $e^{-r(s-t)}V_s^*$ together with the fact that $\hat{\Lambda}$ solves PDE \eqref{eq:HJB} (Theoerem~\ref{thm:main}) and the form of $\nu^*$ in \eqref{eq:optimal_nu_2},  we have
	\begin{align*}
		de^{-r(s-t)}V_s^*
		=& \frac{ve^{-\beta r(s-t)}(Z_s^{\nu^*})^{\beta-1}}{\hat{\Lambda}(t,x) }\Big( -1ds + \hat{\Lambda}\big[ (1-\beta)\hat{\theta} + \frac{\partial_{x} \hat{\Lambda}}{\hat{\Lambda}}\big](s,\pi_s) d\widetilde{W}^{\mathbb{Q}}_s \Big),
	\end{align*}
	where $d\widetilde{W}^{\mathbb{Q}}_s: =\hat{\theta}(\pi_s)ds+d\widetilde{W}_s$. Therefore, 
	\begin{align*}
		d ~	e^{-r(s-t)} V^*_s  + e^{-r(s-t)}c^*_sds =  e^{-r(s-t)}{{\varpi}^*_s}\sigma  d\widetilde{W}^{\mathbb{Q}}_s,
	\end{align*}	
	for $\varpi^*_s =\hat{\varpi}(s,\pi_s,V_s^*)= {{V}^*_s}\Big[(1-\beta)\hat{\theta}(\pi_s) + {\partial_{x} \hat{\Lambda}(s,\pi_s)}/{\hat{\Lambda}(s,\pi_s)} \Big]/{\sigma}$, $s\in[t,T]$, which is the candidate investment strategy. 	In addition, the candidate optimal consumption process ${c}^*$ can be written in a feedback form as 
	\begin{align*}
		{c}^*_s&=\frac{v}{\hat{\Lambda}(t,x)}\Big(  e^{-r(s-t)}Z_s^{\nu^*}\Big)^{\beta-1} = v\Big(  e^{-r(s-t)}Z_s^{\nu^*}\Big)^{\beta-1} \frac{\hat{\Lambda}(s,\pi_s)}{\hat{\Lambda}(t,x)}\frac{1}{\hat{\Lambda}(s,\pi_s)}\\
		&= \hat{c}(s,\pi_s, {V}^*_s), \text{ with } \hat{c}(s,x,v): = \frac{v}{\hat{\Lambda}(s,x)},~ s\in[t,T].
	\end{align*}
	From the boundedness and continuity of $\hat{\Lambda}$ (Proposition~\ref{prop:Lambda_bound} and Theorem~\ref{thm:main}), together with the formula of ${V}^*$, we conclude that $({\varpi}^*,{c}^*) \in \mathcal{A}(t,x,v)$ and thus is the optimal control pair for the primal problem \ref{def:primal_con_pro}. 
\end{proof}

\begin{proof}[Proof of Lemma~\ref{lem:2}]
	By definitions in \eqref{eq:filter_measure_nu}, $\overline{\mu}(\cdot)$ and $\overline{\sigma}(\cdot)$ are continuously differentiable functions of the state variable on the bounded interval $[0,1]$. Hence, the Lipschitz and growth conditions \eqref{cond:1} are fulfilled. Recall the definition of $\xi$ in \eqref{def:xi}, the second part of
	\eqref{cond:3} holds by observing that	$0\le {xf_1(z)}/({xf_1(z)+(1-x)f_2(z)}) \le 1$, for all $x\in [0,1]$, while the equalities hold at $x=0$ and $x=1$ respectively. Next, for the first part of \eqref{cond:3}. We have
	\begin{align*}
		\left|\xi\left(x, z\right)-\xi\left(y, z\right)\right|
		\le  \max\left(\frac{f_1(z)}{f_2(z)},\frac{f_2(z)}{f_1(z)}\right)|x-y|. 
	\end{align*}
	Set $\rho(z) = \max\left({f_1(z)}/{f_2(z)},{f_2(z)}/{f_1(z)}\right)$, it is clear that $\int_{\mathcal{Z}}\rho^2(z)f_1(z)dz <\infty$ under \ref{cond:BLR}. 
\end{proof}
\begin{proof}[Proof of Proposition~\ref{prop:pigrowth}]
	We prove for $k=2$, and the assertions for $k\in[0,2]$ follow from the H\"{o}lder inequality. By corollary 2.12 in \cite{kunita2004stochastic} and $\nu \in \Theta^{t,M}$, there exists a constant $C$ uniformly such that:
	\begin{align*}
		\widetilde{\mathbb{E}}^{t,x,\nu}\Big[ \sup\limits_{t \le u \le s}|\pi_u^{t,x,\nu}|^2 \Big] &\le C \bigg\{  |x|^2 +  	\widetilde{\mathbb{E}}^{t,x,\nu} \Big[ \int_{t}^{s} | \overline{\mu}(\pi_u^{t,x,\nu})|^2 du \Big]  \\
		&+  	\widetilde{\mathbb{E}}^{t,x,\nu} \Big[ \ \int_{t}^{s} | \overline{\sigma}(\pi_u^{t,x,\nu})|^2 du \Big] \\
		&+ 	\widetilde{\mathbb{E}}^{t,x,\nu}\Big[ \int_{t}^{s} \int_{\mathcal{Z}} \lambda| \xi(\pi_{u-}^{t,x,\nu},z) - \pi_{u-}^{t,x,\nu}|^2 \hat{f}(\pi_{u-}^{t,x,\nu},z) dz du \Big]\bigg\}.
	\end{align*}
	We use the linear growth condition of $\overline{\mu}$, $\overline{\sigma}$, $\xi$ given by Lemma \ref{lem:2} to obtain that       
	\begin{align*}
		\widetilde{\mathbb{E}}^{t,x,\nu}\Big[ \sup\limits_{t \le u \le s}|\pi_u^{t,x,\nu}|^2 \Big] \le& C^{\prime} \Big\{  |x|^2 +  	\widetilde{\mathbb{E}}^{t,x,\nu} \big[ \int_{t}^{s}  1+ \sup\limits_{t \le u \le s}|(\pi_u^{t,x,\nu})|^2 du \big]  \Big\}.
	\end{align*}   
	\eqref{eq:prop5eq1} follows by Gronwall's inequality, $C_{\pi} = \max(C,C')$. A similar argument applies to \eqref{eq:prop5eq2}. 
\end{proof}

\begin{proof}[Proof of Proposition~\ref{prop:visco_def_eq}]
	The proof is motivated by Proposition 1.3 in \cite{barles2008second} and Proposition 5.4 in \cite{seydel2010general}. We extend these arguments to the current L\'{e}vy-type jump setting. We begin with an analysis of the operator $H_g[t,x,\nu]$. Let $\nu^{(k)}\in[-M,M]$ with $\nu^{(k)}\rightarrow\nu$,  $x_k\in[0,1]$ with $x_k \rightarrow x $ and $g_k$ is a sequence of uniformly bounded functions, such that $|g_k| \le \phi$ with bounded $\phi \in C(\overline{\mathcal{U}}_T)$ and $\lim\limits_{k \rightarrow \infty} g_k = g$. The Dominated Convergence Theorem gives
	\begin{align}
		&\lim\limits_{k\rightarrow \infty} \int_{\mathcal{Z}} (g_k(t,\xi(x_k,z)) - g_k(t, x_k) ) e^{\beta \nu^{(k)}}\hat{f}(x_k,z) dz \nonumber\\
		&=  \int_{\mathcal{Z}} (g(t,\xi(x,z)) - g(t, x) )e^{\beta \nu}\hat{f}(x,z) dz, ~t\in [0,T], \label{prop5.11pf1}
	\end{align}
	in which we have used the continuity of $\xi$ and $\hat{f}$ in $x$.\\
	\indent Assume that $g$ is a viscosity subsolution according to Definition~\ref{def:viscosity2}. Let $(t_0,x_0)\in {\mathcal{U}}_T$, $\psi\in C^{1,2}(\overline{\mathcal{U}}_T)$ such that $v-\psi$ has a global maximum at $(t_0,x_0)$. For any given $\nu\in[-M,M]$, we have $H_g(t_0,x_0,\nu) \le H_{\psi}(t_0,x_0,\nu)$, and therefore 
	\begin{align*}
		(-{\partial_t}-\overline{\mu}(x_0)\partial_x-\frac{1}{2} \overline{\sigma}(x_0)^2\partial_{xx})\psi(t_0,x_0)-1 -\max\limits_{\nu\in [-M,M]}  H_{\psi}(t_0,x_0,\nu)\leq 0, 
	\end{align*}
	which implies that $g$ is also a viscosity subsolution according to Definition~\ref{def:viscosity1}. \\
	\indent Assume $g$ is a viscosity subsolution according to Definition~\ref{def:viscosity1}. Let $(t_0,x_0)\in {\mathcal{U}}_T$, $\psi\in C^{1,2}(\overline{\mathcal{U}}_T)$ such that $g-\psi$ has a global maximum at $(t_0,x_0)$. Consider the function for a sufficiently small $\epsilon_0\in(0,1)$, 
	$$ \varphi(t,x): = \mathds{1}_{x\in[x_0-\epsilon_0,x_0+\epsilon_0]} \psi(t,x) + \mathds{1}_{x\in[0,1]\backslash[x_0-\epsilon_0,x_0+\epsilon_0]} g(t,x).$$
	$\varphi$ is clearly a continuous function, we then construct a bounded sequence $\varphi_k\in C^{1,2}(\overline{\mathcal{U}}_T)$ such that $|\varphi_k|\le \psi$ with $\lim\limits_{k\rightarrow\infty}\varphi_k = \varphi $. By construction, $g\le \varphi_k$ with equality holds at $(t_0,x_0)$. From Definition~\ref{def:viscosity1}, we find that for all $k$,
	\begin{align*}
		(-{\partial_t}-\overline{\mu}(x_0)\partial_x-\frac{1}{2} \overline{\sigma}(x_0)^2\partial_{xx})\psi(t_0,x_0)-1 -\max\limits_{\nu\in [-M,M]}  H_{\varphi_k}(t_0,x_0,\nu) \leq 0.
	\end{align*}
	For each $k$, the max in l.f.s of above equation is attained by a $\nu^{(k)}$. Choose a subsequence such that $\nu^{(k)} \rightarrow \nu \in [-M,M]$ and use the limit in \eqref{prop5.11pf1}, we have
	\begin{align*}
		(-{\partial_t}-\overline{\mu}(x_0)\partial_x-\frac{1}{2} \overline{\sigma}(x_0)^2\partial_{xx})\psi(t_0,x_0)-1 -\max\limits_{\nu\in [-M,M]}  H_{\varphi}(t_0,x_0,\nu) \leq 0.
	\end{align*}
	Finally, by sending $\epsilon_0$ to 0, we complete the proof.  
\end{proof}
\begin{proof}[Proof of Lemma~\ref{lem:3}] For $t\in[0,T)$, $x,~y\in[0,1]$, we have
	\begin{align*}
		&\Big |\mathcal{I}_{\beta}[\Lambda^{M}](t,x) - \mathcal{I}_{\beta}[\Lambda^{M}](t,y)\Big|\le (1-\beta)\lambda \int_{\mathcal{Z}} \Big|{\Lambda^{M}(t,\xi(x,z))}^{\frac{1}{1-\beta}}-{\Lambda^{M}(t,x)^{\frac{1}{1-\beta}}}\Big|\\
		&\quad \Big| \Lambda^{M}(t,x)^{\frac{\beta}{\beta-1}}\hat{f}(x,z) -\Lambda^{M}(t,y)^{\frac{\beta}{\beta-1}}\hat{f}(y,z) \Big| dz \\
		&\quad+ (1-\beta)\lambda   \int_{\mathcal{Z}} \Lambda^{M}(t,y)^{\frac{\beta}{\beta-1}}\hat{f}(y,z) \left\{ \Big| \Lambda^{M}(t,\xi(x,z))^{\frac{1}{1-\beta}} -\Lambda^{M}(t,\xi(y,z))^{\frac{1}{1-\beta}}\Big| \right. \\
		&\quad+ \left.\Big| \Lambda^{M}(t,x)^{\frac{1}{1-\beta}} -\Lambda^{M}(t,y)^{\frac{1}{1-\beta}} \Big|\right\} dz.
	\end{align*}
	We get that the above terms are bounded above 
	through repeatedly using the Lipschitz continuity and boundedness of $\Lambda^{M}$, the properties of $\xi$ as given in Lemma~\ref{lem:2}, and \ref{cond:BLR}. 
\end{proof}	

\section{Supplementary notations and conditions}\label{App:E}
\begin{defn}\label{A.def:func_index}
	For any filtration $\mathbb{G}$, we denote the predictable $\sigma$-field on the product space $[0, T] \times \Omega$ by $\overline{\mathcal{P}}(\mathbb{G})$. Let $\mathcal{B}(\mathcal{Z})$ be the Borel $\sigma$-algebra on $\mathcal{Z}$. Any mapping $H:[0, T] \times \Omega \rightarrow \mathcal{Z} $, which is $\overline{\mathcal{P}}(\mathbb{G}) \times \mathcal{B}(\mathcal{Z})$-measurable, is referred to as a $\mathbb{G}$-predictable process indexed by $\mathcal{Z} $.
\end{defn}
Letting
$
\mathcal{F}_{t}^{N}:=\sigma\{N((0, s] \times A): 0 \leq s \leq t, A \in \mathcal{B}(\mathcal{Z})\},
$
we denote by $\mathbb{F}^{N}:=\left(\mathcal{F}_{t}^{N}\right)_{0 \le t \le T}$ the filtration which is generated by the random measure $N(d t, d z)$.
\begin{defn}\label{A.def:dual_projec} Given any filtration $\mathbb{G}$ with $\mathbb{F}^{N} \subseteq \mathbb{G}$, the $\mathbb{G}$-dual predictable projection of $N$, denoted by $N^{\mathbb{P}, \mathbb{G}}(d s, d z)$, is the $\mathbb{G}$-predictable random measure, such that for any nonnegative $\mathbb{G}$-predictable process $\Phi$ indexed by $\mathcal{Z}$, 
	\begin{align*}
		\mathbb{E}[\int_{0}^{\infty} \int_{\mathcal{Z}} \Phi(s, z) N(d s, d z)]=\mathbb{E}[\int_{0}^{\infty} \int_{\mathcal{Z}} \Phi(s, z) N^{\mathbb{P}, \mathbb{G}}(d s, d z)].	
	\end{align*}
\end{defn}
\begin{assump}\label{A.ass:eta} 
	For each $ i \in \mathcal{S}$, we assume $\gamma$  is a L\'{e}vy kernel such that  $\gamma(i,dz)$  is a non-negative $\sigma$-finite measure on $(z,\mathcal{Z})$. In addition, there exists a constant $L_H > 0$ such that
	$$  \sup_{i \in \mathcal{S}} \int_{\mathcal{Z}} \gamma(i,dz) \le L_H< \infty.$$ 
	The functions $b_{1}(q,i)$, $\sigma_{1}(q)$, $\sigma_{2}(q)$ and $b_2(q,\cdot)$ are continuous in $q$.  In addition, we assume the following two conditions hold.
	\begin{itemize}
		\item[(i)  ] (Lipschitz conditions) For all $ i \in \mathcal{S}$, and $q_1,q_2\in \mathbb{R}$, there is a constant $C>0$, 
		\begin{align*}
			&\left|b_{1}(q_1, i)-b_{1}\left(q_2, i\right)\right|^2+\left|\sigma_{1}(q_1)-\sigma_{1}\left(q_2\right)\right|^2\\
			&+\left|\sigma_{2}(q_1)-\sigma_{2}\left(q_2\right)\right|^2 + \int_{\mathcal{Z}} |b_2(q_1,z) -b_2(q_2,z)|^2 \gamma(i,dz)\leq C\left|q_1-q_2\right|^2, 
		\end{align*}
		\item[(ii)  ] (growth conditions) For all $ i \in \mathcal{S}$, $q\in \mathbb{R}$, there is a constant $C>0$ such that
		$$
		\left|b_{1}(q, i)\right|^{2}+\left|\sigma_{1}(q)\right|^{2}+\left|\sigma_{2}(q)\right|^{2} +\int_{\mathcal{Z}} |b_2(q,z)|^2 \gamma(i,dz) \leq C\Big(1+|q|^{2}\Big).
		$$
	\end{itemize}
\end{assump}
The above conditions ensure strong existence and uniqueness for solutions to \eqref{eq:eta} (with a standard localizing argument, such as those used in the proof of theorem 2.1 in \cite{xi2017feller}, and theorem 3.6 in \cite{xi2018martingale}  and employing theorem 5.2 in \cite{komatsu1973markov}.)
\begin{prop}\label{prop:martg_represen}
	Under Assumption~\ref{A.ass:eta}, let $Y$ be any $(\mathbb{P},\mathbb{H})$-local martingale with $Y_0 = 0$. Then, there exists $\mathbb{H}$-predictable processes $\psi$, $\Psi$ and $\varphi$, such that 
	\begin{align*}
		&Y_t = \int_{0}^{t} \psi_u d \widetilde{W}_u  + \int_{0}^{t} \Psi_u \widetilde{B}_u + \int_{0}^{t}\int_{\mathbb{R}} \varphi(u,q) \overline{m}^{\pi}(du,dq), \quad 0\le t \le T, \nonumber\\
		&\text{ and }{\int_{0}^{T} \left(\psi_u^2 + \Psi_u^2\right) du} + \int_{0}^{T}\int_{\mathbb{R}} |{\varphi}(u,q)| \hat{\lambda}(\pi_{u-})\hat{\phi}_u(\pi_{u-},q)(du,dq) <+\infty, ~ \mathbb{P}\text{-a.s.}
	\end{align*} 
\end{prop}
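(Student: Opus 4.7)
}
The strategy is to invoke the Jacod--Yor equivalence between the predictable representation property (PRP) and extremality of the $(\mathbb{P},\mathbb{H})$-law of the innovations triplet $(\widetilde{W},\widetilde{B},m)$, as developed in Jacod--Shiryaev, \emph{Limit Theorems for Stochastic Processes}, Chapter~III (see also Jacod, 1979). The argument proceeds in three steps. First, I would confirm the $(\mathbb{P},\mathbb{H})$-semimartingale characteristics of the triplet: by the classical innovations theorem from nonlinear filtering, $\widetilde{W}$ and $\widetilde{B}$ defined in \eqref{eq:W} and \eqref{eq:B} are $(\mathbb{P},\mathbb{H})$-Brownian motions, with mutual orthogonality inherited from the independence of $W$ and $B$ under $\mathbb{P}$ through optional projection onto $\mathbb{H}$; simultaneously, the $\mathbb{H}$-dual predictable projection of $m$ is identified as $\hat{\lambda}(\pi_{t-})\hat{\phi}_t(\pi_{t-},dq)dt$, so that $\overline{m}^{\pi}$ is the compensated $(\mathbb{P},\mathbb{H})$-martingale measure. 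These are the local characteristics the PRP will be formulated against.

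The second step establishes weak uniqueness: I would show that the $(\mathbb{P},\mathbb{H})$-law of $(\widetilde{W},\widetilde{B},m)$ on the canonical path space is the unique solution of the martingale problem associated with the above characteristics. Given the triplet, the filter $\pi$ is recovered as the (unique) solution of the Kushner--Stratonovich system \eqref{filter:KS}, whose uniqueness is part of the standing assumption; conversely, the characteristics are measurable functionals of $\pi$, hence of the triplet itself. Consequently any two laws realizing these characteristics must coincide, which is exactly the extremality hypothesis required by Jacod--Yor.

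Third, by that theorem, extremality is equivalent to PRP for the filtration $\mathbb{H}$. The pairwise orthogonality of the three innovation components decouples the representation: $\psi$ and $\Psi$ are extracted as the predictable covariation densities of $Y$ with $\widetilde{W}$ and $\widetilde{B}$ via the It\^o isometry, while $\varphi$ arises as the $\mathbb{H}$-predictable kernel realizing the purely discontinuous part of $Y$ as an integral against $\overline{m}^{\pi}$ (cf.\ Jacod--Shiryaev, Chapter~II, Theorem~1.33). The stated integrability requirements for $\psi$, $\Psi$, $\varphi$ follow from the local martingale property of each piece, after localizing along a common reducing sequence of $\mathbb{H}$-stopping times.

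The main obstacle is the weak uniqueness in the second step, complicated by the L\'evy-type nature of the jump compensator $\hat{\lambda}(\pi_{t-})\hat{\phi}_t(\pi_{t-},dq)dt$, which depends on the filter $\pi$ itself---the same subtlety flagged in the introduction. Here, however, it is reduced to the already-available strong uniqueness of \eqref{filter:KS}, which in turn rests on Assumption~\ref{A.ass:eta} together with the strong existence/uniqueness of \eqref{eq:eta}. Once this uniqueness is in hand, the remaining steps are an application of standard filtering-theoretic and martingale-problem machinery.
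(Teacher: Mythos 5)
Your plan is correct and coincides with the paper's approach: the paper's proof of this proposition is a one-line deferral to the martingale-representation arguments of Kurtz--Ocone, Ceci, Ceci--Colaneri and Callegaro--Ceci--Ferrari, which follow exactly the innovations-plus-extremality (Jacod--Yor) route you reconstruct, anchored on uniqueness of the Kushner--Stratonovich equation. Nothing further is written out in the paper, so your sketch is, if anything, more detailed than the source.
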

The proof follows by modifying those in the seminal papers of \cite{kurtz1988unique,ceci2006risk,ceci2012nonlinear,callegaro2020optimal}.

\end{document}